\documentclass[american,aps,pra,reprint,superscriptaddress,showpacs]{revtex4-1}
\usepackage[T1]{fontenc}
\usepackage[latin9]{inputenc}
\usepackage{color}
\usepackage{amsthm}
\usepackage{amsmath, amssymb}
\usepackage{color}
\usepackage{bm}
\usepackage{bbm}
\usepackage{empheq}
\usepackage{pxfonts,txfonts}
\usepackage{tgpagella}
\usepackage[T1]{fontenc}
\usepackage{graphicx}
\usepackage{txfonts}
\usepackage{braket}
\usepackage{comment}
\usepackage[unicode=true,pdfusetitle,
 bookmarks=true,bookmarksnumbered=false,bookmarksopen=false,
 breaklinks=false,pdfborder={0 0 0},backref=false,colorlinks=true]
 {hyperref}
\hypersetup{
 linkcolor=urlcolor,citecolor=urlcolor,urlcolor=urlcolor}

\makeatletter
\@ifundefined{textcolor}{}
{%
 \definecolor{BLACK}{gray}{0}
 \definecolor{WHITE}{gray}{1}
 \definecolor{RED}{rgb}{1,0,0}
 \definecolor{GREEN}{rgb}{0,1,0}
 \definecolor{BLUE}{rgb}{0,0,1}
 \definecolor{CYAN}{cmyk}{1,0,0,0}
 \definecolor{MAGENTA}{cmyk}{0,1,0,0}
 \definecolor{YELLOW}{cmyk}{0,0,1,0}
 }
\theoremstyle{plain}
\newtheorem{thm}{\protect\theoremname}
\theoremstyle{plain}

\ifx\proof\undefined
\newenvironment{proof}[1][\protect\proofname]{\par
\normalfont\topsep6\p@\@plus6\p@\relax
\trivlist
\itemindent\parindent
\item[\hskip\labelsep
\scshape
#1]\ignorespaces
}{%
\endtrivlist\@endpefalse
}
\providecommand{\proofname}{Proof}
\fi
\theoremstyle{plain}
\newtheorem{lem}[thm]{\protect\lemmaname}
\theoremstyle{plain}
\newtheorem{cor}[thm]{\protect\corrolaryname}
\theoremstyle{plain}


\definecolor{urlcolor}{rgb}{0,0,0.7}
\newcommand{\id}{\mathbbm{1}}
\newcommand{\CC}{\mathbbm{C}}


\renewcommand{\vec}[1]{\text{\boldmath$#1$}}

\makeatother

\providecommand{\lemmaname}{Lemma}
\providecommand{\propositionname}{Proposition}
\providecommand{\theoremname}{Theorem}
\providecommand{\corrolaryname}{Corollary}
\providecommand{\conjecturename}{Conjecture}

\begin{document}

\title{Linear maps as sufficient criteria for entanglement depth and compatibility in many-body systems}

\author{Maciej Lewenstein}
\email{maciej.lewenstein@icfo.es}
\affiliation{ICFO -- Institut de Ci\`encies Fot\`oniques, The Barcelona Institute
of Science and Technology, 08860 Castelldefels, Spain}
\affiliation{ICREA -- Instituci\'o Catalana de Recerca i Estudis Avan\c cats, Lluis
Companys 23, 08010 Barcelona, Spain}

\author{Guillem M\"uller-Rigat}
\affiliation{ICFO -- Institut de Ci\`encies Fot\`oniques, The Barcelona Institute
of Science and Technology, 08860 Castelldefels, Spain}

\author{Jordi Tura}
\affiliation{Instituut-Lorentz, Universiteit Leiden, P.O. Box 9506, 2300 RA Leiden, The Netherlands}

\author{Anna Sanpera}
\affiliation{F\'isica Te\'orica: Informaci\'o i Fen\`omens Qu\`antics.  Departament de F\'isica, Universitat Aut\`onoma de Barcelona, E-08193 Bellaterra, Spain}
\affiliation{ICREA -- Instituci\'o Catalana de Recerca i Estudis Avan\c cats, Lluis
Companys 23, 08010 Barcelona, Spain}

\begin{abstract}
Physical transformations are described by linear maps that are completely positive and trace preserving (CPTP). However, maps that are positive (P) but not completely positive (CP) are instrumental to derive separability/entanglement criteria. Moreover, the properties of such maps can be linked to entanglement properties of the states they detect. Here, we extend the results presented in [Phys. Rev A {\bf 93}, 042335 (2016)], where sufficient separability criteria for bipartite systems were derived. In particular, we analyze the entanglement depth of an $N$-qubit system by proposing linear maps that, when applied to any state, result in a bi-separable state for the $1:(N-1)$ partitions, i.e., $(N-1)$-entanglement depth. Furthermore, we derive criteria to detect arbitrary  $(N-n)$-entanglement depth tailored to states in close vicinity of the completely depolarized state (the normalized identity matrix). We also provide separability (or $1$- entanglement depth) conditions in the symmetric sector, including for diagonal states. Finally, we suggest how similar map techniques can be used to derive sufficient conditions for a set of expectation values to be compatible with separable states or local-hidden-variable theories. We dedicate this paper to the memory of the late Andrzej Kossakowski, our spiritual  and intellectual mentor in the field of linear maps.

\end{abstract}

\pacs{}

\maketitle

\section{Introduction}

Linear maps are the fundamental tools to describe transformations in quantum systems. In fact, any physical transformation corresponds to a completely positive trace preserving operation (CPTP), that is, a linear map, $\Lambda$,  that is positive (P), $\Lambda(\rho)\geq 0$,  on all quantum states ($\rho\geq 0$),  and whose extension to a larger tensorial space leaves all states in the composite space positive as well $(\id_{A} \otimes \Lambda_{B})\rho_{AB}\geq 0$. Such description applies for closed and open quantum systems irrespectively from the fact that the quantum system of interest is single or multipartite. However, maps that are positive (P) but not completely positive (CP) are instrumental to derive separability/entanglement criteria, that is, the absence/presence of shared quantum correlations between the constituents of multipartite systems. Moreover,  the properties of such maps can be linked to entanglement properties of the states they detect. Entanglement, the cornerstone feature of quantum systems, has no classical analogue, and is a fundamental resource for quantum information tasks. 

Entanglement has been mostly studied and addressed and exploited in the bipartite scenario. But even for this simplest case, checking if a given composite system is separable (entangled) is an NP-hard problem\cite{Gurvits.JCSS.2003}. Despite the fact that in low dimensions one can use semi-definite programming to numerically determine the presence of entanglement \cite{Doherty+2.PRL.2002}, such method becomes unfeasible when the dimensions of the composite Hilbert space grows, when the number of involved parties increases, or when the aim,  rather than checking the entanglement of a specific quantum state, is to is characterize the entanglement properties of families of states.  For all these cases the common approaches include using separability criteria or constructing entanglement witnesses.

 To date, most of the known separability criteria are {\it necessary} but not {\it sufficient}, 
starting with the
prominent positive partial transpose criterion, proposed by
Peres \cite{Peres.PRL.1996}. In fact transposition $T$ is a
paradigmatic example of a {\it positive map}, i.e., a map that maps
states onto states. Transposition is not, however, completely
positive, i.e., its extensions by identity map to larger tensor
spaces do not correspond to positive maps. Therefore, this condition becomes a necessary criterion of separability. Such map however, has proven to be sufficient condition of separability only for
two-qubits and one qubit-qutrit by Horodeckis
\cite{3Horodecki.PLA.1996}. For all bipartite systems (i.e. Alice-Bob),  it has also been proven in (cf.\cite{4Horodeckis.RMP.2009}), that any positive map acting exclusively
on a space of Alice's (Bob's) operators, transforms a separable state
into a positive definite state. Conversely, a state is separable
if it remains positive definite under the action of all positive maps
on Alice's (Bob's) side. Well known positive maps that have been
proposed and applied to determine separability are the
celebrated reduction map \cite{M+PHorodecki.PRA.1999,Cerf+2.PRA.1999},
the Breuer-Hall map
\cite{Breuer.PRL.2006,Hall.JPA.2006}, or the Choi map \cite{Choi.LAA.1975}.

Here, we follow the approach used to derive sufficient criteria of separability for bipartite systems (necessary criteria of entanglement)  previously introduced by some of us in \cite{LAC16}.  Now,  our first aim is to derive sufficient criteria to determine the  entanglement depth of an ensemble of $N$-qubit systems. The entanglement depth $n$ of an ensemble of $N$-parties expresses the fact that the quantum state of the ensemble necessary involves $n\leq N$ genuine multipartite entanglement.  This characterization of multipartite entanglement becomes extremely relevant for experiments involving a large number of parties as it happens e.g., with cold gases. It is therefore crucial for a meaningful comprehension of experimental results to have at our disposal operational entanglement classification that can be used in experiments.

The main point of our approach is to use families of
(positive) maps, $\Lambda_{\vec p}$,  acting on global states of an $N$-qubit (or qudit) multiparty system. These maps have the property that when applied to a state (a state of a certain
entanglement class) they produce  a "weaker" entanglement state (a state of
another certain entanglement class; separable, bi-separable, $n$-entanglement depth...). In particular, first we derive maps whose action in the states of $(\mathbb C^2)^{\otimes N}$ is to make them bi-separable with respect the one partition, e.g., 1:(N-1). In other words, maps all physical states onto the set of states with entanglement depth $n=N-1$. The inverted map, $\Lambda^{-1}_{\vec p}$, (if it exists) becomes then a sufficient criteria for bi-separability.  Such a method can be extended to derive maps that ensure (are sufficient) to reveal which type of entanglement  (entanglement depth)  the multipartite system has.  Further we exploit such idea to derive sufficient conditions for compatibility with separable states or local-hidden variable theories given a set of accessible expectation values.

The paper is organized as follows. In Section \ref{Preliminaries}
we present our notation, definitions, and preliminary facts. In
Section \ref{General theorem} we explain the general theorem. Section \ref{Main results} we present our main results for $N$-qubits with several maps leading to sufficient $N-1$ entanglement depth criteria and beyond. We discuss as well the particular case symmetric subspaces. In Section \ref{Other proposals}, we focus on deriving sufficient conditions for compatibility with separable states or local-hidden variable theories for a set of accessible expectation values. We close the paper with the conclusions and outlook in Section~\ref{Conclusions and outlook}.

\section{\label{Preliminaries} Preliminaries}

We consider  $N$-qubit system (Alice, Bob, Charlie, ... ) corresponding to a composite Hilbert space ${\cal H}={\cal H}_A\otimes{\cal H}_B\otimes\ldots= \CC^2\otimes \CC^2\otimes\ldots \CC^2=(\CC^2)^{\otimes N}$ of dimension $2^{N}$. The space of all bounded
operators acting on it is denoted by ${\cal B}({\cal H}_A\otimes{\cal H}_B\otimes\ldots)$,
while the set of states, that is, operators from ${\cal B}({\cal H}_A\otimes{\cal H}_B\otimes\ldots)$
obeying $\rho\ge 0$, $\rho =\rho^\dag$
and ${\rm Tr}(\rho)=1$, as $R$. Note that $R$ is convex and compact. However, the
normalization condition of the states and, consequently, the
compactness of $R$ will only be assumed if necessary. Partially transposed operators are denoted $\rho^T_{i}$, for $i=1,\cdots, N$. In the following, we will
consider also various classes of states such as:

\begin{itemize}

\item Separable states $\Sigma$, also denoted as $\Sigma_1$, i.e.,
states that admit the following decomposition
\begin{equation}
\sigma=\sum_{k=1}^K p_k |\Psi_k\rangle_1 \langle \Psi_k|_1,
\label{sigma}
\end{equation}
where $|\Psi_k\rangle_1= |e_{k_1}\rangle \otimes |e_{k_2}\rangle\otimes\ldots=\bigotimes_{i=1}^{N} |e_{k_{i}}\rangle$ are $N$ tensor products \cite{Peres.S.1995}.

\item States with entanglement depth  $n\le N$, $\Sigma_n$, i.e., the states that admit
the decomposition
\begin{equation}
\sigma=\sum_{k=1}^K p_k |\Psi_k\rangle_n \langle \Psi_k|_n,
\label{sigma}
\end{equation}
where $|\Psi_k\rangle_n= |e_{k_{i\cdots j}}\rangle \otimes |e_{k_{i',\cdots,j'}}\rangle \otimes \ldots$ 
are product vectors that include genuinely entangled states of  $i,\cdots, j\le n$ parties maximally, and at least one of the pure states must include an $n$-partite genuinely entangled factor.
\cite{TuraAloy1,TuraAloy2,TuraAloy3}.

\item PPT-operators $W_{PPT}$, are operators that have all possible partial transposes positive, i.e.,  $W^{T_i}\ge$ for all $i=1,\cdots, N$ but also  $W^{T_{N/i,..j}}\ge 0$, etc. in particular, we will talk about PPT-states, $R_{PPT}$, i.e.,
$\rho\in R$ and $\rho\in W_{PPT}$, i.e., $\rho^{T_X}\ge 0$, for any partition  of the set of qubits into $X$ and the rest $N/X$.

\end{itemize}

Further, we consider various classes of linear maps \[
\Lambda\colon{\cal B}({\cal H}_A\otimes{\cal H}_B\otimes\ldots) \to {\cal
B}({\cal H}_A\otimes{\cal H}_B\otimes\ldots)\] that preserve hermiticity. In particular,
we pay particular attention to family of maps parametrized by a multi-index $\bf p$  that map a certain subset $S\subset {\cal B}({\cal H}_A\otimes{\cal H}_B\otimes\ldots)$ to another
certain subset of $S'\subset {\cal B}({\cal H}_A\otimes{\cal
H}_B\otimes\ldots)$

\begin{equation}
\Lambda_{{\bf p}}\colon S \to S'
\end{equation}

such that for every $w \in S$, then $\Lambda_{\bf p}(w)\in S'$.  We assume that those maps are invertible for almost  all values of ${\bf p}$.

\section{\label{General theorem} General theorem}

First, we review the starting point of \cite{LAC16}, which is build upon the simple observation: 

\begin{thm}
\label{thm:general} Let  $S, S'$ be convex and compact subsets of
${\cal B}({\cal H}_A\otimes{\cal H}_B\otimes\ldots)$, and let $\Lambda_{\bf p}\colon S\to S'$
be a family of maps, invertible for almost all $\bf p$. By 
${\cal P}_{SS'}$  we denote the subset of the parameters set ${\bf p}$. The maps have the property that for  every $w\in S$, 
$\Lambda_{\bf p}(w)\in S'$ provided ${\bf p}\in {\cal P}_{SS'}$. Then if $\Lambda^{-1}_{\bf p}(\sigma) \in S$ $\Rightarrow$
$\sigma \in S'$.

\end{thm}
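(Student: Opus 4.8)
The plan is to recognize that the asserted implication follows immediately from the defining property of the family $\Lambda_{\bf p}$ together with its invertibility, so that the proof reduces to a single substitution. First I would fix a parameter ${\bf p}\in{\cal P}_{SS'}$ lying in the full-measure set on which $\Lambda_{\bf p}$ is invertible, so that $\Lambda^{-1}_{\bf p}$ is well defined and satisfies $\Lambda_{\bf p}\circ\Lambda^{-1}_{\bf p}=\id$ on its domain.

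Next, assuming the hypothesis $\Lambda^{-1}_{\bf p}(\sigma)\in S$, I would set $w:=\Lambda^{-1}_{\bf p}(\sigma)$, which by assumption belongs to $S$. The defining property of the family, namely that $\Lambda_{\bf p}(S)\subseteq S'$ whenever ${\bf p}\in{\cal P}_{SS'}$, then yields $\Lambda_{\bf p}(w)\in S'$. Applying invertibility to rewrite $\Lambda_{\bf p}(w)=\Lambda_{\bf p}\bigl(\Lambda^{-1}_{\bf p}(\sigma)\bigr)=\sigma$ gives $\sigma\in S'$, which is exactly the claim.

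I expect the only genuinely delicate point to be the bookkeeping around invertibility: one must restrict to those parameters ${\bf p}$ for which $\Lambda^{-1}_{\bf p}$ exists as a genuine (two-sided, or at least left) inverse on the relevant operator subspace, and one must ensure $\sigma$ lies in the domain of $\Lambda^{-1}_{\bf p}$ so that the composition $\Lambda_{\bf p}\circ\Lambda^{-1}_{\bf p}$ really acts as the identity on it. Apart from this, the convexity and compactness of $S$ and $S'$ play no role in establishing the implication itself; they are not needed here and enter only later, when one wishes to turn the abstract criterion into a practical, optimizable test (e.g. by characterizing the extreme points of $S'$ or the boundary of the admissible parameter region ${\cal P}_{SS'}$). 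The value of the statement lies in reading it as a sufficient criterion: verifying the membership $\Lambda^{-1}_{\bf p}(\sigma)\in S$ certifies that $\sigma\in S'$, with $S$ typically the full set of states $R$ and $S'$ a weaker-entanglement class such as the $(N-1)$-entanglement-depth states.
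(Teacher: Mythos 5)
Your argument is correct and is exactly the paper's proof, which consists of the single observation $\Lambda_{\bf p}\bigl[\Lambda^{-1}_{\bf p}(\sigma)\bigr]=\sigma$ combined with the defining property $\Lambda_{\bf p}(S)\subseteq S'$; you have merely spelled out the substitution $w:=\Lambda^{-1}_{\bf p}(\sigma)$ explicitly. Your remarks on the bookkeeping of invertibility and the inessential role of convexity/compactness are accurate but do not change the route.
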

\begin{proof}
Note that $\Lambda_{\bf p}\left[\Lambda^{-1}_{\bf p}(\sigma)\right]=\sigma$.
\end{proof}

In other words, the theorem provides sufficient criteria for $S'$. In practice, the choice of $\Lambda_\mathbf{p}$ is such that:

\begin{itemize}
	\item[i)]  we can easily
	check that $\Lambda_{\bf p}^{-1}(S')\subset S$,
	\item[ii)] we can prove the
	assumption that $\Lambda_{\bf p}(S)\subset S'$.
\end{itemize}

 The difficulty on the derivation of such criteria is hidden in demonstrating condition ii) and in technical difficulties of inverting $\Lambda_{\bf p}$.

In this work, we generalize \cite{LAC16} to the multipartite case by initially considering $S$ the set of all semidefinite-positive elements of $\mathcal{B}((\mathbb{C}^2)^{\otimes N})$. Consequently, the sufficient condition will be written as $\Lambda^{-1}_\mathbf{p}(\sigma)\geq 0 \Longrightarrow \sigma\in S' $. The concrete sets $S'$ that we will examine are related to the entanglement properties of $N$-party quantum states and are to be specified in the next section for each example.  Our approach bears similitudes with the works of Barnum and Gurvits \cite{Gurvits+Barnum.PRA.2002} and Szarek {\it et al.} \cite{Szarek.PRA.2005,Aubrun+Szarek.PRA.2006}

\section{\label{Main results} Main results}

Here we present the main results on the construction of new sufficient criteria for $N$-qubit systems from the inversion of linear maps. We will consider families of such transformations inspired by the reduction and the Breuer-Hall maps. 

\subsection{Reduction-inspired maps}

To start with, we consider the simple reduction-type family of maps defined as $\Lambda_\alpha(\rho) = \mathrm{Tr}(\rho)\id +\alpha\rho $.

\subsubsection{Sufficient criteria for $N-1$ - entanglement depth}

Now we fix $S'\subset S$ the set of bi-separable states in the partitions $1:(N-1)$, (one qubit subsystem versus the others $N-1$). In other words, $S'$ is the set of  states with entanglement depth $(N-1)$, so the set does not contain genuine $N$-qubit entanglement. 
The task is to find the range of $\alpha$ in which for all states $\rho\geq 0$ we have that $\Lambda_\alpha(\rho)$ is bi-separable for any partition $1:(N-1)$. Once this is done, $\Lambda_{\alpha}^{-1}(\rho)\geq 0$ would be a sufficient $(N-1)$-entanglement depth criterion.  Bi-separability of $\Lambda_{\alpha}(\rho)$ for $N$ qubits follows from the following known result.

\begin{thm}
\label{thm:barnum} Compare \cite{Vidal.PRA.1999,Gurvits+Barnum.PRA.2002}. Let $\Lambda_\alpha(\rho)=
{\rm Tr}(\rho)\id
+ \alpha \rho $ be the family of maps, and
$-1\le\alpha \le 2$. Then $\rho\ge 0$ $\Rightarrow$ $\Lambda_\alpha(\rho)=:\sigma \in \Sigma_{N-1}$,
 i.e. $\sigma$ is bi-separable in any $1:(N-1)$ partition (i.e. has entanglement depth $(N-1)$)
\end{thm}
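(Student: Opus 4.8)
The plan is to exploit the fact that the singled-out party is a single qubit, so that every pure state has Schmidt rank at most two across the $1:(N-1)$ cut; this collapses the apparent $2\times 2^{N-1}$ problem to a genuine two-qubit one. First I would use homogeneity of the separability cone to normalize $\tr(\rho)=1$, so the claim becomes: $\id+\alpha\rho$ is separable across the chosen cut $A:B$ (with $A$ the distinguished qubit and $B$ the remaining $N-1$ qubits, $\dim\mathcal{H}_B=2^{N-1}$) whenever $-1\le\alpha\le2$. Writing the spectral decomposition $\rho=\sum_k\lambda_k|\psi_k\rangle\langle\psi_k|$ with $\lambda_k\ge 0$ and $\sum_k\lambda_k=1$, and distributing the identity as $\id=\sum_k\lambda_k\id$, gives
\begin{equation}
\id+\alpha\rho=\sum_k\lambda_k\bigl(\id+\alpha|\psi_k\rangle\langle\psi_k|\bigr).
\end{equation}
Since the set of unnormalized separable operators is a convex cone, it then suffices to prove that each summand $\id+\alpha|\psi\rangle\langle\psi|$ is separable across $A:B$ for a single pure state.

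For a fixed pure state $|\psi\rangle\in\mathcal{H}_A\otimes\mathcal{H}_B$, I would invoke its Schmidt decomposition $|\psi\rangle=\sqrt{\mu_0}\,|0\rangle_A|u_0\rangle_B+\sqrt{\mu_1}\,|1\rangle_A|u_1\rangle_B$, which has at most two terms because $\dim\mathcal{H}_A=2$. Hence the support of $|\psi\rangle\langle\psi|$ lies inside $\mathcal{H}_A\otimes V$, where $V=\mathrm{span}\{|u_0\rangle_B,|u_1\rangle_B\}$ is at most two-dimensional. Splitting $\id_B=P_V+P_{V^\perp}$, I write
\begin{equation}
\id+\alpha|\psi\rangle\langle\psi|=\bigl(\id_A\otimes P_V+\alpha|\psi\rangle\langle\psi|\bigr)+\id_A\otimes P_{V^\perp}.
\end{equation}
The second term is a product of positive operators, hence manifestly separable; the first is supported on $\mathcal{H}_A\otimes V\cong\mathbb{C}^2\otimes\mathbb{C}^2$ and equals the identity of that block plus $\alpha|\psi\rangle\langle\psi|$.

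It remains to decide separability of $\id+\alpha|\psi\rangle\langle\psi|$ on $\mathbb{C}^2\otimes\mathbb{C}^2$. Here I would compute the partial transpose on $A$: the eigenvalues of $(|\psi\rangle\langle\psi|)^{T_A}$ are $\mu_0,\mu_1,\pm\sqrt{\mu_0\mu_1}$, so its most negative eigenvalue is $-\sqrt{\mu_0\mu_1}\ge-\tfrac12$, with equality for the maximally entangled state $\mu_0=\mu_1=\tfrac12$. Consequently $\id+\alpha|\psi\rangle\langle\psi|$ and its partial transpose are both positive semidefinite precisely for $-1\le\alpha\le2$, the upper threshold being fixed by the maximally entangled qubit pair and the lower one by ordinary positivity. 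Invoking the Horodecki equivalence of PPT and separability on $\mathbb{C}^2\otimes\mathbb{C}^2$ \cite{3Horodecki.PLA.1996} yields separability of this block; reassembling the two pieces and the convex sum over $k$ completes the argument. Since nothing distinguished the chosen qubit, the identical reasoning applies to every $1:(N-1)$ cut.

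I expect the only real subtlety — and the reason the admissible range is independent of $N$ — to be the Schmidt-rank-two observation, which is what reduces the $2\times 2^{N-1}$ problem to a two-qubit one where the clean PPT threshold $\alpha\le 2$ is available; the two-qubit content is exactly what is reflected in \cite{Vidal.PRA.1999,Gurvits+Barnum.PRA.2002}. Distributing the identity over the spectral decomposition is the step that upgrades the pure-state statement to arbitrary $\rho$, and the worst-case maximally entangled pair shows the range $[-1,2]$ is tight, so no larger $\alpha$ can be admitted.
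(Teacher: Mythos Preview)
Your proposal is correct and follows essentially the same route as the paper's proof: reduce to pure states by convexity, use that the singled-out qubit forces Schmidt rank at most two across the $1:(N-1)$ cut, restrict to the resulting $2\times 2$ block, and invoke the Horodecki PPT criterion there to obtain the range $-1\le\alpha\le 2$. Your write-up is simply more explicit than the paper's --- in particular the identity distribution $\id=\sum_k\lambda_k\id$ for the pure-state reduction and the split $\id_B=P_V+P_{V^\perp}$ isolating the $2\times 2$ block are spelled out, whereas the paper handles these implicitly --- but there is no substantive difference in strategy.
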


\begin{proof}
The proof follows \cite{LAC16}. It is enough to prove the theorem for pure states, $\rho=|\Psi\rangle\langle
\Psi|$. For $N$ qubits, $|\Psi\rangle$ has maximally Schmidt rank 2, in the partition Alice:All Others (any other $1:(N-1)$). Without loosing generality, we can assume that $|\Psi\rangle
=\lambda_0 |0\rangle\otimes |0\rangle_{N-1} + \lambda_1|1\rangle\otimes
|1\rangle_{N-1}$. It is then enough to check the positiveness and separability
of $\Lambda(|\Psi\rangle\langle \Psi|)$ on a $2\otimes 2$ space
spanned by $|0\rangle$, $|1\rangle$, $|0\rangle_{N-1}$, $|1\rangle_{N-1}$ in both Alice's and All Others'
spaces. PPT provides then necessary and sufficient condition \cite{3Horodecki.PLA.1996}, and we easily get that indeed $-1\le
\alpha\le 2$. The condition $\alpha \ge -1$ actually follows
already from a simpler requirement of positiveness of
$\Lambda_{\alpha}(\rho)$, as it appears in the definition of the reduction
map \cite{M+PHorodecki.PRA.1999,Cerf+2.PRA.1999}.
\end{proof}

The inverse of the reduction map leads to the following condition: 

\begin{cor}
\label{cor:barnum1} {\bf (Sufficient $N-1$-entanglement depth 
criterion 1)} If
$\Lambda^{-1}_\alpha(\sigma):= [\sigma -{\rm
Tr}(\sigma)\id/(2^N+\alpha)]/\alpha \ge 0$, then $\sigma\in \Sigma_{N-1}$,
i.e. $\sigma$ is bi-separable in  any partition $1:(N-1)$.
\end{cor}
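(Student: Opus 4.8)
The plan is to derive Corollary \ref{cor:barnum1} directly from Theorems \ref{thm:general} and \ref{thm:barnum} by explicitly inverting the map $\Lambda_\alpha$. Theorem \ref{thm:barnum} establishes exactly condition ii) of the general scheme: for $-1\le\alpha\le 2$ we have $\Lambda_\alpha(S)\subset S'$, where $S=\{\rho\geq 0\}$ and $S'=\Sigma_{N-1}$. By Theorem \ref{thm:general}, it then suffices to show that whenever $\Lambda^{-1}_\alpha(\sigma)\geq 0$ (i.e.\ lies in $S$), the state $\sigma$ belongs to $S'=\Sigma_{N-1}$. So the only real work is to produce the closed form of $\Lambda^{-1}_\alpha$ claimed in the statement.

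First I would invert the relation $\sigma=\Lambda_\alpha(\rho)={\rm Tr}(\rho)\id+\alpha\rho$ for $\rho$. The key observation is that taking the trace of both sides gives a scalar relation that pins down ${\rm Tr}(\rho)$ in terms of ${\rm Tr}(\sigma)$. Since $\id$ acts on $(\CC^2)^{\otimes N}$, we have ${\rm Tr}(\id)=2^N$, so
\begin{equation}
{\rm Tr}(\sigma)=2^N\,{\rm Tr}(\rho)+\alpha\,{\rm Tr}(\rho)=(2^N+\alpha)\,{\rm Tr}(\rho),
\end{equation}
whence ${\rm Tr}(\rho)={\rm Tr}(\sigma)/(2^N+\alpha)$. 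Substituting this back into $\sigma={\rm Tr}(\rho)\id+\alpha\rho$ and solving for $\rho$ yields
\begin{equation}
\rho=\Lambda^{-1}_\alpha(\sigma)=\frac{1}{\alpha}\Bigl[\sigma-\frac{{\rm Tr}(\sigma)}{2^N+\alpha}\,\id\Bigr],
\end{equation}
which is precisely the expression in the corollary. This inversion is valid for $\alpha\neq 0$ and $\alpha\neq -2^N$; within the admissible window $-1\le\alpha\le 2$ the second obstruction never occurs, and $\alpha=0$ is the trivial (non-invertible) case excluded by the ``almost all $\bf p$'' proviso of Theorem \ref{thm:general}.

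Having the explicit inverse, the conclusion is immediate: the hypothesis $\Lambda^{-1}_\alpha(\sigma)\geq 0$ says $\rho:=\Lambda^{-1}_\alpha(\sigma)\in S$, and applying $\Lambda_\alpha$ back gives $\Lambda_\alpha(\rho)=\sigma$, which by Theorem \ref{thm:barnum} lies in $\Sigma_{N-1}$. Thus $\sigma$ is bi-separable in every $1:(N-1)$ partition, as claimed. The only subtlety worth flagging is the range of $\alpha$ for which the criterion is meaningful and the sign conventions: for $\alpha>0$ the inequality $\Lambda^{-1}_\alpha(\sigma)\geq 0$ reads $\sigma\geq {\rm Tr}(\sigma)\,\id/(2^N+\alpha)$, whereas for $-1\le\alpha<0$ dividing by $\alpha$ flips the inequality, so one should state the criterion with this dependence in mind. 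I do not expect any genuine obstacle here, since the heavy lifting (the containment $\Lambda_\alpha(S)\subset\Sigma_{N-1}$) is already supplied by Theorem \ref{thm:barnum}; the corollary is essentially a one-line algebraic inversion followed by invocation of the general theorem.
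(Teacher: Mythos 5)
Your proposal is correct and is exactly the argument the paper intends: the corollary is stated without a separate proof precisely because it is the one-line combination of Theorem~\ref{thm:general} with Theorem~\ref{thm:barnum}, plus the trace-based algebraic inversion $\mathrm{Tr}(\rho)=\mathrm{Tr}(\sigma)/(2^N+\alpha)$ that you carry out explicitly. Your added remarks on the excluded values $\alpha=0,-2^N$ and on the implicit restriction to $-1\le\alpha\le 2$ are sound and do not change the substance.
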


It is interesting to check whether simple criteria are strong enough to detect some states outside the "separable ball" around identity for $N$ qubits\cite{Szarek.PRA.2005,Aubrun+Szarek.PRA.2006,Gurvits+Barnum.PRA.2002} Of course, we do not expect that all states from the separable ball are detected. The reason for this weakness is that Barnum and Gurvits, and Szarek have estimated the radius of the separable ball using a stronger result, namely: if the operator norm of a Hermitian matrix $X$ is bounded by 1, then $\id+ X$ is separable. In contrast, our simple criteria depend on the spectrum, not on the norm, and are thereby independent. Nonetheless, we will see in the next sections that with other choices of $\Lambda_{\bf p}$'s, we could derive further stronger sufficient $(N-1)$-depth criteria.

\subsubsection{Sufficient criteria for arbitrary $N-n$-entanglement depth }

The previous result can be generalized to detect arbitrary $N-n$-entanglement depth around the maximally mixed state with the next Theorem, which is analogous to Theorem \ref{thm:barnum}: 

\begin{thm}
\label{thm:clarisse1} Let $\Lambda_\alpha(\rho)=
{\rm Tr}(\rho)\id + \alpha \rho $ be the family of maps, and
$-1\le\alpha \le 2$. Then $\rho\ge 0$ $\Rightarrow$
$\Lambda_\alpha(\rho)=\sigma\in \Sigma_{N-n}$, i.e. $\sigma$ has the
entanglement depth $N-n$. 

Similarly, if $\sigma \ge 0$, and $\Lambda_\alpha^{-1}(\sigma)=\rho\ge 0$, then $\sigma$ has the
entanglement depth $N-n$. 
\end{thm}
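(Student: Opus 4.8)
The plan is to follow the proof of Theorem \ref{thm:barnum} but to compress the $N$-qubit problem into a bipartite effective one of controlled size. Since $\Lambda_\alpha$ is linear, $\Sigma_{N-n}$ is convex, and every $\rho\ge0$ is a convex combination of rank-one projectors, it is enough to treat pure inputs $\rho=|\Psi\rangle\langle\Psi|$, for which $\Lambda_\alpha(|\Psi\rangle\langle\Psi|)=\id+\alpha|\Psi\rangle\langle\Psi|$ because $\mathrm{Tr}(|\Psi\rangle\langle\Psi|)=1$. The second statement is then immediate from Theorem \ref{thm:general}: if $\Lambda_\alpha^{-1}(\sigma)=\rho\ge0$, then $\sigma=\Lambda_\alpha(\rho)\in\Sigma_{N-n}$. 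So the whole content reduces to proving the forward implication for pure states and the full window $-1\le\alpha\le2$.

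Next I would fix a bipartition of the $N$ qubits into a block $B$ of size $N-n$ and its complement $A$ of size $n$, and Schmidt-decompose $|\Psi\rangle=\sum_{j=1}^{r}\lambda_j\,|a_j\rangle_A\otimes|b_j\rangle_B$ with $\sum_j\lambda_j^2=1$ and Schmidt rank $r\le 2^{\min(n,N-n)}$. On the orthogonal complement of $\mathrm{span}\{|a_j\rangle\}\otimes\mathrm{span}\{|b_j\rangle\}$ the operator acts as $\id$, which is $A{:}B$-separable; hence separability of $\id+\alpha|\Psi\rangle\langle\Psi|$ across the cut reduces to separability of the effective operator $M_\alpha:=\id_{r^2}+\alpha|\Psi\rangle\langle\Psi|$ on $\CC^{r}\otimes\CC^{r}$. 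Crucially, any $A{:}B$-separable decomposition $\sum_k p_k\,\rho_k^A\otimes\rho_k^B$ refines into pure local factors supported on $\le n$ and $\le N-n$ qubits respectively, so separability across this single cut already certifies $(N-n)$-producibility, i.e.\ $\sigma\in\Sigma_{N-n}$, whenever $n\le N/2$.

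The core step is then to show that $M_\alpha$ is separable on the whole interval $-1\le\alpha\le2$, uniformly in $r$ and in the Schmidt spectrum. I would first run the PPT test: writing $|\Psi\rangle\langle\Psi|^{T_B}=\sum_{j,k}\lambda_j\lambda_k\,|jk\rangle\langle kj|$, its eigenvalues are $\lambda_j^2$ and $\pm\lambda_j\lambda_k$ for $j\ne k$, so $M_\alpha^{T_B}\ge0$ precisely when $1-\alpha\max_{j\ne k}\lambda_j\lambda_k\ge0$; since $\lambda_j\lambda_k\le(\lambda_j^2+\lambda_k^2)/2\le 1/2$, the partial transpose stays positive for all $\alpha\le2$, while positivity of $M_\alpha$ itself forces $\alpha\ge-1$. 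To upgrade PPT to genuine separability I would exhibit an explicit decomposition by phase averaging: mixing the product vectors $|a(\theta)\rangle\otimes|b(\theta)\rangle$ with $|a(\theta)\rangle=\sum_j\sqrt{\lambda_j}\,e^{i\theta_j}|j\rangle$ and $|b(\theta)\rangle=\sum_j\sqrt{\lambda_j}\,e^{-i\theta_j}|j\rangle$ over uniform phases reproduces exactly the coherences $\sum_{j\ne k}\lambda_j\lambda_k|jj\rangle\langle kk|$ of $|\Psi\rangle\langle\Psi|$ together with a separable diagonal remainder, the deficit being absorbable into $\id$ precisely up to $\alpha=2$. The sharpest benchmark is the worst case of two equal Schmidt weights, which collapses $M_\alpha$ to the $2\otimes2$ isotropic state at its separability boundary $\alpha=2$, recovering exactly the constant of Theorem \ref{thm:barnum}.

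The main obstacle is twofold. The lesser difficulty is that for $r\ge3$ the PPT condition is no longer sufficient for separability, so the phase-averaging construction (or an equivalent isotropic reduction) must be carried through for every admissible Schmidt spectrum, not merely for the maximally entangled one. The more serious difficulty is the regime $n>N/2$, where a single bipartite cut cannot enforce block sizes $\le N-n$ and one must certify genuine multipartite $(N-n)$-producibility across all compatible partitions at once. Here I expect the GHZ state to be extremal: a direct computation shows that $\id+\alpha|\mathrm{GHZ}_N\rangle\langle\mathrm{GHZ}_N|$ is fully separable exactly for $\alpha\le2$, its white-noise threshold being $p=1/(1+2^{N-1})$, which saturates the stated range and strongly suggests that $\alpha=2$ remains the sharp endpoint for every $n$. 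Converting this saturation into a bound valid for all pure inputs is the crux the argument must still overcome.
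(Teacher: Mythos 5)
Your treatment of the regime $0\le\alpha\le 2$ is, in substance, the paper's own argument: reduce to pure states by convexity, Schmidt-decompose across an $n:(N-n)$ cut (rank $\le 2^{\min(n,N-n)}$; the paper writes out $n=2$ with rank $4$), and exhibit a phase-averaged ensemble of product vectors whose average reproduces the coherences $\sum_{j\ne k}\lambda_j\lambda_k|jj\rangle\langle kk|$ so that the remainder is diagonal in the product basis and positive exactly when $1-\alpha\lambda_0\lambda_1\ge 0$, i.e.\ $\alpha\le 2$. Your conjugate-phase vectors $|a(\theta)\rangle\otimes|b(\theta)\rangle$ are the same construction the paper phrases with equal phases followed by a partial transpose (transposition conjugates the phases on one side, and $\sigma^{T_A}$ of a separable state is separable), with the same parameter choice $A^2=\alpha\lambda_0^2$, $a_i^2=\lambda_i/\lambda_0$. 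Your remark that a single cut certifies $(N-n)$-producibility when $n\le N/2$ also matches the paper's tacit reading of ``depth $N-n$''; the $n>N/2$ regime you flag as the ``more serious difficulty'' is real as a matter of the literal statement, but the paper's proof does not address it either (it certifies bi-separability across one $n:(N-n)$ bipartition and declares the generalization straightforward), so this is a scope issue of the theorem rather than a step the intended argument contains. Your GHZ check, $\alpha=2\Leftrightarrow p=1/(1+2^{N-1})$, is a nice consistency test but plays no role in the paper.

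The genuine gap is the interval $-1\le\alpha<0$, which is part of the claimed range and which your proposal never actually proves. For negative $\alpha$ you verify only PPT of $M_\alpha$ and the necessity $\alpha\ge-1$; as you yourself note, PPT is not sufficient for $r\ge 3$, and your phase-averaging identity $\id+\alpha|\Psi\rangle\langle\Psi|=\alpha\,\sigma_{\rm avg}+D$ is useless here because the separable part enters with negative weight. The paper closes this with a separate explicit construction at the endpoint: for $\alpha=-1$, on the effective space,
\begin{equation*}
\id-|\Psi\rangle\langle\Psi|=\sum_{i<j}\sigma_{ij},\qquad
\sigma_{ij}=|\varphi_{ij}\rangle\langle\varphi_{ij}|+|ij\rangle\langle ij|+|ji\rangle\langle ji|,\qquad
|\varphi_{ij}\rangle=\lambda_j|ii\rangle-\lambda_i|jj\rangle,
\end{equation*}
where each $\sigma_{ij}$ is supported on a two-qubit subspace $\mathrm{span}\{|i\rangle,|j\rangle\}^{\otimes 2}$ and is PPT, hence separable by the two-qubit Horodecki criterion; the intermediate values then follow from convexity, since $\Lambda_{-t}=t\,\Lambda_{-1}+(1-t)\,\Lambda_{0}$ pointwise for $0\le t\le 1$. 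You need this endpoint decomposition (or an equivalent separable model at $\alpha=-1$) to cover the negative half of the interval; everything else in your proposal either coincides with the paper or concerns territory the paper itself leaves open.
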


\begin{proof}
We discuss in detail the cases $n=2$; the generalizations to
arbitrary $n$ are  straightforward. As before, it  is
enough to prove for pure states, $\rho=|\Psi\rangle\langle \Psi|$.
For $n=2$, $|\Psi\rangle$ has maximally Schmidt rank 4, and,
without loosing generality, we can assume that $|\Psi\rangle
=\lambda_0 |0\rangle\otimes |0\rangle + \lambda_1|1\rangle\otimes
|1\rangle + \lambda_2|2\rangle\otimes |2\rangle +  \lambda_3|3\rangle\otimes |3\rangle$, with $\lambda_0\ge \lambda_1\ge \lambda_3\ge\lambda_4$, and $\sum_{i=0}^3 \lambda_i^2=1$. It is enough to
check then positiveness and separability of
$\Lambda(|\Psi\rangle\langle \Psi|)$ on a $4\otimes 4$ space
spanned by $|0\rangle$, $|1\rangle$, $|2\rangle$  and $|3\rangle$ in both
Alice's and Bob's spaces. This is achieved by observing first that
if $|\Psi\rangle$ is a normalized  state, then obviously the positivity of
$\Lambda_\alpha(|\Psi\rangle\langle \Psi|)$ requires $\alpha\ge
-1$. Direct inspection shows that for $\alpha=-1$, $\Lambda_\alpha(|\Psi\rangle\langle \Psi|)$  is a sum of four
positive matrices that effectively act on two-qubit Hilbert spaces and are given by
%
%
\begin{equation}
\sigma_{ij}=\ket{\varphi_{ij}}\!\bra{\varphi_{ij}}+\ket{ij}\!\bra{ij}+\ket{ji}\!\bra{ji},
\end{equation}
where $\ket{\varphi_{ij}}=\lambda_j\ket{ii}-\lambda_i\ket{jj}$ with $i<j=0,1,2,3$.

These matrices are PPT and thus separable.

For $\alpha\ge 0$, we decompose $\Lambda_\alpha(|\Psi\rangle\langle \Psi|)
=\sigma^{T_A} + D$, where $\sigma^{T_A}$ is separable by construction,
and $D$ is positive diagonal in the computational product basis,
{\it ergo} separable by direct inspection. We consider a family of
product vectors
$|p(\phi,\psi)\rangle=A(1,ae^{i\phi},be^{i\psi},ce^{i\theta})^{\otimes 2}$ and
the separable state
\[\sigma= \int d\phi/2\pi \int d\psi/2\pi \int d\theta/2\pi|p(\phi,\psi)\rangle \langle
p(\phi,\psi)|.\] The parameters can be chosen indeed in such a way
that $\Lambda_\alpha(|\Psi\rangle\langle \Psi|) -\sigma^{T_A}$ is
diagonal. To this aim we set $A^2=\alpha\lambda_0^2$,
$a^2=\lambda_1/\lambda_0$, $b^2=\lambda_2/\lambda_0$, and
$c^2=\lambda_3/\lambda_0$. Checking the
explicit conditions that the matrix $D$ is positive implies that
$1-A^2a^2=1-\alpha\lambda_0\lambda_1\ge 0$, i.e. $\alpha\le 2$
since $\lambda_0$ and $\lambda_1$ are the highest and the second-highest Schmidt coefficients, respectively.
The case $-1<\alpha<0$ follows from convexity.

\end{proof}

\subsubsection{Sufficient separability (or 1- entanglement depth) criteria for symmetric states}

In this subsection, we will start by considering maps acting on the bosonic space of $N$ qubits, ${\cal S}(\CC^{2}\otimes\ldots\otimes \CC^{2})$ by setting $S$ the set of $N$-qubit symmetric states. The dimension of such subspace is $N-1$ and the basis elements read: 

\begin{equation}
    \left\{\ket{S_k^N} ={N\choose k}^{-1/2} \sum_{\pi\in \mathfrak{S}_{N}} \pi(\ket{0}^{\otimes N} \otimes \ket{1}^{\otimes {N-k}})\right\}_{k=0}^N \ ,
    \label{Definition_Dicke}
\end{equation}
where the sum is over the distinct permutations. In the context of the reduction map, we denote $\id_\mathcal{S}$ as the identity in the symmetric subspace (i.e. the subspace spanned by the vectors of Eq. \ref{Definition_Dicke}).

In this section. we start by deriving sufficient separability criteria by taking $S'$ the subset of separable symmetric states of two or three qubits. 

For $N=2$ we obtain:

\begin{thm}
\label{thm:symm} Let $\Lambda_\alpha(\rho_\mathcal{S})=
{\rm Tr}(\rho_\mathcal{S})\id_\mathcal{S}+ \alpha \rho_\mathcal{S} $ be the family of maps  acting on the states of two qubits in the symmetric space, and
$-3/4\le\alpha \le 1$. Then $\rho_\mathcal{S}\ge 0$ $\Rightarrow$
$\Lambda_\alpha(\rho_\mathcal{S})=\sigma_\mathcal{S}\in \Sigma_{1}$, i.e. $\sigma_\mathcal{S}$ is separable.

Similarly, if $\sigma_\mathcal{S} \ge 0$, and $\Lambda_\alpha^{-1}(\sigma_\mathcal{S})=\rho_\mathcal{S}\ge 0$, then $\sigma_\mathcal{S}$ has the
entanglement depth $1$. 
\end{thm}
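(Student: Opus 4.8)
The plan is to follow the template of Theorem~\ref{thm:barnum}, but keeping careful track of the fact that here $\id_\mathcal{S}$ denotes the projector $P_\mathcal{S}$ onto the three-dimensional symmetric subspace rather than the full $4\times4$ identity; this is precisely what will tighten the admissible range from $[-1,2]$ to $[-3/4,1]$. By convexity of the separable set and linearity of $\Lambda_\alpha$, it suffices to treat pure inputs $\rho_\mathcal{S}=|\Psi\rangle\langle\Psi|$ with $|\Psi\rangle$ symmetric. For two qubits every symmetric pure state admits a Takagi--Autonne (Schmidt) decomposition with identical local bases on both sides, so up to a symmetric local unitary $U\otimes U$ one may write $|\Psi\rangle=\lambda_0|00\rangle+\lambda_1|11\rangle$ with $\lambda_0^2+\lambda_1^2=1$. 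This reduction is legitimate because $U\otimes U$ commutes with $P_\mathcal{S}=\id_\mathcal{S}$ and preserves separability, so it intertwines with $\Lambda_\alpha$ and with the target set $\Sigma_1$.

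Next I would write $\Lambda_\alpha(|\Psi\rangle\langle\Psi|)=P_\mathcal{S}+\alpha|\Psi\rangle\langle\Psi|$ explicitly in the computational basis $\{|00\rangle,|01\rangle,|10\rangle,|11\rangle\}$. The matrix splits into the $\{01,10\}$ block $\tfrac12\begin{pmatrix}1&1\\1&1\end{pmatrix}$ (always positive, eigenvalues $0,1$) and the $\{00,11\}$ block $\begin{pmatrix}1+\alpha\lambda_0^2&\alpha\lambda_0\lambda_1\\ \alpha\lambda_0\lambda_1&1+\alpha\lambda_1^2\end{pmatrix}$, whose determinant equals $1+\alpha$. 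Hence positivity of the output alone only forces $\alpha\ge-1$, which --- as in the full-space case --- is the weaker requirement.

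The separability constraint is extracted via the PPT criterion, which is necessary and sufficient in $2\otimes2$~\cite{3Horodecki.PLA.1996}. Taking the partial transpose reorganizes the matrix into two $2\times2$ blocks: the $\{00,11\}$ block $\begin{pmatrix}1+\alpha\lambda_0^2&1/2\\ 1/2&1+\alpha\lambda_1^2\end{pmatrix}$ with determinant $\tfrac34+\alpha+\alpha^2\lambda_0^2\lambda_1^2$, and the $\{01,10\}$ block $\begin{pmatrix}1/2&\alpha\lambda_0\lambda_1\\ \alpha\lambda_0\lambda_1&1/2\end{pmatrix}$ with determinant $\tfrac14-\alpha^2\lambda_0^2\lambda_1^2$. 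I would then minimize each determinant over the Schmidt weights: the first is smallest at a product input ($\lambda_0\lambda_1=0$), giving the lower bound $\alpha\ge-3/4$; the second is smallest at the maximally entangled symmetric state ($\lambda_0^2=\lambda_1^2=1/2$), giving $\alpha^2\le1$, i.e.\ $\alpha\le1$. Intersecting these yields exactly $-3/4\le\alpha\le1$, which dominates the mere positivity bound.

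Finally, the converse statement is immediate from the general principle of Theorem~\ref{thm:general}: since $\Lambda_\alpha$ maps every positive symmetric operator into $\Sigma_1$ for $\alpha$ in the stated window, any $\sigma_\mathcal{S}$ with $\Lambda_\alpha^{-1}(\sigma_\mathcal{S})\ge0$ satisfies $\sigma_\mathcal{S}=\Lambda_\alpha(\Lambda_\alpha^{-1}(\sigma_\mathcal{S}))\in\Sigma_1$. The only genuinely delicate point I anticipate is bookkeeping the half-integer entries that $P_\mathcal{S}$ (rather than the full identity) injects into the $\{01,10\}$ sector; these are exactly what makes both PPT blocks nontrivial and produce the asymmetric interval, so the main obstacle is purely the correct assembly of $\id_\mathcal{S}$ in the computational basis together with the two determinant optimizations, rather than any conceptual difficulty.
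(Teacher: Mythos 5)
Your proposal is correct and follows exactly the route the paper sketches (reduce to a pure symmetric state in Takagi/Schmidt form, apply the map, and check positivity and PPT, which is necessary and sufficient in $2\otimes 2$); the paper simply omits the explicit matrix and determinant computations that you carry out, and your bounds $\alpha\ge -3/4$ from the $\{00,11\}$ block of the partial transpose and $\alpha\le 1$ from the $\{01,10\}$ block reproduce the stated interval correctly.
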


\begin{proof}

The proof is straightforward. We consider a pure state, $|\Psi\rangle =\lambda_0 |0\rangle\otimes
|0\rangle + \lambda_1|1\rangle\otimes |1\rangle$, apply the map to it and check the positivness of the result and its separability using the PPT criterion. 
\end{proof}

Similar, but a little weaker result can be obtained for the maps acting on the states of $N=3$ qubits in the symmetric space. In this subspace, PPT is necessary but also sufficient for separability \cite{Eckert2002}. 

\begin{thm}
\label{thm:symm1} Let $\Lambda_\alpha(\rho_\mathcal{S})=
{\rm Tr}(\rho_\mathcal{S})\id_\mathcal{S}+ \alpha \rho_\mathcal{S} $ be the family of maps  acting on the states of three qubits in the symmetric space, and
$-1/3=-\lambda_{\rm min}\le\alpha \le 2 \lambda_{\rm min}=2/3$, where $\lambda_{\rm min}>0$ is the minimal eigenvalue of $(\id_{\cal S})^{T_A}$. Then $\rho_\mathcal{S}\ge 0$ $\Rightarrow$
$\Lambda_\alpha(\rho_\mathcal{S})=\sigma_\mathcal{S}\in \Sigma_{1}$, i.e. $\sigma_\mathcal{S}$ is separable.

Similarly, if $\sigma_\mathcal{S} \ge 0$, and $\Lambda_\alpha^{-1}(\sigma_\mathcal{S})=\rho_\mathcal{S}\ge 0$, then $\sigma_\mathcal{S}$ has the
entanglement depth $1$. 
\end{thm}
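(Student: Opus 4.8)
The plan is to mirror the structure of Theorem \ref{thm:symm}, reducing everything to a single-parameter pure-state check. By convexity and linearity of $\Lambda_\alpha$, it suffices to verify the claim on extreme points of the symmetric state space, i.e.\ on pure symmetric states $\rho_\mathcal{S}=\ket{\Psi}\!\bra{\Psi}$ with $\ket{\Psi}$ in the $N=3$ symmetric subspace. The key structural fact I would exploit is that for three qubits the separability problem in the symmetric sector is fully solved: as the excerpt notes, in this subspace PPT is \emph{necessary and sufficient} for separability \cite{Eckert2002}. Hence the entire question of whether $\Lambda_\alpha(\rho_\mathcal{S})\in\Sigma_1$ collapses to checking that $\Lambda_\alpha(\ket{\Psi}\!\bra{\Psi})$ is (i) positive semidefinite and (ii) PPT, for every pure symmetric $\ket{\Psi}$.

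First I would parametrize a general pure symmetric three-qubit state and bring it to a canonical form. Using the Majorana/Schmidt-type normal form for symmetric states, together with the freedom to apply local unitaries of the form $U^{\otimes 3}$ (which preserve the symmetric subspace and commute appropriately with the structure of $\Lambda_\alpha$ since $\id_\mathcal{S}$ is $U^{\otimes 3}$-invariant), I would reduce $\ket{\Psi}$ to a minimal family of representatives depending on as few real parameters as possible. Then I would write out $\sigma_\mathcal{S}=\Lambda_\alpha(\ket{\Psi}\!\bra{\Psi})={\rm Tr}(\rho_\mathcal{S})\id_\mathcal{S}+\alpha\ket{\Psi}\!\bra{\Psi}$ as an explicit matrix in the Dicke basis $\{\ket{S_k^3}\}$.

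Next I would impose the two conditions. The lower bound $\alpha\ge -1/3$ should emerge already from positivity: since $\id_\mathcal{S}$ has its relevant spectral structure and the rank-one perturbation $\alpha\ket{\Psi}\!\bra{\Psi}$ can only create a negative eigenvalue in the direction of $\ket{\Psi}$, the worst case ties the threshold to $\lambda_{\rm min}$, the minimal eigenvalue of $(\id_\mathcal{S})^{T_A}$; here the theorem identifies $\lambda_{\rm min}=1/3$. The upper bound $\alpha\le 2\lambda_{\rm min}=2/3$ should come from the PPT requirement: computing the partial transpose $\sigma_\mathcal{S}^{T_A}$ and demanding $\sigma_\mathcal{S}^{T_A}\ge 0$ yields a constraint that is saturated at $\alpha=2/3$, again controlled by the extremal eigenvalue of $(\id_\mathcal{S})^{T_A}$ on the most dangerous pure state. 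The factor-of-two relation between the positivity threshold and the PPT threshold parallels exactly the $-1\le\alpha\le 2$ pattern of Theorem \ref{thm:barnum}, rescaled by $\lambda_{\rm min}$.

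The main obstacle I anticipate is the PPT computation: the partial transpose does \emph{not} preserve the symmetric subspace, so $\sigma_\mathcal{S}^{T_A}$ must be analyzed as an operator on the full $(\CC^2)^{\otimes 3}$ (or at least on the relevant $2\otimes 4$ cut), and one must track how the symmetric projector and the rank-one term $\ket{\Psi}\!\bra{\Psi}$ behave under $T_A$. Identifying the pure state $\ket{\Psi}$ that minimizes the smallest eigenvalue of $\sigma_\mathcal{S}^{T_A}$—and confirming that this minimizer is the same state whose corresponding eigenvector of $(\id_\mathcal{S})^{T_A}$ attains $\lambda_{\rm min}$—is the delicate step; this is precisely why the result is ``a little weaker'' than the $N=2$ case, where the $2\otimes 2$ PPT test is elementary. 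Once the extremal state is pinned down, establishing the two bounds and then invoking the general Theorem \ref{thm:general} (with $S$ the positive symmetric operators and $S'$ the separable symmetric states) to obtain the inverse criterion $\Lambda_\alpha^{-1}(\sigma_\mathcal{S})\ge 0\Rightarrow\sigma_\mathcal{S}\in\Sigma_1$ is routine.
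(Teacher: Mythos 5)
Your overall skeleton matches the paper's: reduce to pure symmetric states by convexity, invoke the fact that PPT is necessary and sufficient for separability in the three-qubit symmetric sector \cite{Eckert2002}, and control everything through the spectrum of $(\id_{\cal S})^{T_A}$. But there are two genuine problems. First, you attribute the lower bound $\alpha\ge -1/3$ to positivity of $\Lambda_\alpha(\rho_{\cal S})$. That cannot work: on the symmetric subspace $\id_{\cal S}$ acts as the identity, so positivity of $\id_{\cal S}+\alpha\ket{\Psi}\!\bra{\Psi}$ only requires $1+\alpha\ge 0$, i.e.\ $\alpha\ge -1$, which is exactly (and only) what the paper extracts from positivity. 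Both the $-1/3$ and the $2/3$ thresholds come from the PPT condition $(\id_{\cal S})^{T_A}+\alpha\,\rho^{T_A}\ge 0$: for $\alpha<0$ the dangerous direction is where $\rho^{T_A}$ attains its largest eigenvalue (at most $1$), giving $\lambda_{\min}-|\alpha|\ge 0$; for $\alpha>0$ it is where $\rho^{T_A}$ attains its most negative eigenvalue, giving $\lambda_{\min}-\alpha/2\ge 0$. Your analogy with the $-1\le\alpha\le 2$ split of Theorem \ref{thm:barnum} (positivity for the lower end, separability for the upper end) does not transfer here.

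Second, and more importantly, the step you yourself flag as delicate --- identifying, via a Majorana canonical form, the pure state that minimizes the smallest eigenvalue of $\sigma_{\cal S}^{T_A}$ --- is left unresolved in your plan, and it is precisely where the paper substitutes a much simpler device that you are missing: the universal result \cite{Swapan_neg} that the partial transpose of \emph{any} pure-state projector has eigenvalues in $[-1/2,1]$. Combined with Weyl's inequality $\lambda_{\min}(A+B)\ge\lambda_{\min}(A)+\lambda_{\min}(B)$ and the explicitly computed spectrum $(1/3,1/3,1/3,1/3,4/3,4/3)$ of $(\id_{\cal S})^{T_A}$ on $\CC^2\otimes{\cal S}(\CC^2\otimes\CC^2)$, this yields both bounds with no canonical form and no extremal-state search at all. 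This is also consistent with the theorem being only ``a little weaker'': the bound is not claimed to be tight, so no minimizer ever needs to be pinned down. Without this uniform spectral bound on $\rho^{T_A}$ (or an explicit completion of the extremal optimization you propose), your argument does not close.
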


\begin{proof}
Obviously, the positiveness of the map requires $\alpha\ge -1$. Now we observe that $(\id_{\cal S})^{T_A}$ is a positive definite operator acting on $\CC^2\otimes{\cal S}(\CC^2\otimes\CC^2)$, with minimal eigenvalue $\lambda_{\rm min}$. The partial transpose of any projector on a pure state has  eigenvalues between $[-1/2,1]$\cite{Swapan_neg}, so for positive $\alpha$, we must have $\lambda_{\rm min}-\alpha/2\ge 0$, and for negative $\alpha$, $\lambda_{\rm min}-|\alpha|\ge 0$. The eigenvalues of $(\id_{\cal S})^{T_A}$ can be obtained analytically and are: $(1/3,1/3,1/3, 1/3, 4/3,4/3)$, so that $\lambda_{\rm min}=1/3$. 
\end{proof}

The latter result can be generalized to $N$ qubits, where $\lambda_{\rm min}=1/N$. There, we cannot conclude PPT states are separable: starting from $N=4$, there exist PPT entangled symmetric states \cite{Tura-sym}. \\

\textbf{Diagonal states.--} However, for diagonal qubit symmetric states for arbitrary $N$,

\begin{equation}
    \rho_{DS} = \sum_{k=0}^N p_k \ket{S_k^N}\bra{S_k^N} \ ,
\end{equation}

PPT for the largest bipartitions $\left\lfloor {\frac{N}{2}} \right\rfloor:N-\left\lfloor {\frac{N}{2}} \right\rfloor$ is necessary and sufficient for separability \cite{Ruben_Sanpera}. The next result gives an efficient way to compute the partial transpose for such states:

\begin{lem}\label{Hankel} \cite{Yu, Ruben_Sanpera}
The state $\rho_{DS}$ has positive partial transpose with respect the bipartition $\left\lfloor {\frac{N}{2}} \right\rfloor:N-\left\lfloor {\frac{N}{2}} \right\rfloor$ iff

\begin{equation}
M_{l}(\{p_k\}_{k=0}^{N})= \left[
\begin{matrix}
p_{0+l}/{N\choose 0+l}&p_{1+l}/{N\choose 1+l} &  \dots& p_{n_l+l}/{N\choose n_l+l} \\
p_{1+l}/{N\choose 1+l}&p_{2+l}/{N\choose 2+l}& \dots  &p_{n_l+1+l}/{N\choose n_l+1+l} \\
\vdots & \vdots &  &  \vdots \\
p_{n_l+ l}/{N\choose n_l+l} & p_{n_l +1 + l}/{N\choose n_l+1+l}  & \cdots & p_{2n_l + l}/{N\choose 2n_l+l}
\end{matrix}\right] \geq 0,\\
\end{equation}
where $l\in\{0,1\}$ and $n_{l=0} = \left\lfloor {\frac{N}{2}} \right\rfloor, n_{l=1} = \left\lfloor {\frac{N-1}{2}} \right\rfloor $. 
\end{lem}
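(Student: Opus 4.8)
The plan is to reduce positivity of the partial transpose to that of a finite family of small matrices indexed by Hamming weights, and then to collapse that family to the two Hankel matrices $M_0,M_1$. First I would expand $\rho_{DS}$ in the computational basis: since $\ket{S_k^N}$ is the uniform superposition over all weight-$k$ bitstrings, $\langle\mathbf{x}|\rho_{DS}|\mathbf{y}\rangle=g_{|\mathbf{x}|}\,\delta_{|\mathbf{x}|,|\mathbf{y}|}$ with $g_k:=p_k/\binom{N}{k}$ and $|\cdot|$ the Hamming weight. Splitting each index into the first $m=\lfloor N/2\rfloor$ qubits $A$ and the remaining $N-m$ qubits $B$, and transposing on $A$, a one-line computation shows that the entry between $(\mathbf{a}_1,\mathbf{b}_1)$ and $(\mathbf{a}_2,\mathbf{b}_2)$ equals $g_{|\mathbf{a}_2|+|\mathbf{b}_1|}$ when $|\mathbf{a}_1|-|\mathbf{b}_1|=|\mathbf{a}_2|-|\mathbf{b}_2|$ and vanishes otherwise. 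The key structural point is that this value depends only on the four Hamming weights, so $\rho_{DS}^{T_A}$ is constant on every block of basis vectors sharing a fixed pair $(\alpha,\beta)=(|\mathbf{a}|,|\mathbf{b}|)$.

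This constancy makes the problem effectively finite-dimensional. Each weight-block is a rank-one all-ones rectangle, so $\rho_{DS}^{T_A}=\sum_{I,J}C_{IJ}\ket{u_I}\!\bra{u_J}$, where $I=(\alpha_1,\beta_1)$ and $J=(\alpha_2,\beta_2)$ label weight sectors, $\ket{u_I}$ is the (unnormalized) uniform vector on sector $I$, and $C_{IJ}=g_{\alpha_2+\beta_1}$ subject to the constraint above. Because the $\ket{u_I}$ are mutually orthogonal and $\rho_{DS}^{T_A}$ annihilates their orthogonal complement, positivity of $\rho_{DS}^{T_A}$ is equivalent to positivity of the compressed matrix $C$; passing to the orthonormal vectors $\ket{u_I}/\sqrt{\dim I}$ merely conjugates $C$ by a positive diagonal matrix and leaves semidefiniteness unchanged. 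I expect this compression to be the conceptual heart of the argument. The constraint $|\mathbf{a}_1|-|\mathbf{b}_1|=|\mathbf{a}_2|-|\mathbf{b}_2|$ conserves the integer $r:=\alpha-\beta$, so $C$ is block diagonal, $C=\bigoplus_r\tilde H^{(r)}$; parametrizing a block by $\beta$ with $\alpha=\beta+r$, the entry between $\beta_1$ and $\beta_2$ is $g_{\beta_1+\beta_2+r}$, i.e. $\tilde H^{(r)}$ is Hankel. Hence $\rho_{DS}$ is PPT for this cut iff every $\tilde H^{(r)}\succeq 0$.

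Finally I would collapse $\{\tilde H^{(r)}\}$ to the two stated matrices. Deleting the first row and column of $\tilde H^{(r-2)}$ and relabeling exhibits $\tilde H^{(r)}$ as a principal submatrix of $\tilde H^{(r-2)}$, so the blocks organize into an even-$r$ and an odd-$r$ chain under the principal-submatrix order. Tracking the admissible range of $\beta$ for the cut $\lfloor N/2\rfloor:N-\lfloor N/2\rfloor$ identifies the maximal element of the even chain as $\tilde H^{(0)}=M_0$ (size $n_0+1$) and of the odd chain as $M_1$ (size $n_1+1$), realized by $r=1$ for even $N$ and by the larger block $r=-1$ for odd $N$. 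Since a principal submatrix of a positive semidefinite matrix is again positive semidefinite, $M_0\succeq 0$ and $M_1\succeq 0$ force every $\tilde H^{(r)}\succeq 0$, while the converse is immediate because $M_0,M_1$ are themselves blocks; this yields the claimed equivalence. The main remaining obstacle is bookkeeping: verifying uniformly in $r$ — including the asymmetric odd-$N$ case where $A$ and $B$ have different sizes — that each block really embeds as a principal submatrix of $M_0$ or $M_1$.
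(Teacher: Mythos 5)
The paper does not prove this lemma at all --- it is quoted from the cited references [Yu; Quesada--Rana--Sanpera], so there is no in-paper proof to compare against. Your argument is a correct, self-contained reconstruction of the standard derivation used in those references: expand $\rho_{DS}$ in the computational basis so that entries depend only on Hamming weights, observe that $\rho_{DS}^{T_A}$ is constant on weight sectors and hence congruent (via the uniform sector vectors) to a small matrix $C$, note that $C$ is block-diagonal in the conserved difference $r=\alpha-\beta$ with Hankel blocks $g_{\beta_1+\beta_2+r}$, and reduce to the two maximal blocks $M_0,M_1$ by the principal-submatrix order. All the key steps check out: the congruence $D^{1/2}CD^{1/2}$ preserves semidefiniteness, the $r$-blocks are Hermitian Hankel matrices in $g_k=p_k/\binom{N}{k}$, and for the cut $\lfloor N/2\rfloor:N-\lfloor N/2\rfloor$ the maximal even and odd blocks have sizes $n_0+1$ and $n_1+1$ and coincide with $M_0$ and $M_1$ (with $r=-1$ giving the larger odd block when $N$ is odd, as you say). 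The only slip is in the chain step: deleting the first row and column of $\tilde H^{(r-2)}$ and relabeling produces a matrix one index longer than $\tilde H^{(r)}$ (the admissible range of $\beta$ shrinks at both ends as $|r|$ grows), so you must also drop the last row and column; this does not affect the conclusion, since $\tilde H^{(r)}$ is still a principal submatrix of $M_0$ or $M_1$ after the shift $\beta\mapsto\beta+\lfloor |r|/2\rfloor$, but the bookkeeping you flag as the "main remaining obstacle" is exactly where this off-by-one lives and should be stated carefully.
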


This allows us to formulate the basic result on diagonal $N$-qubit symmetric states: 

\begin{thm}
\label{thm:symm4} Let $\Lambda_\alpha(\rho_{DS})=\sigma_{DS}=
{\rm Tr}(\rho)\id_\mathcal{S} + \alpha \rho_{DS}$ be the family of maps  acting on diagonal qubit symmetric states. Let $\alpha$ be such that the two associated matrices $M'_0$ and $M'_1$ are positive semidefinite, 
where  $M'_0=M_0(\{1+\alpha p_k\}_{k=0}^{N})$, $M'_1= M_1(\{1+\alpha p_k\}_{k=0}^{N})$.

Similarly, if $\sigma_{DS} \ge 0$, and $\Lambda_\alpha^{-1}(\sigma_{DS})=\rho_{DS}\ge 0$, then $\sigma_{DS}$ is separable. 
\end{thm}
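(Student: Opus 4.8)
The plan is to exploit the fact that the reduction-type map $\Lambda_\alpha$ preserves the diagonal symmetric structure, which reduces the separability question to the Hankel-matrix criterion of Lemma~\ref{Hankel}. First I would compute the image $\sigma_{DS}=\Lambda_\alpha(\rho_{DS})$ explicitly. Using that the symmetric identity decomposes as $\id_\mathcal{S}=\sum_{k=0}^{N}\ket{S_k^N}\!\bra{S_k^N}$ and that $\mathrm{Tr}(\rho_{DS})=\sum_k p_k=1$ for a normalized state, one immediately gets
\begin{equation}
\sigma_{DS}=\sum_{k=0}^{N}(1+\alpha p_k)\,\ket{S_k^N}\!\bra{S_k^N},
\end{equation}
so that $\sigma_{DS}$ is again a diagonal symmetric state, now with (unnormalized) coefficients $q_k=1+\alpha p_k$. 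The whole problem thus stays inside the family of diagonal symmetric states to which Lemma~\ref{Hankel} applies.

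Second, I would invoke Lemma~\ref{Hankel} directly on $\sigma_{DS}$. The Hankel matrices associated with the coefficients $\{q_k\}=\{1+\alpha p_k\}$ are exactly $M'_0=M_0(\{1+\alpha p_k\})$ and $M'_1=M_1(\{1+\alpha p_k\})$, and since positive semidefiniteness is insensitive to the overall normalization of the $q_k$, the lemma asserts that $\sigma_{DS}$ has positive partial transpose across the balanced bipartition $\lfloor N/2\rfloor:(N-\lfloor N/2\rfloor)$ iff $M'_0\ge 0$ and $M'_1\ge 0$. Combining this with the fact that PPT across the largest bipartition is necessary and sufficient for separability of diagonal symmetric states~\cite{Ruben_Sanpera}, the hypothesis $M'_0,M'_1\ge 0$ yields separability of $\sigma_{DS}$, which establishes the forward claim.

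Third, for the inverse statement I would write $\Lambda_\alpha^{-1}$ explicitly. Tracing $\sigma_{DS}=\mathrm{Tr}(\rho_{DS})\id_\mathcal{S}+\alpha\rho_{DS}$ over the $(N+1)$-dimensional symmetric subspace gives $\mathrm{Tr}(\rho_{DS})=\mathrm{Tr}(\sigma_{DS})/(N+1+\alpha)$, hence
\begin{equation}
\Lambda_\alpha^{-1}(\sigma_{DS})=\frac{1}{\alpha}\left[\sigma_{DS}-\frac{\mathrm{Tr}(\sigma_{DS})}{N+1+\alpha}\,\id_\mathcal{S}\right],
\end{equation}
valid for $\alpha\neq 0,-(N+1)$, in complete analogy with Corollary~\ref{cor:barnum1} (with the full-space dimension $2^N$ replaced by the symmetric dimension $N+1$). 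The conclusion then follows from the general Theorem~\ref{thm:general}: if $\rho_{DS}=\Lambda_\alpha^{-1}(\sigma_{DS})\ge 0$ and $\alpha$ belongs to the parameter set for which the forward implication holds, then $\sigma_{DS}=\Lambda_\alpha(\rho_{DS})$ is separable.

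The main obstacle is logical rather than computational: the admissible range of $\alpha$ in the forward direction is \emph{state-dependent}, since the conditions $M'_0,M'_1\ge 0$ involve the specific coefficients $p_k$ of $\rho_{DS}$ (equivalently the $q_k$ of $\sigma_{DS}$). I would therefore identify the parameter set $\mathcal{P}_{SS'}$ of Theorem~\ref{thm:general} with precisely those $\alpha$ for which both Hankel matrices stay positive semidefinite, and flag the degenerate points $\alpha=0$ and $\alpha=-(N+1)$ where the map is not invertible. Checking that $M_l(\{1,\dots,1\})\ge 0$ at $\alpha=0$ — consistent with the separability of $\id_\mathcal{S}$ — confirms that this set is nonempty and contains a neighbourhood of the origin, which is what renders the inverse criterion meaningful.
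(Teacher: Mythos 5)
Your proof is correct and follows essentially the same route as the paper's: the key observation in both is that $\Lambda_\alpha$ maps the diagonal symmetric coefficients $p_k\mapsto 1+\alpha p_k$, so that Lemma~\ref{Hankel} (together with the fact that PPT across the balanced bipartition is necessary and sufficient for separability of diagonal symmetric states) applies verbatim to the unnormalized coefficients of $\sigma_{DS}$. Your explicit inversion formula and your remark that the admissible range of $\alpha$ is state-dependent elaborate on the paper's one-line proof, but they flesh out the same argument rather than replace it.
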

\begin{proof}
The map transforms $p_k \to p_k'=1+\alpha p_k$. Note that $p_k'$ are not normalized, but this does not affect positivity condition.
Thus, the relevant Hankel matrices for $\sigma_{DS}$ have indeed the form presented in Lemma \ref{Hankel}.
\end{proof}

 Via a convex roof argument, it is sufficient to consider pure states. Hence, to compute the bounds of $\alpha$ for diagonal symmetric states it is sufficient to check $\Lambda_\alpha(\ket{D^N_k}\bra{D^N_k})$ for each basis element $k=0,1,..,N$  and take the intersection of ranges $\alpha_{{\rm min, max}_k}$. In order to compute the bound of $\alpha$ we know that for $\alpha =0$ the state is PPT so we just need to solve $\det M'_0 = 0,\det M'_1 = 0 $ with $M_{0},M_{1} $ corresponding to each basis element to find at least one vector in the kernel. By finding the closest roots of $\det M'_0 = 0,\det M'_1 = 0 $ with $M_{0},M_{1} $ from both sides, we obtain
\begin{table}[h]
\begin{center}
\begin{tabular}{||c c c c||} 
 \hline
 $N = 2$ & $N=3$ & $N=4$ & $N=5$ \\ [0.5ex] 
 \hline\hline
 $(-0.750,1.000)$ & $(-0.667,0.732)$ & $(-0.326,0.440)$  & $(-0.231,0.268)$ \\ 
 \hline
\end{tabular}
\end{center}
\caption{Bounds on $\alpha$, $(\alpha_{\rm{min}}, \alpha_{\rm{max})}$ for different number of parties $N$. The bounds $(\alpha_{\min},\alpha_{\max})$ are algebraic numbers and they take the following closed expressions: for $N=2: [-3/4,1]$, for $N=3: [-2/3, \sqrt{3}-1]$, for $N=4: [(3\sqrt{6}-8)/2, 5(\sqrt{21}-3)/18]$, and for $N=5$ the lower bound is given by the smallest root of the polynomial $18\alpha^3 + 54\alpha^2-378\alpha -90$ and the upper bound is $(\sqrt{114}-8)/10.$}
\end{table}

This result can be extended to qudits. In such case, the symmetric basis is generalized as: 

\begin{equation}
    \left\{ \ket{S_\mathbf{k}^N} ={N\choose \mathbf{k}}^{-1/2} \sum_{\pi \in \mathfrak{S}_{N}} \pi\left(\bigotimes_{a=0}^{d-1}\ket{a}^{\otimes k_a}\right)\right\}_{\mathbf{k}\vdash N} \ ,
\end{equation}
where now $\mathbf{k} = (k_0,k_1,..,k_{d-1})$ is a partition of $N$. Presently we consider states of the form introduced in Ref. \cite{Rut_Mar}:
\begin{equation}
    \rho_{DD} = \sum_{k=0}^{N(d-1)}p_k\ket{D_k^{N}}\bra{D_k^{N}} \ ,
    \label{dicke_diagonal}
\end{equation}
where $\ket{D_k^N}$ are the natural generalization of Dicke states for higher local dimension and read as coherent superposition of product vectors with the same number of excitations. In the symmetric basis, 

\begin{equation}
    \left\{ \ket{D_k^N} = \mathcal{N}_k^{-1/2}  \sum_{ \mathbf{d}\cdot \mathbf{k} = k}{N\choose \mathbf{k}}^{1/2}\ket{S_\mathbf{k}^N}\right\}_{k=0}^{N(d-1)} \ ,
 \label{Dicke_d_ basis}
\end{equation}
where $\mathcal{N}_k^{1/2}$ is the normalization factor  (without simple formula) and $\mathbf{d} = (0,1,..,d-1)$. We denote by $\id_\mathcal{D}$ the identity (projector) on the basis of Eq. \eqref{Dicke_d_ basis}.  

Similarly, for $N$ even PPT with respect the largest bipartition $\left\lfloor {\frac{N}{2}} \right\rfloor:\left(N-\left\lfloor {\frac{N}{2}} \right\rfloor\right)$ is a necessary and sufficient separability criteria for states of the form $\rho_{DD}$ \cite{Rut_Mar}. In complete analogy with the qubit case, we can pose the following theorems:    

\begin{lem}\label{Hankeld}
\cite{Rut_Mar}
The state $\rho_{DD}$ has positive partial transpose with respect the bipartition $\left\lfloor {\frac{N}{2}} \right\rfloor:\left(N-\left\lfloor {\frac{N}{2}} \right\rfloor\right)$ if, and only if,

\begin{equation}
M_{l}(\{p_k\}_{k=0}^{N(d-1)})= \left[
\begin{matrix}
p_{0+l}/\mathcal{N}_{0+l}&p_{1+l}/\mathcal{N}_{1+l}&  \dots& p_{n_l+l}/\mathcal{N}_{n_l+l} \\
p_{1+l}/\mathcal{N}_{1+l}&p_{2+l}/\mathcal{N}_{2+l}& \dots  &p_{n_l+1+l}/\mathcal{N}_{n_l+l+l}  \\
\vdots & \vdots &  &  \vdots \\
p_{n_l+ l}/\mathcal{N}_{n_l+l} & p_{n_l +1 + l}/\mathcal{N}_{n_l+l+1}  & \cdots & p_{2n_l + l}/\mathcal{N}_{2n_l+l}
\end{matrix}\right] \geq 0,\\
\end{equation}
where $l\in\{0,1\}$ and $n_{l=0} = \left\lfloor {\frac{N(d-1)}{2}} \right\rfloor, n_{l=1} = \left\lfloor {\frac{N(d-1)-1}{2}} \right\rfloor $. 
\end{lem}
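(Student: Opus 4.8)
The plan is to block-diagonalize $\rho_{DD}^{T_A}$ into a direct sum of Hankel matrices and then to show that only the two largest blocks are independent. First I would pass from the normalized generalized Dicke states to the \emph{unnormalized} symmetric states $\ket{\tilde D_k^N}:=\sum_{\mathbf d\cdot\mathbf k=k}\binom{N}{\mathbf k}^{1/2}\ket{S_\mathbf k^N}$, so that $\ket{D_k^N}=\mathcal N_k^{-1/2}\ket{\tilde D_k^N}$ as in Eq.~\eqref{Dicke_d_ basis}, and $\ket{\tilde D_k^N}$ is simply the flat sum of all computational product strings whose digits add up to $k$. The structural fact I would exploit is that, across the bipartition $A:B$ with $n_A=\lfloor N/2\rfloor$ and $n_B=N-n_A$, this sum factorizes as $\ket{\tilde D_k^N}=\sum_{k_A+k_B=k}\ket{\tilde D_{k_A}^{n_A}}_A\otimes\ket{\tilde D_{k_B}^{n_B}}_B$, with a single term per admissible split $(k_A,k_B)$. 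This single-term factorization is precisely the feature that survives the multiplicity of symmetric states at fixed excitation when $d>2$, and it is what ultimately forces a Hankel (rather than a denser) structure. Since every $\ket{\tilde D}$ has real coefficients in the computational basis, the partial transpose acts as the label swap $(\ket{\tilde D_{k_A}^{n_A}}\bra{\tilde D_{k_A'}^{n_A}})^{T}=\ket{\tilde D_{k_A'}^{n_A}}\bra{\tilde D_{k_A}^{n_A}}$, which is all that is needed to evaluate $\rho_{DD}^{T_A}$ explicitly.

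Next I would carry out the transpose term by term. Writing $\rho_{DD}=\sum_k(p_k/\mathcal N_k)\ket{\tilde D_k^N}\bra{\tilde D_k^N}$ and expanding both ket and bra over their splits, the label swap sends a generic summand to $\ket{\tilde D_{k_A'}^{n_A}\tilde D_{k_B}^{n_B}}\bra{\tilde D_{k_A}^{n_A}\tilde D_{k_B'}^{n_B}}$ subject to $k_A+k_B=k_A'+k_B'=k$. A short computation shows that the surviving summands carry ket excitations $(s_A,s_B)$ and bra excitations $(t_A,t_B)$ obeying $s_A-s_B=t_A-t_B$, so $\rho_{DD}^{T_A}$ is block diagonal with respect to the conserved quantity $j:=(\text{excitation in }A)-(\text{excitation in }B)$. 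Within a fixed-$j$ block, indexing rows and columns by the $A$-excitations $s$ and $s'$ (with $B$-excitations $s-j$ and $s'-j$), the surviving $p_k$ is the one with $k=s+s'-j$; the entry therefore depends only on the \emph{sum} $s+s'$, so each block is Hankel. Restoring the orthonormal product-Dicke basis multiplies row $s$ and column $s'$ by positive factors $(\mathcal N^A_{s}\mathcal N^B_{s-j})^{1/2}$ and $(\mathcal N^A_{s'}\mathcal N^B_{s'-j})^{1/2}$; this is a positive diagonal congruence, which preserves the inertia, and stripping it leaves exactly the entries $p_{s+s'-j}/\mathcal N_{s+s'-j}$ claimed in the statement.

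Finally I would collapse the family of blocks $\{j\}$ onto the two matrices $M_0,M_1$. This rests on a nesting property: the Hankel block of shift $j$ is a principal submatrix of the block of shift $j-2$ (obtained by deleting the extreme row and column), while blocks with opposite sign of $j$ coincide by reality. Hence positivity of all blocks is equivalent to positivity of the two shift-minimal, i.e. largest, blocks $j=0$ and $j=1$. For $N$ even, where $n_A=n_B=N/2$ and the $A$- and $B$-excitations both run over $0,\dots,(N/2)(d-1)$, these two blocks have dimensions $\lfloor N(d-1)/2\rfloor+1$ and $\lfloor(N(d-1)-1)/2\rfloor+1$, i.e. they are precisely $M_0$ and $M_1$ with the stated $n_{l=0},n_{l=1}$. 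As each step (the congruence, the restriction to blocks, and the passage to extremal blocks) is an equivalence, one concludes $\rho_{DD}^{T_A}\ge 0 \iff M_0,M_1\ge 0$.

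I expect the main obstacle to lie in the second step: verifying that the single-term bipartite factorization of $\ket{\tilde D_k^N}$ genuinely holds despite the degeneracy of symmetric states at fixed excitation for $d>2$, and then tracking the normalization factors $\mathcal N_k,\mathcal N^A,\mathcal N^B$ (which have no closed form) cleanly through the congruence so that the exact Hankel entries emerge without residual index-dependent weights. A secondary point worth flagging is that the dimensions $n_{l=0},n_{l=1}$ are the sizes of the extremal blocks only for $N$ even, which is consistent with the surrounding restriction of the PPT-equals-separability statement to even $N$.
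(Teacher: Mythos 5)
The paper offers no proof of Lemma~\ref{Hankeld}: it is imported as a known result from Ref.~\cite{Rut_Mar}, so there is no in-paper argument to compare against. Your blind derivation is nonetheless essentially correct and reconstructs the standard route to this statement: passing to the unnormalized Dicke vectors so that $\ket{\tilde D_k^N}$ factorizes across the cut with exactly one term per split $(k_A,k_B)$, noting that the partial transpose preserves the excitation difference $j$ between the two halves, identifying each fixed-$j$ block as a Hankel matrix once a positive diagonal congruence (the local normalizations $\mathcal N^A_s,\mathcal N^B_{s-j}$) is stripped off, and reducing to the two extremal blocks $j=0,1$ via the principal-submatrix nesting. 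Each of these steps checks out, and the congruence/restriction/nesting chain is indeed a chain of equivalences. Two remarks. First, the identification of the $j$ and $-j$ blocks does not follow from reality or Hermiticity (those only relate entries within a single block); it follows from the explicit reindexing $s\mapsto s+j$, which maps the index range of one block onto that of the other precisely because $n_A=n_B$ when $N$ is even. Second, your closing caveat is well placed and worth making explicit: for $N$ odd and $d>2$ the maximal blocks are no longer just two and no longer have the sizes $n_l+1$ quoted in the lemma (e.g.\ $N=3$, $d=4$ yields four non-nested maximal blocks of size $4$ rather than the $5\times 5$ matrix $M_0$, whose positivity is then sufficient but not necessary for PPT), so the ``if and only if'' should be read, as the surrounding text of the paper indicates, for even $N$.
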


This allows us to formulate the basic result regarding the qudit diagonal states of Eq.\eqref{dicke_diagonal} : 

\begin{thm}
\label{thm:symm4d} Let $\Lambda_\alpha(\rho_{DD})=\sigma_{DD}=
{\rm Tr}(\rho)\id_\mathcal{D} + \alpha \rho_{DD}$ be the family of maps acting on diagonal qudit Dicke states. Let $\alpha$ be such that the two matrices $M'_0$ and $M'_1$  be positive semidefinite, 
where  $M'_0=M_0(\{1+\alpha p_k \}_{k=0}^{N(d-1)})$, $M'_1=M_1(\{1+\alpha p_k \}_{k=0}^{N(d-1)})$.

Similarly, if $\sigma_{DD} \ge 0$, and $\Lambda_\alpha^{-1}(\sigma_{DD})=\rho_{DD}\ge 0$, then $\sigma_{DD}$ is separable. 
\end{thm}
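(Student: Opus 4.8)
The plan is to follow verbatim the strategy used for the qubit diagonal case (proof of Theorem \ref{thm:symm4}), exploiting that $\Lambda_\alpha$ acts \emph{diagonally} in the Dicke basis $\{\ket{D_k^N}\}_{k=0}^{N(d-1)}$. The decisive observation is that $\id_\mathcal{D}=\sum_{k=0}^{N(d-1)}\ket{D_k^N}\bra{D_k^N}$ is diagonal in this basis by construction. Writing $\rho_{DD}=\sum_k p_k\ket{D_k^N}\bra{D_k^N}$ with $\mathrm{Tr}(\rho_{DD})=\sum_k p_k=1$, the image is
\begin{equation}
\sigma_{DD}=\Lambda_\alpha(\rho_{DD})=\sum_{k=0}^{N(d-1)}(1+\alpha p_k)\,\ket{D_k^N}\bra{D_k^N},
\end{equation}
so the map implements exactly the coefficient replacement $p_k\mapsto p_k'=1+\alpha p_k$, and $\sigma_{DD}$ is again a state of the same diagonal form $\rho_{DD}$.

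With this in hand I would proceed in three short steps. First, the new weights $p_k'$ need not sum to one, but rescaling $\{p_k'\}$ by the positive constant $\mathrm{Tr}(\sigma_{DD})$ leaves the sign of every eigenvalue of the associated Hankel matrices unchanged, so normalization is irrelevant to their positivity. Second, apply Lemma \ref{Hankeld} directly to $\sigma_{DD}$: its partial transpose across the balanced cut $\lfloor N/2\rfloor:(N-\lfloor N/2\rfloor)$ is positive if, and only if, $M_0(\{p_k'\})=M_0'\ge 0$ and $M_1(\{p_k'\})=M_1'\ge 0$. Third, invoke the result of \cite{Rut_Mar} that, for $N$ even, PPT across this largest bipartition is necessary and sufficient for separability within the family $\rho_{DD}$; hence $M_0',M_1'\ge 0$ is exactly equivalent to separability of $\sigma_{DD}$, establishing the forward statement.

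The inverse-map criterion then follows from the general Theorem \ref{thm:general}. For $\alpha$ in the admissible range (i.e.\ the range for which $M_0',M_1'\ge 0$ holds for every positive $\rho_{DD}$), the forward direction shows that $\Lambda_\alpha$ sends the cone of positive diagonal Dicke states into the separable set; since $\rho_{DD}=\Lambda_\alpha^{-1}(\sigma_{DD})\ge 0$ is then a genuine positive state, $\Lambda_\alpha[\Lambda_\alpha^{-1}(\sigma_{DD})]=\sigma_{DD}$ is separable. The point I would flag as the main obstacle is the restriction to \emph{even} $N$: the equivalence PPT $\Leftrightarrow$ separability for $\rho_{DD}$ is only available in that regime, so for odd $N$ one can still certify that $\sigma_{DD}$ is PPT across the balanced cut but not, in general, separable. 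A secondary point worth checking is that $\id_\mathcal{D}$ must be the projector onto the $(N(d-1)+1)$-dimensional Dicke-diagonal subspace rather than onto the full symmetric subspace; it is precisely this choice that keeps the map block-diagonal and its image inside the $\rho_{DD}$ family.
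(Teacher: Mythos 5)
Your proof follows exactly the same route as the paper's (much terser) argument: the map acts as $p_k\mapsto 1+\alpha p_k$ on the Dicke-diagonal coefficients, the lack of normalization is irrelevant to the positivity of the Hankel matrices, and Lemma~\ref{Hankeld} together with the PPT-equals-separability result of \cite{Rut_Mar} for this family closes the argument, with the inverse-map criterion following from Theorem~\ref{thm:general}. Your caveat about the restriction to even $N$ is well taken --- the paper states that hypothesis only in the surrounding text rather than in the theorem itself --- but it does not change the fact that the two approaches coincide.
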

\begin{proof}
The map transforms $p_k \to p_k'=1+\alpha p_k$. Note that $p_k'$ are not normalized, but this does not affect positivity condition.
Thus, the relevant Hankel matrices for $\sigma_{DD}$ have indeed the form presented in Lemma \ref{Hankeld}.
\end{proof}

Note that for qubits ($d=2$), $\rho_{DD}$ and $\rho_{DS}$ are equivalent, thus yielding Theorem \ref{thm:symm4} for $N$ even as a particular case.

For qutrits ($d=3$) and, more generally, qudits (unrestricted $d$) we can find the critical values of $\alpha$ by following these steps:
We begin by considering $\Pi_{DD} := \sum_{k=0}^{N(d-1)} \ket{D_{k}^N}\bra{D_k^N}$ and, for every $k\in \{0, \ldots, N(d-1)\}$ we compute the characteristic polynomial $Q_k(x;\alpha)$ of $(\Pi_{DD} + \alpha \ket{D_{k}^N}\bra{D_k^N})^{\Gamma}$, where $\Gamma$ denotes partial transposition with respect to some subsystems. By virtue of a convex roof argument, it suffices to consider projectors onto pure basis states $\ket{D_{k}^N}\bra{D_k^N}$. To analytically find a range of $\alpha$ we proceed by computing the polynomial greatest common divisor between $Q_{k}(x;\alpha)$ and $Q_{k}(x;0)$. Since $\alpha$ is treated as a parameter, the polynomial Euclidian algorithm using long division can be used, as $Q_k$ is a univariate polynomial. Computing $\gcd(Q_k(x;\alpha), Q_k(x;0))$ allows us to ignore all the eigenvalues of the partial transposition that do not depend on alpha without even having to compute them. All the eigenvalues that do not depend on $\alpha$ are non-negative if and only if $(\Pi_{DD})^\Gamma \geq 0$ on all possible partial transpositions. If this holds, we can therefore define $R_k(x;\alpha) := Q_k(x;\alpha)/\gcd(Q_k(x;\alpha), Q_k(x;0))$, which is a polynomial by construction. If all the roots of $R_k(x;0)$ are strictly positive, then by continuity of $\alpha$, we will have ${\alpha_{\max}} - {\alpha_{\min}} >0$. To find the candidates for ${\alpha_{\max}}$ and ${\alpha_{\min}}$ we can find all the roots of $R_k(0;\alpha)$ which can be expressed in terms of radicals for small $N$ and $d$. Note, however, that the projector $\Pi_{DD}$ violates the PPT criterion for $N=3$ and $d\geq 4$ and it is therefore entangled. In this case the range for $\alpha$ is $\emptyset$.

For qutrits and $N=3$, $R_k(x;\alpha)$ splits in cubic factors in $x$. Therefore, we can write their roots as radicals in function of $\alpha$, which we can denote $\lambda_i(\alpha)$. By solving for $\lambda_i(\alpha)= 0$ and keeping all the real solutions we can iterate on $k$ to find $\alpha_{\min}$ and $\alpha_{\max}$. In the case fof $N=d=3$, $\alpha_{\max}$ is given by $(7\sqrt{10}-22)/36 \approx 0.00377621$ whereas $\alpha_{\min}$ is the intermediate root of the polynomial $49\alpha^3 + 147\alpha^2 -315\alpha -1$, approximately $-0.00316992$.

\subsection{Breuer-Hall- inspired maps }

Let us start the subsection by presenting generalizations of Theorem~\ref{thm:barnum}
derived in a similar spirit as the
generalization of the reduction map by the Breuer-Hall map
\cite{Breuer.PRL.2006,Hall.JPA.2006}. Note that Breuer-Hall's construction does not work
in the two qubit case---Breuer-Hall's map identically vanishes for qubits. In our
case when we consider maps acting on the composite space of Alice
and Bob and Charlie, etc. this restriction will not apply. In the two-dimensional
space there exist (up to a phase) a single unitary,
$\sigma_2$, with the property that for every $|e\rangle$,
$\sigma_2|e^*\rangle=|e^{\perp}\rangle$, where $*$ indicates complex conjugation and $\perp$ perpendicular. This operator can be applied to all of the  $N$ qubits.
It is thus easy to construct analogous Breuer-Hall unitary
operators $V$, such that $V|f_1^*\rangle\otimes\ldots\otimes|f_k\rangle=|f_1^{\perp}\rangle\otimes\ldots\otimes|f_k^{\perp}\rangle$, being simple tensor products of $\sigma_2$.

Below we will denote, for brevity: 
\[\tilde\rho_A=\sigma^A_2\rho^{T_A}\sigma^A_2,\]
\[\tilde\rho_{AB}=\sigma^A_2\sigma^B_2\rho^{T_{AB}}\sigma^B_2\sigma^A_2,\] etc.

We can now prove the following result about a generalized $\Lambda_p$.

\begin{thm}
\label{thm:lewenstein1} Let $\Lambda^A_{\alpha,\beta}(\rho)= {\rm
Tr}(\rho)\id + \alpha \rho + \beta \tilde\rho_A$ be a family of
maps. Let $\alpha \ge {\rm max}[-1,\beta/2-1]$ and $\beta \ge
{\rm max}[-1,\alpha/2-1]$. Then $\rho\ge 0$ $\Rightarrow$
$\Lambda_{\alpha,\beta}(\rho)=:\sigma\in \Sigma_{N-1}$, i.e., $\sigma$ is
bi-separable in the partition $1^A:(N-1)^{BC\ldots}$
\end{thm}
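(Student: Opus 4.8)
The plan is to follow the proof of Theorem~\ref{thm:barnum} and simply carry the extra Breuer--Hall term $\beta\tilde\rho_A$ through the two-qubit analysis. Since the set $\Sigma_{N-1}$ for the fixed cut $1^A{:}(N-1)$ is convex and $\Lambda^A_{\alpha,\beta}$ is linear, it is enough to treat a pure state $\rho=\ket{\Psi}\!\bra{\Psi}$. Across the cut $A{:}(N-1)$ the Schmidt rank of $\ket{\Psi}$ is at most $2$ because Alice is a single qubit, so I may assume $\ket{\Psi}=\lambda_0\ket{0}_A\ket{0}_B+\lambda_1\ket{1}_A\ket{1}_B$ with $\{\ket{0}_B,\ket{1}_B\}$ orthonormal, spanning a two-dimensional subspace $W$ of the $(N-1)$-party space.

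Next I would confine the computation to the block $\CC^2_A\otimes W$. Both $\rho^{T_A}$ and $\tilde\rho_A=\sigma_2^A\rho^{T_A}\sigma_2^A$ are supported there, since partial transposition on $A$ and $\sigma_2^A$ both preserve this subspace. Splitting $\id=\id_{2\otimes2}+\id_\perp$ with $\id_\perp=\id_A\otimes P_{W^\perp}$, the leftover term $\tr(\rho)\,\id_\perp$ is manifestly separable. Hence $\sigma=\Lambda^A_{\alpha,\beta}(\rho)$ is block-diagonal with respect to $(\CC^2_A\otimes W)\oplus(\CC^2_A\otimes W^\perp)$, and it suffices to prove that its restriction $\tr(\rho)\id_{2\otimes2}+\alpha\rho+\beta\tilde\rho_A$ to $\CC^2_A\otimes W$ is separable; being a two-qubit operator, by the Horodecki criterion \cite{3Horodecki.PLA.1996} this reduces to checking that it is positive and PPT.

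The computational core is to evaluate $\tilde\rho_A$ explicitly and assemble the $4\times4$ matrix of $\sigma$ in the basis $\{\ket{00},\ket{01},\ket{10},\ket{11}\}$. I expect $\sigma$ to be block-diagonal: a coherence block on $\{\ket{00},\ket{11}\}$ with diagonal $1+\alpha\lambda_0^2,\,1+\alpha\lambda_1^2$ and off-diagonal $(\alpha-\beta)\lambda_0\lambda_1$, plus the scalars $1+\beta\lambda_1^2,\,1+\beta\lambda_0^2$ on $\ket{01},\ket{10}$; the partial transpose $\sigma^{T_A}$ moves the coherence onto $\{\ket{01},\ket{10}\}$, effectively interchanging $\alpha$ and $\beta$. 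Positivity of the scalar entries for all Schmidt weights gives $\alpha\ge-1$ and $\beta\ge-1$, while the two $2\times2$ determinants, after using $\lambda_0^2+\lambda_1^2=1$, collapse to $1+\alpha\ge\beta(\beta-2\alpha)\lambda_0^2\lambda_1^2$ and $1+\beta\ge\alpha(\alpha-2\beta)\lambda_0^2\lambda_1^2$.

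The main obstacle is converting these two quadratic conditions into the advertised linear bounds. The binding case is the balanced point $\lambda_0^2=\lambda_1^2=1/2$, where $\lambda_0^2\lambda_1^2=1/4$ is maximal; there the positivity determinant factors as $(\beta+2)(\beta-2\alpha-2)\le0$, which — using $\beta\ge-1>-2$ — is precisely $\beta\le2(\alpha+1)$, i.e.\ $\alpha\ge\beta/2-1$, and by the $\alpha\leftrightarrow\beta$ symmetry of the two determinants the PPT determinant yields $\beta\ge\alpha/2-1$. Spotting these factorizations (and noting that whenever the quadratic coefficient $\beta(\beta-2\alpha)$, respectively $\alpha(\alpha-2\beta)$, is nonpositive the inequality holds for free and is already subsumed by $\alpha,\beta\ge-1$) is the delicate step. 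Combined with $\alpha,\beta\ge-1$, this reproduces exactly $\alpha\ge\max[-1,\beta/2-1]$ and $\beta\ge\max[-1,\alpha/2-1]$, and in fact shows the stated bounds are tight.
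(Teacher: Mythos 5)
Your proposal is correct and follows exactly the route the paper takes (and attributes to Theorem~\ref{thm:barnum} and Ref.~\cite{LAC16}): reduce to pure states by convexity, use the Schmidt decomposition across the $1^A{:}(N-1)$ cut to confine everything to a $2\otimes 2$ block, and invoke the Horodecki PPT criterion there. The only difference is that you actually carry out the $4\times4$ computation and the factorization yielding $\alpha\ge\max[-1,\beta/2-1]$, $\beta\ge\max[-1,\alpha/2-1]$, which the paper leaves implicit; your algebra checks out.
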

\begin{proof}
The proof is very similar to that of Theorem \ref{thm:barnum} as above, and it follows \cite{LAC16}.
It is enough to prove it for pure
states, $\rho=|\Psi\rangle\langle \Psi|$. Without any loss of generality,
take $|\Psi\rangle =\lambda_0 |0\rangle_A\otimes
|0\rangle_{A.O.} + \lambda_1|1\rangle_A\otimes |1\rangle_{A.O.}$. It is enough to
check then positiveness and separability of
$\Lambda(|\Psi\rangle\langle \Psi|)$ on a $2\otimes 2$ space
spanned by $|0\rangle_{A, A.O.}$, $|1\rangle_{A, A.O.}$ in both Alice's and All Others' (A.O.)
spaces, where PPT provides then necessary and sufficient
condition. 
\end{proof}

The above theorem has an obvious corollary: 
\begin{cor}\label{cormany2}
Let $\Lambda^i_{\alpha_i,\beta_i}(\rho)= {\rm
Tr}(\rho)\id + \alpha_i\rho + \beta_i \tilde\rho_i$ be a family of
maps. Let $\Lambda(\rho)=\sum_{i=1}^Np_i\Lambda^i_{\alpha_i,\beta_i}(\rho)$, with $p_i\ge 0$, and $\sum_{i=1}^N p_i=1$. 
Let $\alpha_i \ge {\rm max}[-1,\beta_i/2-1]$ and $\beta_i \ge
{\rm max}[-1,\alpha_i/2-1]$. Then $\rho\ge 0$ $\Rightarrow$
$\Lambda(\rho)=:\sigma\in \Sigma_{N-1}$, i.e., $\sigma$ has entanglement depth $N-1$
(is a mixture of states that are bi-separable in the partition $1^i:(N-1)^{A.O.}$
\end{cor}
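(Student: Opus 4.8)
The plan is to reduce the claim to Theorem~\ref{thm:lewenstein1} applied separately to each term, and then close the argument with a convexity observation. First I would note that for every fixed $i$ the single map $\Lambda^i_{\alpha_i,\beta_i}$ satisfies precisely the hypotheses of Theorem~\ref{thm:lewenstein1}, since the bounds $\alpha_i\ge\max[-1,\beta_i/2-1]$ and $\beta_i\ge\max[-1,\alpha_i/2-1]$ are assumed to hold for each $i$ separately. Consequently, for any $\rho\ge 0$ the operator $\sigma^{(i)}:=\Lambda^i_{\alpha_i,\beta_i}(\rho)$ is positive semidefinite and bi-separable across the single-party cut $1^i:(N-1)^{A.O.}$.

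Next I would argue that each $\sigma^{(i)}$ already belongs to $\Sigma_{N-1}$. Indeed, a state bi-separable with respect to the partition $1^i:(N-1)$ decomposes into pure product vectors of the form $|e\rangle_i\otimes|\psi\rangle_{A.O.}$, whose only potentially genuinely entangled factor $|\psi\rangle_{A.O.}$ spans at most $N-1$ parties; hence $\sigma^{(i)}$ is $(N-1)$-producible, i.e.\ $\sigma^{(i)}\in\Sigma_{N-1}$, and this conclusion is independent of which party $i$ has been singled out. Since $\Sigma_{N-1}$ is by construction the convex hull of $(N-1)$-producible pure states, the weighted sum $\sigma=\Lambda(\rho)=\sum_{i=1}^N p_i\,\sigma^{(i)}$ with $p_i\ge 0$ is again $(N-1)$-producible. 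The only bookkeeping point is normalization: each $\Lambda^i(\rho)$ carries trace $(2^N+\alpha_i+\beta_i)\,{\rm Tr}\rho$, so after dividing by ${\rm Tr}\,\sigma$ the weights $p_i$ are simply reweighted by the factors ${\rm Tr}\,\sigma^{(i)}$; the normalized result stays a convex mixture of normalized $\Sigma_{N-1}$ states and is therefore in $\Sigma_{N-1}$ by convexity.

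The step requiring the most care---and the only genuinely conceptual point---is the interpretation of the final mixture: $\sigma$ is in general \emph{not} bi-separable with respect to any single fixed partition, because it blends components that are bi-separable across distinct cuts $1^i:(N-1)$. This is exactly why the conclusion is the weaker statement of entanglement depth $N-1$ (absence of genuine $N$-partite entanglement) rather than bi-separability in one prescribed partition, and it is precisely what the convex-hull definition of $\Sigma_{N-1}$ encodes. I expect no real obstacle beyond making this distinction explicit, since producibility of a positive operator is unchanged under positive rescaling and closed under convex combinations.
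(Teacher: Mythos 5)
Your proof is correct and follows essentially the same route as the paper's: apply Theorem~\ref{thm:lewenstein1} to each term $\Lambda^i_{\alpha_i,\beta_i}(\rho)$ to get bi-separability across the cut $1^i:(N-1)$, then invoke convexity of $\Sigma_{N-1}$ for the mixture. The paper states this in one line; your added remarks on normalization and on why the conclusion is depth $N-1$ rather than bi-separability across a fixed cut are correct elaborations of the same argument.
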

\begin{proof}
The proof is obvious: a mixture of bi-separable state in the partitions $1^i:(N-1)^{A.O.}$ has $(N-1)$-depth. The problem of using the above result for entanglement depth criteria lies in the difficulty of inverting the map $\Lambda(\rho)$. 
\end{proof}

The general Theorem~\ref{thm:general} then immediately implies the following $(N-1)$-depth result.
\begin{thm}
\label{thm:lewenstein2} {\bf (Sufficient $(N-1)$-depth criterion 2)}
Let $\rho =\Lambda_{\alpha,\beta}^{-1}(\sigma)\ge 0 $, i.e.,
\begin{equation}\label{MainResult:1}
\frac{1}{\alpha^2-\beta^2}\left[\alpha \sigma -\beta
\tilde\sigma_A - \frac{(\alpha-\beta)}{2^N+\alpha +\beta} {\rm
Tr}(\sigma) \id\right]\ge 0.
\end{equation}
Then $\sigma$ is bi-separable in the partition $1_A:(N-1)_{A.O.}$
\end{thm}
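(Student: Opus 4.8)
The plan is to deduce the statement directly from the general Theorem~\ref{thm:general}, taking $S$ to be the cone of positive semidefinite operators on $(\CC^2)^{\otimes N}$ and $S'$ the set of states that are bi-separable across the $1_A{:}(N-1)$ cut. Theorem~\ref{thm:lewenstein1} already supplies the inclusion $\Lambda_{\alpha,\beta}(S)\subset S'$ for the admissible range of $\alpha,\beta$, so by Theorem~\ref{thm:general} it suffices to show that the operator in Eq.~\eqref{MainResult:1} is genuinely $\Lambda^{-1}_{\alpha,\beta}(\sigma)$: once this is established, $\Lambda^{-1}_{\alpha,\beta}(\sigma)\ge 0$ means $\sigma=\Lambda_{\alpha,\beta}(\rho)$ for some $\rho\in S$, whence $\sigma\in S'$. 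The entire content therefore lies in inverting the map.

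First I would record the algebraic structure of the auxiliary map $\mathcal{T}_A\colon\rho\mapsto\tilde\rho_A=\sigma_2^A\rho^{T_A}\sigma_2^A$. Using the elementary $2\times 2$ identity $\sigma_2 X^T\sigma_2=(\tr X)\id-X$ on Alice's factor and expanding $\rho=\sum_{ij}\ket{i}\!\bra{j}_A\otimes\rho_{ij}$, one obtains the compact form $\tilde\rho_A=\id_A\otimes\tr_A(\rho)-\rho$; that is, $\mathcal{T}_A$ coincides with the reduction map on subsystem $A$. From this representation three facts follow at once, all relying on $\dim\mathcal{H}_A=2$: (a) $\tr(\tilde\rho_A)=\tr(\rho)$; (b) $\mathcal{T}_A(\id)=\id$; and (c) $\mathcal{T}_A$ is an involution, $\mathcal{T}_A^2=\operatorname{id}$.

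The inversion then reduces to linear algebra. Writing the defining relation $\sigma=\tr(\rho)\id+\alpha\rho+\beta\tilde\rho_A$ and applying $\mathcal{T}_A$ to both sides, facts (b) and (c) give the companion equation $\tilde\sigma_A=\tr(\rho)\id+\beta\rho+\alpha\tilde\rho_A$. Taking the trace of the defining relation and using (a) together with $\tr(\id)=2^N$ fixes $\tr(\rho)=\tr(\sigma)/(2^N+\alpha+\beta)$. The two operator equations are linear in the ``unknowns'' $\rho$ and $\tilde\rho_A$, so eliminating $\tilde\rho_A$ via the combination $\alpha\sigma-\beta\tilde\sigma_A$ yields $(\alpha^2-\beta^2)\rho=\alpha\sigma-\beta\tilde\sigma_A-(\alpha-\beta)\tr(\rho)\id$; substituting the value of $\tr(\rho)$ reproduces exactly Eq.~\eqref{MainResult:1}.

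The main obstacle is precisely obtaining this closed-form inverse: the program flagged inversion as the hard step, and the trick that makes it tractable is recognising $\mathcal{T}_A$ as an involution fixing $\id$, which collapses the problem to a $2\times 2$ system. The remaining subtlety is merely bookkeeping about invertibility: the elimination needs $\alpha^2\neq\beta^2$ and the normalisation needs $2^N+\alpha+\beta\neq 0$. Both hold for almost all $(\alpha,\beta)$, as assumed in the general framework, and in particular away from the measure-zero diagonals $\alpha=\pm\beta$ inside the admissible region of Theorem~\ref{thm:lewenstein1}; there the map is a bijection and the displayed formula is its genuine inverse. With the inverse confirmed, Theorem~\ref{thm:general} delivers the bi-separability of $\sigma$ in the $1_A{:}(N-1)$ partition, completing the argument.
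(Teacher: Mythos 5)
Your proposal is correct and follows the same route as the paper, which derives this theorem as an immediate consequence of Theorem~\ref{thm:general} combined with the inclusion $\Lambda_{\alpha,\beta}(S)\subset S'$ established in Theorem~\ref{thm:lewenstein1}. The only difference is that you explicitly carry out the inversion of $\Lambda_{\alpha,\beta}$ (via the identity $\tilde\rho_A=\id_A\otimes\tr_A(\rho)-\rho$ and the involutive character of $\rho\mapsto\tilde\rho_A$), a computation the paper asserts without detail; your derivation of Eq.~\eqref{MainResult:1} checks out.
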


Noticeably, the contributions from $\alpha$ and $\beta$ compensate
each other, diminishing in effect off-diagonal parts of $\sigma$
in the Alice space.  The criterion derived above is
particularly strong at the boundary of the parameter region, i.e., for
$\beta=\alpha/2-1$ or $\alpha=\beta/2-1$.
\begin{cor}
\label{cor:lewenstein2} {\bf (Sufficient $(N-1)$-depth criterion 3)}
Let
\begin{equation}\label{MainResult:2a}
\left[\alpha\sigma -(\alpha/2-1) \tilde\sigma_A -
\frac{(1+\alpha/2)}{2N+ 3\alpha/2-1} {\rm Tr}(\sigma)
\id\right]\ge 0
\end{equation}
or
\begin{equation}\label{MainResult:2b}
\left[\beta\tilde\sigma_A -(\beta/2-1) \sigma -
\frac{(1+\beta/2)}{2N+ 3\beta/2-1} {\rm Tr}(\sigma) \id\right]\ge
0.
\end{equation}
Then $\sigma$ is bi-separable in the partition $1_A:(N-1)_{A.O.}$
\end{cor}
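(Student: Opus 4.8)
The plan is to derive the Corollary directly from Theorem~\ref{thm:lewenstein2} by evaluating its inverse-map criterion on the two edges of the admissible parameter region. Recall that Theorem~\ref{thm:lewenstein2}, through the general Theorem~\ref{thm:general}, already establishes that for any admissible pair $(\alpha,\beta)$ the positivity condition \eqref{MainResult:1} on $\rho=\Lambda_{\alpha,\beta}^{-1}(\sigma)$ is sufficient for $\sigma$ to be bi-separable in the partition $1_A:(N-1)_{A.O.}$. The two inequalities \eqref{MainResult:2a} and \eqref{MainResult:2b} are precisely \eqref{MainResult:1} specialized to the boundary lines $\beta=\alpha/2-1$ and $\alpha=\beta/2-1$, respectively, with the overall scalar prefactor removed.

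First I would handle \eqref{MainResult:2a}. Setting $\beta=\alpha/2-1$ in \eqref{MainResult:1}, I record the elementary identities $\alpha-\beta=1+\alpha/2$ and $\alpha+\beta=3\alpha/2-1$, hence $\alpha^2-\beta^2=(1+\alpha/2)(3\alpha/2-1)$, so that the coefficient of the trace term collapses to $(\alpha-\beta)/(2^N+\alpha+\beta)=(1+\alpha/2)/(2^N+3\alpha/2-1)$. Substituting these back reproduces exactly the bracketed operator of \eqref{MainResult:2a}, multiplied by the scalar $1/[(1+\alpha/2)(3\alpha/2-1)]$. Because multiplication by a strictly positive scalar leaves an operator inequality unchanged, \eqref{MainResult:2a} is then equivalent to \eqref{MainResult:1} on this line, and Theorem~\ref{thm:lewenstein2} delivers the bi-separability conclusion.

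For \eqref{MainResult:2b} I would repeat the substitution with $\alpha=\beta/2-1$, obtaining $\beta-\alpha=1+\beta/2$, $\alpha+\beta=3\beta/2-1$, and after clearing signs exactly the bracket of \eqref{MainResult:2b} times the positive scalar $1/[(1+\beta/2)(3\beta/2-1)]$. More economically, one may invoke the symmetry of the inverse-map formula \eqref{MainResult:1} under the simultaneous exchange $\alpha\leftrightarrow\beta$ and $\sigma\leftrightarrow\tilde\sigma_A$ (using $\tr(\tilde\sigma_A)=\tr(\sigma)$), which maps the first boundary line together with \eqref{MainResult:2a} onto the second boundary line together with \eqref{MainResult:2b}, so that the identical argument applies.

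The single point that genuinely requires care --- and the main obstacle --- is the sign of the prefactor $\alpha^2-\beta^2=(1+\alpha/2)(3\alpha/2-1)$ and its analogue in $\beta$: stripping it from an operator inequality is legitimate only where it is strictly positive. Since $1+\alpha/2>0$ throughout the admissible range, this reduces to $3\alpha/2-1>0$, i.e. to working on the part of the boundary with $\alpha+\beta>0$ (equivalently $\alpha>2/3$, resp. $\beta>2/3$). I would therefore first verify that the lines $\beta=\alpha/2-1$ and $\alpha=\beta/2-1$ indeed lie on the edge of the region $\alpha\ge\max[-1,\beta/2-1]$, $\beta\ge\max[-1,\alpha/2-1]$ of Theorem~\ref{thm:lewenstein1}, and then restrict to the sub-range where the prefactor is positive; on the complementary sub-range the inequalities \eqref{MainResult:2a}--\eqref{MainResult:2b} simply reverse. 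All remaining work is the routine algebra of the substitution.
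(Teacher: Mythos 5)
Your proposal is correct and follows exactly the route the paper intends (the paper gives no explicit proof, but the surrounding text makes clear the corollary is obtained by specializing the inverse-map condition of criterion 2 to the boundary lines $\beta=\alpha/2-1$ and $\alpha=\beta/2-1$ and factoring $\alpha^2-\beta^2=(\alpha-\beta)(\alpha+\beta)$). Your extra care about the sign of the stripped prefactor $(1+\alpha/2)(3\alpha/2-1)$ --- which restricts the boundary to $\alpha>2/3$ (resp.\ $\beta>2/3$) --- is a genuine refinement that the paper's statement silently omits, and your algebra also confirms that the denominator printed as $2N+3\alpha/2-1$ in the corollary should read $2^N+3\alpha/2-1$.
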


The above corollary has a particularly interesting limit, when both
$\alpha$ and $\beta$ tend to $\infty$.
\begin{cor}
\label{cor:lewenstein3} {\bf (Sufficient $(N-1)$-depth criterion 4)}
Let
\begin{equation}\label{MainResult:3a}
\sigma - \tilde\sigma_A/2 > 0
\end{equation}
or
\begin{equation}\label{MainResult:3b}
\tilde\sigma_A - \sigma/2 > 0.
\end{equation}
Then $\sigma$ is bi-separable in the partition $1_A:(N-1)_{A.O.}$
\end{cor}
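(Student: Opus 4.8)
The plan is to read off Corollary~\ref{cor:lewenstein3} as the large-parameter limit of criterion~3, Corollary~\ref{cor:lewenstein2}, using that strict positive-definiteness is an \emph{open} condition. Rather than re-deriving bi-separability from scratch, I would argue that the strict inequality \eqref{MainResult:3a} forces the non-strict condition \eqref{MainResult:2a} to hold at some finite, admissible value of $\alpha$, at which point Corollary~\ref{cor:lewenstein2} applies verbatim.

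Concretely, I would divide the left-hand side of \eqref{MainResult:2a} by $\alpha>0$, which leaves the sign of the operator unchanged, and define
\begin{equation}
X_\alpha := \sigma - \frac{\alpha/2-1}{\alpha}\,\tilde\sigma_A - \frac{1+\alpha/2}{\alpha\,(2N+3\alpha/2-1)}\,{\rm Tr}(\sigma)\,\id .
\end{equation}
The purpose of the rescaling is that the scalar coefficients now converge: $(\alpha/2-1)/\alpha\to 1/2$ and $(1+\alpha/2)/[\alpha(2N+3\alpha/2-1)]\to 0$ as $\alpha\to\infty$, so $X_\alpha\to\sigma-\tilde\sigma_A/2$ in, say, operator norm. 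Assuming \eqref{MainResult:3a}, i.e.\ $\sigma-\tilde\sigma_A/2>0$, openness of the positive-definite cone then yields a finite $\alpha_0$ with $X_{\alpha_0}>0$; equivalently the bracket in \eqref{MainResult:2a} is positive semidefinite at $\alpha=\alpha_0$. Invoking Corollary~\ref{cor:lewenstein2} at this $\alpha_0$ gives bi-separability of $\sigma$ in the partition $1_A:(N-1)_{A.O.}$.

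Before applying Corollary~\ref{cor:lewenstein2} I would verify that $\alpha_0$ lies inside the admissible window of Theorem~\ref{thm:lewenstein2}. On the boundary $\beta=\alpha/2-1$ the constraints of Theorem~\ref{thm:lewenstein1}, namely $\alpha\ge\max[-1,\beta/2-1]$ and $\beta\ge\max[-1,\alpha/2-1]$, reduce to $\alpha\ge 0$ once $\alpha$ is large, and the inversion prefactor $\alpha^2-\beta^2=3\alpha^2/4+\alpha-1$ stays strictly positive, so $\Lambda_{\alpha,\beta}^{-1}$ exists and all hypotheses hold for every sufficiently large $\alpha$. The companion statement \eqref{MainResult:3b} follows by the identical argument applied to \eqref{MainResult:2b}, dividing instead by $\beta$ and sending $\beta\to\infty$ along $\alpha=\beta/2-1$.

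The one point that needs care---and the only genuine obstacle---is the passage from the limiting ``$>0$'' to the ``$\ge 0$'' needed to trigger the finite-$\alpha$ criterion. This is precisely what the strict inequalities in \eqref{MainResult:3a}--\eqref{MainResult:3b} are for: openness of the set of positive-definite operators guarantees an \emph{actual} finite admissible $\alpha_0$, rather than a mere statement about the limit. The remaining work is the routine bookkeeping of the coefficient limits and of remaining in the allowed parameter region.
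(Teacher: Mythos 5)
Your proposal is correct and follows essentially the same route the paper intends: the paper presents Corollary~\ref{cor:lewenstein3} as the $\alpha,\beta\to\infty$ limit of Corollary~\ref{cor:lewenstein2} and merely remarks that the inequalities ``are valid only in the asymptotic sense and require strict positiveness,'' which is exactly the openness-of-the-positive-definite-cone argument you make explicit. Your write-up is in fact more careful than the paper's, since you also verify the rescaling by $\alpha$, the convergence of the scalar coefficients, the positivity of the inversion prefactor $\alpha^2-\beta^2$, and that sufficiently large $\alpha$ remains inside the admissible parameter window of Theorem~\ref{thm:lewenstein1}.
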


Note that the latter inequalities are valid only in the asymptotic
sense and require strict positiveness. One should also note that in
the above results we used maps that
 explicitly involved partially transposed matrices. For this
reason, it is useful to remind the reader of some results of Ref.
\cite{Kraus+3.PRA.2000}.

\begin{thm}
\label{thm:Karnas} \cite{Kraus+3.PRA.2000} If $\sigma=\sigma^{T_A}\ge 0$
then
 $\sigma$ is bi-separable in the partition $1_A:(N-1)_{A.O.}$
\end{thm}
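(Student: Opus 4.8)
The plan is to turn the partial-transpose invariance $\sigma=\sigma^{T_A}$ into a rigid structural constraint and then to exhibit an explicit separable decomposition of $\sigma$ as a continuous mixture of product states whose Alice factor is a \emph{real} qubit vector. First I would write $\sigma$ in block form with respect to Alice's computational basis $\{\ket0,\ket1\}$,
\begin{equation}
\sigma=\begin{pmatrix} P & Q\\ Q^\dagger & R\end{pmatrix},
\end{equation}
with $P,Q,R$ acting on $\mathcal H_{A.O.}=(\CC^2)^{\otimes(N-1)}$. Partial transposition on the qubit swaps the two off-diagonal blocks, so the hypothesis $\sigma=\sigma^{T_A}$ forces $Q=Q^\dagger$. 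Equivalently, expanding $\sigma=\id\otimes A_0+\sigma_x\otimes A_x+\sigma_y\otimes A_y+\sigma_z\otimes A_z$ in Pauli operators on Alice (here $\sigma_{x,y,z}$ are the Paulis, not the state $\sigma$), the condition kills the $\sigma_y$ component, $A_y=0$, leaving the Hermitian data $A_0=(P+R)/2$, $A_z=(P-R)/2$, $A_x=Q$.

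Next I would search for a decomposition $\sigma=\int_0^{\pi}\ket{\theta}\bra{\theta}\otimes M(\theta)\,d\theta$, where $\ket{\theta}=\cos\theta\ket0+\sin\theta\ket1$ is a real Alice vector and $M(\theta)\ge 0$ is a positive operator on $\mathcal H_{A.O.}$. Any such representation is manifestly a mixture of products across the $1_A:(N-1)_{A.O.}$ cut, hence proves the claimed bi-separability. Using $\ket{\theta}\bra{\theta}=\tfrac12(\id+\cos2\theta\,\sigma_z+\sin2\theta\,\sigma_x)$ and substituting $\phi=2\theta$, matching the $\id$, $\sigma_z$ and $\sigma_x$ components turns the construction of $M$ into a truncated (order-one) operator-valued trigonometric moment problem: find a positive $\mathcal B(\mathcal H_{A.O.})$-valued measure whose zeroth Fourier moment is proportional to $A_0$ and whose first Fourier moment is proportional to $A_z-iA_x$.

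The key input is the operator Carath\'eodory--Toeplitz theorem (equivalently a Naimark/Stinespring dilation): such a positive measure exists if and only if the block Toeplitz matrix
\begin{equation}
T=\begin{pmatrix} A_0 & A_z+iA_x\\ A_z-iA_x & A_0\end{pmatrix}
\end{equation}
is positive semidefinite. The step that closes the loop is a short linear-algebra identity: a direct computation shows that, with the single-qubit unitary $v=\tfrac1{\sqrt2}\begin{pmatrix}1 & i\\ 1 & -i\end{pmatrix}$, one has $T=(v\otimes\id)\,\sigma\,(v\otimes\id)^\dagger$, so $T$ and $\sigma$ are unitarily equivalent and $T\ge 0\Leftrightarrow\sigma\ge 0$. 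Therefore $\sigma\ge 0$ gives $T\ge 0$, the moment problem is solvable, a positive $M(\theta)$ exists, and $\sigma$ is bi-separable in the partition $1_A:(N-1)_{A.O.}$.

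I expect the main obstacle to be the solvability step, namely justifying that $T\ge 0$ genuinely guarantees a positive operator-valued measure with the two prescribed moments (the dilation content of Carath\'eodory--Toeplitz); by contrast, the reduction to Hermitian $Q$ and the unitary equivalence $T=(v\otimes\id)\sigma(v\otimes\id)^\dagger$ are routine once set up. A useful sanity check, and a possible shortcut that avoids citing the moment problem, is to build $M(\theta)$ explicitly from a Cholesky factor of $T$; one should also verify that the naive four-point ansatz on $\{\ket0,\ket1,\ket{+},\ket{-}\}$ fails, since it would require $P,R\ge|Q|$, which $\sigma\ge 0$ does \emph{not} imply, confirming that the continuous mixture over $\ket{\theta}$ is genuinely needed.
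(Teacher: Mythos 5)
The paper offers no proof of this theorem---it is quoted directly from Ref.~\cite{Kraus+3.PRA.2000}---so your argument can only be compared with that reference, where the result is obtained by a different, structural analysis of PPT states of $2\times M$ systems; your route via the operator trigonometric moment problem is genuinely different, and it is correct up to one classical input. The reduction $\sigma=\sigma^{T_A}\Rightarrow Q=Q^\dagger\Rightarrow A_y=0$ is right; matching Fourier components of $\int\ket{\theta}\bra{\theta}\otimes M(\theta)\,d\theta$ does produce exactly the stated two-moment problem; and the identity $T=(v\otimes\id)\,\sigma\,(v\otimes\id)^\dagger$ checks out (one finds $v\sigma_z v^\dagger=\sigma_x$ and $v\sigma_x v^\dagger=-\sigma_y$, which reassembles precisely your $T$), so $T\ge 0$ indeed follows from $\sigma\ge 0$. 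The load-bearing step is the one you flag: solvability of the truncated operator-valued moment problem from $T\ge 0$. That is a true theorem (operator Carath\'eodory--Toeplitz, provable by one-step Toeplitz extension plus operator Herglotz, or by Naimark dilation), but since the entire proof rests on it you should either cite it precisely or, better, discharge it constructively, which is short and even yields a \emph{finite} separable decomposition: $T\ge 0$ with equal diagonal blocks gives $A_z+iA_x=A_0^{1/2}KA_0^{1/2}$ for a contraction $K$ supported on the range of $A_0$; write $K=\sum_j p_j U_j$ as a finite convex combination of unitaries (the unit ball of a matrix algebra is the convex hull of its unitary group) and spectrally decompose $U_j=\sum_m e^{i\phi_{jm}}\Pi_{jm}$; then
\begin{equation}
M(\phi)=2\sum_{j,m}p_j\,\delta(\phi-\phi_{jm})\,A_0^{1/2}\Pi_{jm}A_0^{1/2}\;\ge\;0
\end{equation}
satisfies $\int M\,d\phi=2A_0$ and $\int e^{i\phi}M\,d\phi=2(A_z+iA_x)$, hence reproduces $\sigma$ as a finite mixture of states of the form $\ket{\theta}\bra{\theta}\otimes(\text{positive operator})$. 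Compared with the cited source, your approach buys a compact, basis-independent argument with an explicit product decomposition; its only cost, the appeal to the dilation theorem, disappears in this constructive variant. The closing remark about the four-point ansatz is a harmless aside and not needed for the proof.
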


\begin{cor}
\label{cor:Karnas} \cite{Kraus+3.PRA.2000} Let $\sigma\ge 0$ be a state.
If $\sigma + \sigma^{T_A}$ is of full rank and $\|(\sigma +
\sigma^{T_A})^{-1}\|\ \| \sigma - \sigma^{T_A}\| \le 1$, then
$\sigma$ is bi-separable in the partition $1_A:(N-1)_{A.O.}$. This corollary implies that if $\sigma$ is of full rank and is very close to $\sigma^{T_A}$ then
 $\sigma$ is separable. Here we use the operator norm
$\|A\| := \max_{\|\Psi\rangle\|=1} \| A |\Psi\rangle\|$.
\end{cor}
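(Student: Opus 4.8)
The plan is to derive the corollary from Theorem~\ref{thm:Karnas}, which already guarantees that any $T_A$-invariant positive operator is bi-separable across $1_A:(N-1)_{A.O.}$. The strategy is to split $\sigma$ into its $T_A$-symmetric and $T_A$-antisymmetric parts, show that the hypotheses force the symmetric part to be positive definite (hence separable by Theorem~\ref{thm:Karnas}), and then argue that the antisymmetric remainder is small enough, in the metric induced by the symmetric part, to keep $\sigma$ inside the separable set.

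Concretely, I would set $P := \sigma + \sigma^{T_A}$ and $M := \sigma - \sigma^{T_A}$, so that $P^{T_A}=P$, $M^{T_A}=-M$, both Hermitian, and $\sigma=\tfrac12(P+M)$, $\sigma^{T_A}=\tfrac12(P-M)$. First I would show $P>0$: full rank excludes a kernel, while a negative eigenvalue $-\mu<0$ of $P$ with unit eigenvector $|v\rangle$ would give, from $\langle v|\sigma|v\rangle\ge 0$, the bound $\langle v|M|v\rangle\ge\mu$, whence $\|M\|\ge\mu$ and $\|P^{-1}\|\ge 1/\mu$, contradicting $\|P^{-1}\|\,\|M\|\le 1$ (in the borderline case of equality one invokes the strict version of the bound, or a limiting argument). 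With $P>0$, the operator $\sigma_+:=\tfrac12 P$ is $T_A$-invariant and positive definite, so Theorem~\ref{thm:Karnas} yields a separable decomposition $\sigma_+=\sum_k q_k\,\rho^A_k\otimes\rho^{A.O.}_k$.

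The second ingredient I would extract from the norm hypothesis is that $Y:=P^{-1/2}MP^{-1/2}$ is Hermitian with $\|Y\|\le\|P^{-1}\|\,\|M\|\le 1$, so that $P\pm M\ge 0$; in particular $\sigma^{T_A}\ge 0$, i.e. $\sigma$ is already PPT across this cut, and $\sigma=\tfrac12 P^{1/2}(\id+Y)P^{1/2}$ sits at $T_A$-distance at most $1$ from $\sigma_+$ in the $P$-induced geometry. The core of the proof is then to upgrade this PPT/closeness property to genuine separability by transferring the separable decomposition of $\sigma_+$ to $\sigma$: following the technique of Ref.~\cite{Kraus+3.PRA.2000}, one realizes $\sigma$ from $\sigma_+$ by a positivity- and separability-preserving deformation acting only on Alice's qubit (an average of local filtering branches built from $Y$), the admissibility of each local branch being precisely the condition $\|Y\|\le 1$ that keeps every single-qubit Alice factor a valid positive state.

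The main obstacle is exactly this last step: generating the $\sigma_y^A$-type antisymmetric component $M$ while preserving separability. Since no single local congruence on the qubit can produce an arbitrary such component from a $T_A$-invariant state (a fixed $SL(2,\mathbb{C})$ action on Alice can match only three of the four Pauli sectors), one must build the deformation as a mixture of local operations and verify that $\|P^{-1}\|\,\|M\|\le 1$ keeps all branches positive. The positivity of $P$ and the equivalence $\|Y\|\le 1\Leftrightarrow P\pm M\ge 0$ are routine; it is the separability transfer, where the special $2\times(N-1)$-qubit structure (Alice being a qubit, so that $T_A$-invariance means absence of the $\sigma_y^A$ sector) and Theorem~\ref{thm:Karnas} enter essentially, that carries the real weight of the argument.
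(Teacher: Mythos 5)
First, note that the paper itself offers no proof of this corollary --- it is imported verbatim from Ref.~\cite{Kraus+3.PRA.2000} --- so your argument has to stand on its own. Your preparatory steps are fine: setting $P=\sigma+\sigma^{T_A}$, $M=\sigma-\sigma^{T_A}$, arguing $P>0$, and deducing $\|P^{-1/2}MP^{-1/2}\|\le 1$, hence $P\pm M\ge 0$ and in particular $\sigma^{T_A}\ge 0$. But the proof stops exactly where the content is. ``Transferring the separable decomposition of $\tfrac12 P$ to $\sigma$ by an average of local filtering branches built from $Y$'' is not an argument: $Y=P^{-1/2}MP^{-1/2}$ is a \emph{global} operator, congruence by $P^{1/2}$ is not a local operation and need not preserve separability, and --- as you yourself observe --- no mixture of single-qubit congruences on Alice applied to a $T_A$-invariant state produces an arbitrary $\sigma_y^A$-sector. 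What you have actually established is that $\sigma$ is PPT across the cut, which for $N\ge 4$ does not imply bi-separability, and you explicitly concede that the step which would close the gap ``carries the real weight of the argument.'' As it stands, the proposal is incomplete.

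The missing step is elementary and needs no filtering. Since Alice is a qubit, $M^{T_A}=-M$ forces $M=\sigma_y^A\otimes 2D$ with $D$ Hermitian on the remaining $N-1$ qubits. Writing $D=D_+-D_-$ (positive and negative parts, $|D|=D_++D_-$) gives
\[
\sigma=\Bigl[\tfrac12 P-\id\otimes|D|\Bigr]+\Bigl[(\id+\sigma_y)\otimes D_+ +(\id-\sigma_y)\otimes D_-\Bigr].
\]
The second bracket equals $2|y_+\rangle\langle y_+|\otimes D_+ + 2|y_-\rangle\langle y_-|\otimes D_-$ and is manifestly separable across $1_A:(N-1)$; the first bracket is $T_A$-invariant and is positive because $\tfrac12 P\ge \tfrac{1}{2\|P^{-1}\|}\,\id\ge\tfrac12\|M\|\,\id=\|D\|\,\id\ge\id\otimes|D|$, so Theorem~\ref{thm:Karnas} applies to it and $\sigma$ is a sum of two separable operators. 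This is the argument of Ref.~\cite{Kraus+3.PRA.2000}. A final caveat on your step 1: the boundary case cannot be rescued by ``a limiting argument''--- the two-qubit singlet has $P$ full rank but indefinite, $\|P^{-1}\|\,\|M\|=1$, and is maximally entangled --- so positivity of $P$ genuinely requires either strict inequality in the hypothesis or must be assumed; your instinct that something is delicate there was correct, and the fix is to strengthen the hypothesis, not the argument.
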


Note that the present results  are  clearly independent since they
involve matrices $\tilde\sigma_A$ and related ones. Note that the corollary \ref{cor:Karnas} leads also to an obvious 
$(N-1)$-depth result, in analogy to the theorem  (\ref{thm:lewenstein2}), see below.
\begin{thm}
\label{thm:lewenstein2many} {\bf (Sufficient $(N-1)$-depth criterion 3)}
Let $\rho =\Lambda_{\alpha,\beta}^{-1}(\sigma)\ge 0 $, i.e.,
\begin{equation}\label{MainResult:1}
\frac{1}{\alpha^2-\beta^2}\left[\alpha \sigma -\beta
\tilde\sigma_A - \frac{(\alpha-\beta)}{2^N+\alpha +\beta} {\rm
Tr}(\sigma) \id\right]\ge 0.
\end{equation}
Then $\sigma$ has depth $(N-1)$.
\end{thm}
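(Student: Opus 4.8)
The plan is to recognize that the displayed positivity condition is nothing but the statement $\rho:=\Lambda_{\alpha,\beta}^{-1}(\sigma)\ge 0$, and then to read off the conclusion from the results already assembled. First I would verify the inversion: solving $\sigma={\rm Tr}(\rho)\id+\alpha\rho+\beta\tilde\rho_A$ for $\rho$ reproduces exactly the left-hand side displayed in the theorem. This is the same computation performed for Theorem \ref{thm:lewenstein2}, and it is valid whenever $\alpha^2\ne\beta^2$ (so that the prefactor $1/(\alpha^2-\beta^2)$ makes sense), using that applying $\tilde{\,\cdot\,}_A$ twice returns the identity; the parameters must stay in the admissible region $\alpha\ge\max[-1,\beta/2-1]$ and $\beta\ge\max[-1,\alpha/2-1]$ required by Theorem \ref{thm:lewenstein1}. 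Thus the hypothesis amounts to saying that $\sigma$ possesses a positive preimage $\rho$ under $\Lambda_{\alpha,\beta}$.

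Next I would invoke the general Theorem \ref{thm:general} with $S=\{\rho\ge 0\}$ and $S'$ the set of states bi-separable across the cut $1_A:(N-1)_{A.O.}$. The required inclusion $\Lambda_{\alpha,\beta}(S)\subset S'$ is precisely the content of Theorem \ref{thm:lewenstein1}: every positive operator is sent by $\Lambda_{\alpha,\beta}$ to a state bi-separable in $1_A:(N-1)$. Hence $\Lambda_{\alpha,\beta}^{-1}(\sigma)\ge 0$ forces $\sigma\in S'$, i.e. $\sigma$ is bi-separable in that partition.

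Finally I would record that bi-separability across a single $1:(N-1)$ cut is, by definition, membership in $\Sigma_{N-1}$: such a $\sigma$ decomposes as $\sigma=\sum_k p_k\,\ket{a_k}\!\bra{a_k}_A\otimes\ket{\psi_k}\!\bra{\psi_k}_{A.O.}$, in which no pure factor involves more than $N-1$ genuinely entangled parties, so $\sigma$ has entanglement depth $N-1$. I expect no real obstacle here: the statement is simply the depth-language reformulation of Theorem \ref{thm:lewenstein2}, and its entire force is inherited from Theorems \ref{thm:lewenstein1} and \ref{thm:general}. The only point requiring care is the bookkeeping of the admissible parameter range, so that both the inverse map and the prefactor $1/(\alpha^2-\beta^2)$ are well defined; as anticipated in the surrounding discussion, one could alternatively reach the same depth-$(N-1)$ conclusion through the norm criterion of Corollary \ref{cor:Karnas}.
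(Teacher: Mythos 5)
Your proposal is correct and takes essentially the same route as the paper: the theorem is just the depth-language restatement of the ``Sufficient $(N-1)$-depth criterion 2'' result, obtained by combining the explicit inverse of $\Lambda_{\alpha,\beta}$ with Theorem \ref{thm:general} and Theorem \ref{thm:lewenstein1}, and the paper supplies no further argument beyond this. Your verification of the inversion formula and your bookkeeping of the admissible parameter range (and of $\alpha^2\neq\beta^2$) simply make explicit what the paper leaves implicit.
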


Let us now present the strongest theorem of this section, which
involves Breuer-Hall unitary operators on the both sides of Alice
and Bob:

\begin{thm}
\label{thm:lewenstein2} Let ${\bf p} =\{\alpha, \beta, \gamma,
\delta\}$ and $\Lambda_{\bf p}(\rho)= {\rm Tr}(\rho)\id + \alpha
\rho + \beta \tilde\rho_A + \gamma \tilde\rho_{A.O.}  + \delta
\tilde\rho$ be the family of maps; let there exist $0\le a, b\le
1$, $a+b\le 1$ such that the parameters fulfill the four conditions:
\begin{subequations}\label{Eq:cond.alpha-delta.a-b}
	\begin{align}
	\alpha &\ge \beta/2-a, \\
	\beta  &\ge \alpha/2-a, \\
	\gamma &\ge \delta/2-b, \\
    \delta &\ge \gamma/2-b,
	\end{align}
\end{subequations}
and $\alpha\ge -1$, $\beta\ge -1$, $\gamma\ge -1$, and $\delta\ge
-1$. Then $\rho\ge 0$ $\Rightarrow$ $\sigma = \Lambda_{\bf
p}(\rho)\in \Sigma$, i.e. $\sigma$ is separable.
\end{thm}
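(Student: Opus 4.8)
The plan is to follow the template established in Theorems~\ref{thm:barnum} and \ref{thm:lewenstein1} and reduce the claim to a low-dimensional positivity check. Since $\Lambda_{\bf p}$ is linear and the set of separable states is convex, it suffices to prove the implication for pure inputs $\rho=|\Psi\rangle\langle\Psi|$, the general case following by taking convex combinations. For a pure state I would use the Schmidt decomposition across the $1_A:(N-1)_{A.O.}$ cut, $|\Psi\rangle=\lambda_0|0\rangle_A|0\rangle_{A.O.}+\lambda_1|1\rangle_A|1\rangle_{A.O.}$, which has rank at most $2$. Because $\sigma_2^A$ acts only on Alice's qubit and the partial transpose $T_A$ does not touch the $A.O.$ factors, the operators $\rho$ and $\tilde\rho_A$ remain supported on the effective two-qubit block spanned by $\{|0\rangle,|1\rangle\}_A\otimes\{|0\rangle,|1\rangle\}_{A.O.}$, so that in the Alice--Bob (both qubit) setting the whole problem collapses to a $2\otimes2$ one, where PPT is necessary and sufficient for separability~\cite{3Horodecki.PLA.1996}.

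The key structural observation I would exploit is that the four operators entering $\Lambda_{\bf p}$ are paired by the two Breuer--Hall transformations: setting $\mu:=\tilde\rho_{A.O.}$ one has ${\rm Tr}(\mu)={\rm Tr}(\rho)$ and $\tilde\rho=\tilde\mu_A$, so that the map splits as
\begin{align}
\Lambda_{\bf p}(\rho)=&\;\big[a\,{\rm Tr}(\rho)\id+\alpha\rho+\beta\tilde\rho_A\big]\nonumber\\
&+\big[b\,{\rm Tr}(\mu)\id+\gamma\mu+\delta\tilde\mu_A\big]\nonumber\\
&+(1-a-b)\,{\rm Tr}(\rho)\,\id .
\end{align}
The third summand is a nonnegative multiple of the identity, hence separable, and this is precisely where the budget constraint $a+b\le1$ is consumed. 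Each of the first two brackets has exactly the form treated in Theorem~\ref{thm:lewenstein1}, but with the identity weight rescaled from $1$ to $a$ (respectively $b$); the pairs $\alpha\ge\beta/2-a,\ \beta\ge\alpha/2-a$ and $\gamma\ge\delta/2-b,\ \delta\ge\gamma/2-b$ are the corresponding rescalings of the single-sided Breuer--Hall separability window, while $\alpha,\beta,\gamma,\delta\ge-1$ enforce overall positivity. I would then carry out the explicit computation: insert the Schmidt form, evaluate the output and its partial transpose as matrices parametrized by $\lambda_0,\lambda_1$, and verify that both are positive semidefinite exactly on the stated region.

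The main obstacle, and the reason this statement is genuinely more delicate than Theorem~\ref{thm:lewenstein1}, is that the intermediate operator $\mu=\tilde\rho_{A.O.}$ is the image of a Breuer--Hall map that is positive but not completely positive, and is therefore generally \emph{not} positive semidefinite. Consequently the second bracket cannot be treated as an independent application of a positive map to a bona fide state, and the two Breuer--Hall contributions must be controlled jointly rather than term by term. I expect the crux to be showing that the two separability windows can be met simultaneously against a single identity reservoir of total weight one, which is exactly what $a+b\le1$ encodes, and that the joint positivity and PPT conditions of the assembled operator collapse to the four stated inequalities supplemented by $\alpha,\beta,\gamma,\delta\ge-1$. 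As a consistency check, the construction must reduce to Theorem~\ref{thm:lewenstein1} in the limit $b=0$, $a=1$, $\gamma=\delta=0$, which indeed recovers $\alpha\ge\beta/2-1$, $\beta\ge\alpha/2-1$.
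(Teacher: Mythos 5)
Your decomposition is exactly the paper's: it splits $\id = a\id + b\id + (1-a-b)\id$ and applies Theorem~\ref{thm:lewenstein1} (with the identity weight rescaled to $a$, resp.\ $b$) to the pair $(\alpha,\beta)$ acting on $\rho$ and the pair $(\gamma,\delta)$ acting on the other Breuer--Hall branch, the leftover $(1-a-b)\,\mathrm{Tr}(\rho)\id$ being trivially separable. The obstacle you flag at the end --- that $\mu=\tilde\rho_{A.O.}$ is not positive semidefinite, so the second bracket is not Theorem~\ref{thm:lewenstein1} applied to a state --- is real as you have set it up, but it does not require a joint treatment of the two brackets: simply base the second pair on $\nu:=\tilde\rho$ (the fully conjugated operator), which \emph{is} positive since $\nu = V\rho^{T}V^{\dagger}$ with $\rho^{T}\ge 0$; one checks $\tilde\nu_A=\tilde\rho_{A.O.}$ and $\mathrm{Tr}(\nu)=\mathrm{Tr}(\rho)$, so the second bracket is $b\,\mathrm{Tr}(\nu)\id+\delta\nu+\gamma\tilde\nu_A$, and the conditions $\gamma\ge\delta/2-b$, $\delta\ge\gamma/2-b$ are symmetric under $\gamma\leftrightarrow\delta$, so Theorem~\ref{thm:lewenstein1} applies verbatim. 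One further inaccuracy: your claim that everything collapses to a single $2\otimes 2$ block is only true for the $(\alpha,\beta)$ bracket; the operators $\tilde\rho_{A.O.}$ and $\tilde\rho$ involve $V_{A.O.}$, which may map the Schmidt vectors of $|\Psi\rangle$ out of their span on the $A.O.$ side --- this is precisely the paper's case distinction i)--iii). This is harmless for your argument (each bracket is separable in its own effective block, and a sum of operators separable across the $1_A:(N-1)$ cut remains so), but it is why the paper remarks that the stated conditions are conservative and correspond to the worst case iii). Finally, note the paper's conclusion should be read as bi-separability across the $1_A:(N-1)$ cut (as in Theorem~\ref{thm:lewenstein1}), which is what your argument, like the paper's, actually delivers.
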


\begin{proof}
The proof is similar to the proof of Theorem~\ref{thm:lewenstein1},
but more complex and technical. Again it is enough to prove for
pure states, $\rho=|\Psi\rangle\langle \Psi|$ and, without loosing
generality, take $|\Psi\rangle =\lambda_0 |0\rangle\otimes
|0\rangle + \lambda_1|1\rangle\otimes |1\rangle$. But, now we have
to consider three cases: i) when the Breuer-Hall unitary $V$ acts
as $\sigma_2$ on Bob's subspace spanned by $|0\rangle$ and
$|1\rangle$; ii) when $V$ transforms $|0\rangle$ and $|1\rangle$
to $|2\rangle$ and $|3\rangle$; iii) when $V$ transforms
$|0\rangle$ and $|1\rangle$ to orthogonal vectors in a subspace
spanned by $|0\rangle$ and  $|2\rangle$, and $|1\rangle$ and
$|3\rangle$, respectively.

The conditions $\alpha\ge -1$, $\beta\ge -1$, $\gamma\ge -1$, and
$\delta\ge -1$ follow again from the analysis of the case when
$\rho$ is a projector on a product state, say
$|0\rangle\otimes|0\rangle$, so that $\tilde\rho_A$ is a projector
on $|1\rangle\otimes|0\rangle$, $\tilde\rho_B$ on
$|0\rangle\otimes|\tilde 1\rangle$, and $\tilde\rho$ on
$|1\rangle\otimes|\tilde 1\rangle$, where $|\tilde 1\rangle$ is a
vector orthogonal to $|0\rangle$. To derive the conditions
in Eq.~\eqref{Eq:cond.alpha-delta.a-b}, we split $\id= a\id + b\id + (1-a-b)\id$
and apply the result of Theorem \ref{thm:lewenstein1} to $a\id + \alpha\rho
+\beta\tilde \rho_A$, and $(1-a)\id + \gamma\tilde\rho_{A.O.}
+\delta\tilde\rho$. Clearly, this estimate of the region of
parameters, where $\Lambda_{\bf p}(\rho)$ is separable, is very
conservative, and probably can be improved significantly.
\end{proof}

The conditions above  are sufficient for separability and
correspond to the case iii), which probably is not the most
demanding, but we were not able to find weaker sufficient
conditions, i.e., the largest allowed regions of the parameters.
Case i) leads to the less demanding restrictions
that follow from the above conditions, when we set $a+b=1$.
 Finally, case
ii) is the least restrictive -- indeed it allows to exceed the
restriction $a+b\le 2$ and reach $a=b=1$.
Note that the proof of the Theorem \ref{thm:lewenstein1} is just
the same,  when we set $\gamma=\delta=0$, while  the proof of Theorem~\ref{thm:lewenstein2}, as discussed above, is more involved.

Of course, there exists a generalization of the above theorems to the case of mixtures: 
\begin{cor}\label{cormany2}
Let $\Lambda^i_{\alpha_i,\beta_i,\gamma_i, \delta_i}(\rho)= {\rm
Tr}(\rho)\id + \alpha_i\rho + \beta_i \tilde\rho_i+ \gamma_i \tilde \rho_{A.O.\ than\ i}+ \delta_i \tilde \rho$ be a family of
maps. Let $\Lambda(\rho)=\sum_{i=1}^Np_i\Lambda^i_{\alpha_i,\beta_i}(\rho)$, with $p_i\ge 0$, and $\sum_{i=1}^N p_i=1$. 
Let $\alpha_i$, $\beta_i$, $\gamma_i$ and $\delta_i$ fulfill the conditions of the theorem (\ref{thm:lewenstein2}). Then $\rho\ge 0$ $\Rightarrow$
$\Lambda(\rho)=:\sigma\in \Sigma_{N-1}$, i.e., $\sigma$ has entanglement depth $N-1$
(is a mixture of states that are bi-separable in the partition $1^i:(N-1)^{A.O.}$).
\end{cor}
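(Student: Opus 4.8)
The plan is to reduce the statement term by term to the single-map Theorem~\ref{thm:lewenstein2} and then close under convex mixing via the definition of entanglement depth. First I would fix an index $i$ and apply Theorem~\ref{thm:lewenstein2} after relabelling the qubits so that the $i$-th qubit plays the role of Alice and the remaining $N-1$ qubits play the role of All Others. Under this identification the individual map $\Lambda^i_{\alpha_i,\beta_i,\gamma_i,\delta_i}(\rho)={\rm Tr}(\rho)\id+\alpha_i\rho+\beta_i\tilde\rho_i+\gamma_i\tilde\rho_{A.O.\ than\ i}+\delta_i\tilde\rho$ is precisely of the form treated there, with the quadruple $\{\alpha,\beta,\gamma,\delta\}$ taken to be $\{\alpha_i,\beta_i,\gamma_i,\delta_i\}$. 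Since by hypothesis each quadruple obeys the conditions \eqref{Eq:cond.alpha-delta.a-b} together with $\alpha_i,\beta_i,\gamma_i,\delta_i\ge -1$, Theorem~\ref{thm:lewenstein2} yields $\rho\ge 0\Rightarrow\Lambda^i_{\alpha_i,\beta_i,\gamma_i,\delta_i}(\rho)=:\sigma_i$ is bi-separable in the partition $1^i:(N-1)^{A.O.}$, where the overall normalization factor ${\rm Tr}(\sigma_i)$ is irrelevant as it affects neither positivity nor separability.

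Next I would translate bi-separability in the cut $1^i:(N-1)$ into an entanglement-depth statement. Each $\sigma_i$ is, by definition, a mixture of pure product vectors $|\phi\rangle_i\otimes|\chi\rangle_{A.O.}$ whose largest genuinely entangled factor involves at most $N-1$ parties; hence $\sigma_i\in\Sigma_{N-1}$. The concluding step is the mixing argument: writing each $\sigma_i=\sum_k q_{ik}|\Psi_{ik}\rangle\langle\Psi_{ik}|$ with every $|\Psi_{ik}\rangle$ carrying only entangled blocks of size $\le N-1$, the convex combination $\sigma=\Lambda(\rho)=\sum_i p_i\sigma_i=\sum_{i,k}p_i q_{ik}|\Psi_{ik}\rangle\langle\Psi_{ik}|$ is again a mixture of such vectors, so $\sigma\in\Sigma_{N-1}$, i.e.\ $\sigma$ has entanglement depth $N-1$. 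Equivalently, one simply invokes the convexity of $\Sigma_{N-1}$.

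The substance here is the same "obvious" observation used in the analogous corollary built from Theorem~\ref{thm:lewenstein1}. The only point that genuinely needs care is that the $\sigma_i$ are bi-separable with respect to \emph{different} cuts $1^i:(N-1)$, so their mixture need not be bi-separable in any single fixed partition; what survives the mixing is the coarser depth-$(N-1)$ property, which is exactly the attribute preserved under convex combination. Thus the corollary itself presents no real obstacle. The genuine difficulty, as in the two-parameter case, lies not in proving this statement but in inverting $\Lambda(\rho)=\sum_i p_i\Lambda^i_{\alpha_i,\beta_i,\gamma_i,\delta_i}(\rho)$ so as to turn it into a usable sufficient $(N-1)$-depth criterion: a weighted sum of four-term Breuer--Hall maps with distinct parameters $\{\alpha_i,\beta_i,\gamma_i,\delta_i\}$ admits no closed-form inverse analogous to Eq.~\eqref{MainResult:1}, and that inversion is where I would expect the main technical work to reside.
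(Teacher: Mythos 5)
Your proof is correct and follows essentially the same route as the paper: apply the single-map Theorem~\ref{thm:lewenstein2} to each term $\Lambda^i_{\alpha_i,\beta_i,\gamma_i,\delta_i}(\rho)$ with qubit $i$ in the role of Alice, then observe that a convex mixture of states bi-separable across the various cuts $1^i:(N-1)$ lies in $\Sigma_{N-1}$ (the paper compresses this to one sentence, and likewise defers the real difficulty to inverting $\Lambda$). Your explicit remark that the different cuts only preserve the coarser depth-$(N-1)$ property, not bi-separability in any fixed partition, is exactly the point the paper's "obvious" relies on.
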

\begin{proof}
The proof is obvious: a mixture of bi-separable state in the partitions $1^i:(N-1)^{A.O.}$ has $(N-1)$-depth. The problem of using the above result for entanglement depth criteria lies in the difficulty of inverting the map $\Lambda(\rho)$. Below we present explicit calculations for the case of 3 qubits.
\end{proof}

For simplicity, we consider a map 
\begin{cor}\label{cormany3}
Let 
\begin{eqnarray} \Lambda(\rho)&= & \sigma = {\rm Tr}(\rho)\id \\
&+&p_A[\alpha_A\rho + \beta_A \tilde\rho_A+ \gamma_A \tilde \rho_{BC}+ \delta_A \tilde \rho]  \nonumber\\
&+&p_B[\alpha_B\rho + \beta_B \tilde\rho_B+ \gamma_B \tilde \rho_{AC}+ \delta_B \tilde \rho] \nonumber\\
&+&p_C[\alpha_C\rho + \beta_C\tilde\rho_C+ \gamma_C \tilde \rho_{AB}+ \delta_C\tilde \rho] ,\nonumber                                    \end{eqnarray}
with $p_i\ge 0$, and $\sum_{i=A,B,C} p_i=1$. 
Applying different transpositions and $\sigma_2$s, we get:
\begin{eqnarray}  
\tilde\sigma_A = {\rm Tr}(\rho)\id \\
&+&p_A[\alpha_A\tilde\rho_A + \beta_A \rho+ \gamma_A \tilde \rho+ \delta_A \tilde \rho_{BC}]  \nonumber\\
&+&p_B[\alpha_{B}\tilde\rho_A + \beta_B \tilde\rho_{AB}+ \gamma_B \tilde \rho_{C}+ \delta_B \tilde \rho_{BC}] \nonumber\\
&+&p_C[\alpha_C\tilde\rho_A + \beta_C\tilde\rho_{AC}+ \gamma_C \tilde \rho_{B}+ \delta_C\tilde \rho_{BC}] ,\nonumber                                    \end{eqnarray}
and so on. In this way we obtain 8 linear inhomogeneous equations for $\rho$, $\tilde\rho_A$, $\ldots$, which can be easily solved to find $\rho=\Lambda^{-1}(\sigma)$. We use also the relation ${\rm Tr}(\sigma)= [8+\sum_{i=A,B,C}p_i (\alpha_i+\beta_i + \gamma_i + \delta_i)]{\rm Tr}(\rho)$. Obviously, we have then: If $\sigma\ge 0$ and $\rho=\Lambda^{-1}(\sigma)\ge 0$, then $\sigma$ does not exhibit any genuine 3-qubit entanglement.

Let $\alpha_i$, $\beta_i$, $\gamma_i$ and $\delta_i$ fulfill the conditions of the theorem (\ref{thm:lewenstein2}). Then $\rho\ge 0$ $\Rightarrow$
$\Lambda(\rho)=:\sigma\in \Sigma_{N-1}$, i.e., $\sigma$ has entanglement depth $N-1$
(is a mixture of states that are bi-separable in the partition $1^i:(N-1)^{A.O.}$).
\end{cor}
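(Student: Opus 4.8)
The plan is to read $\Lambda$ as the convex combination $\Lambda=\sum_{i=A,B,C}p_i\,\Lambda^i$, where $\Lambda^i(\rho)={\rm Tr}(\rho)\id+\alpha_i\rho+\beta_i\tilde\rho_i+\gamma_i\tilde\rho_{A.O.}+\delta_i\tilde\rho$ is exactly the four-parameter Breuer-Hall map of Theorem~\ref{thm:lewenstein2} with qubit $i$ singled out; since $\sum_i p_i=1$, the three copies of the trace term add up to the single ${\rm Tr}(\rho)\id$ written in the corollary. First I would apply Theorem~\ref{thm:lewenstein2} to each summand separately: under the stated conditions on $(\alpha_i,\beta_i,\gamma_i,\delta_i)$, for every $\rho\ge 0$ the image $\Lambda^i(\rho)$ is separable across the cut $1^i:(N-1)^{A.O.}$ and therefore carries no genuine $N$-party entanglement. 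Because the set of states without genuine $N$-partite entanglement (i.e.\ of entanglement depth at most $N-1$) is convex, the mixture $\sigma=\sum_i p_i\Lambda^i(\rho)$ lies in it as well, which is precisely the claimed depth-$(N-1)$ conclusion.

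For the second half --- that $\sigma\ge 0$ together with $\Lambda^{-1}(\sigma)\ge 0$ certifies the absence of genuine $3$-qubit entanglement --- the statement is an instance of the general Theorem~\ref{thm:general}, provided the inverse is exhibited. To invert, I would first record the algebraic structure of the building blocks. Writing $\Theta_X(\rho):=\sigma_2^{X}\rho^{T_X}\sigma_2^{X}$ for each $X\subseteq\{A,B,C\}$ (so $\Theta_\emptyset=\id$, $\Theta_{\{A\}}(\rho)=\tilde\rho_A$, and $\Theta_{\{A,B,C\}}(\rho)=\tilde\rho$), a short calculation using $\sigma_2^2=\id$ and $\sigma_2^T=-\sigma_2$ shows that each $\Theta_X$ is an involution and that $\Theta_X\Theta_Y=\Theta_{X\triangle Y}$, so the eight maps $\{\Theta_X\}$ represent the group $(\mathbb{Z}_2)^{3}$. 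In these terms $\Lambda$ is the fixed linear combination $\Lambda(\rho)={\rm Tr}(\rho)\id+\sum_X c_X\Theta_X(\rho)$, whose coefficients $c_X$ are the evident combinations of the $p_i\alpha_i,\ p_i\beta_i,\dots$; the sample expansion of $\tilde\sigma_A=\Theta_{\{A\}}(\sigma)$ quoted in the corollary is just the image of $\sigma=\Lambda(\rho)$ under $\Theta_{\{A\}}$, read off via $\Theta_{\{A\}}\Theta_X=\Theta_{\{A\}\triangle X}$.

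I would then apply every $\Theta_Y$ to $\sigma=\Lambda(\rho)$, producing the eight inhomogeneous equations $\Theta_Y(\sigma)={\rm Tr}(\rho)\id+\sum_X c_{Y\triangle X}\,\Theta_X(\rho)$ in the eight unknowns $\Theta_X(\rho)$; the scalar ${\rm Tr}(\rho)$ is fixed by the trace relation ${\rm Tr}(\sigma)=[8+\sum_i p_i(\alpha_i+\beta_i+\gamma_i+\delta_i)]\,{\rm Tr}(\rho)$, so its contribution can be moved to the left, giving $\tau_Y:=\Theta_Y(\sigma)-{\rm Tr}(\rho)\id=\sum_X c_{Y\triangle X}\,\Theta_X(\rho)$, a pure convolution over $(\mathbb{Z}_2)^3$. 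Such a system is diagonalized by the characters $\chi_S(X)=(-1)^{|S\cap X|}$: in the Fourier variables $\widehat{(\cdot)}_S=\sum_X(-1)^{|S\cap X|}(\cdot)_X$ it decouples into eight scalar relations $\widehat c_S\,\widehat{\Theta(\rho)}_S=\widehat\tau_S$, each solved by a single division provided $\widehat c_S=\sum_X(-1)^{|S\cap X|}c_X\neq0$. Inverting the transform returns $\rho=\Theta_\emptyset(\rho)=\Lambda^{-1}(\sigma)$ in closed form, and $\Lambda$ is invertible exactly when all eight eigenvalues $\widehat c_S$ are nonzero, i.e.\ for almost all ${\bf p}$, as the hypotheses of Theorem~\ref{thm:general} demand. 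The main obstacle is precisely this inversion step flagged in the remark before the corollary: although the $(\mathbb{Z}_2)^3$ structure makes the $8\times8$ system block-diagonal and hence ``easily solved'' in principle, assembling $\Lambda^{-1}$ explicitly and verifying that the eigenvalues $\widehat c_S$ stay bounded away from zero on the admissible parameter region --- so that $\Lambda^{-1}(\sigma)\ge0$ becomes a usable positivity criterion --- is where the genuine effort lies.
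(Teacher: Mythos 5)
Your argument matches the paper's: the forward direction is exactly the convexity-of-mixtures reasoning given for Corollary~\ref{cormany2} combined with Theorem~\ref{thm:lewenstein2} applied to each summand $\Lambda^i$, and the inverse is obtained from the same system of eight inhomogeneous linear equations in $\rho,\tilde\rho_A,\ldots$ together with the trace relation. Your observation that the operations $\Theta_X$ realize $(\mathbb{Z}_2)^3$, so that the system is a group convolution diagonalized by the characters $\chi_S$, is a useful sharpening of the paper's bare assertion that the equations ``can be easily solved,'' and it additionally makes explicit the genericity condition ($\widehat{c}_S\neq 0$ for all $S$) under which $\Lambda^{-1}$ exists.
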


\section{\label{Other proposals} Sufficient criteria for compatibility}

In this section, we demonstrate how similar linear map methods can be used in the space of correlations to construct sufficient criteria for compatibility of a given set of correlations with certain models. In the first subsection, we consider $S'$ to be the set of statistics compatible with a global separable state. We finish by deriving sufficient criteria for compatibility with local-hidden variable models.

\subsection{Sufficient criteria for compatibility with a global entangled state}

Consider a system of $N$-qubits from which the first and second moments of the collective spin, $\mathbf{J} = \sum_{i=1}^N\boldsymbol{\sigma}^{(i)}/2$, are being inferred. Here we denote $\boldsymbol{\sigma}=(\sigma_x, \sigma_y, \sigma_z)$ the vector of Pauli matrices. Concretely, we focus on the unpolarized case $\langle \mathbf{J} \rangle = \mathbf{0}$ and assume knowledge of the second moments  $\{\langle J_a^2 \rangle\}_{a\in \{x,y,z\}}$ for three orthogonal directions $\{x,y,z\}$. Based on these expectation values, we want to verify compatibility with a global $N$-qubit separable state. Remarkably, there is a closed-form answer to this question, which we remind the reader in the following Lemma:

\begin{lem}\label{Toth}
\cite{Toth}
If the expectation values  $\{\langle J_a \rangle, \langle J_a^2 \rangle\}_{a\in \{x,y,z\}}$ fulfill the inequalities
\begin{equation}
    \label{Toth}
    \begin{array}{crl}
   \sum_{a\in\{x,y,z\}}  \langle J_a^2\rangle \leq N(N+2)/4 \\
      \sum_{a\in\{x,y,z\}}  ( \Delta J_a)^2 \geq N/2 \\
    \langle J_a^2 \rangle +  \langle J_b^2 \rangle -(N-1)(\Delta J_c)^2 \leq N/2\\
    (N-1)((\Delta J_a)^2 +(\Delta J_b)^2)-\langle J_c^2\rangle\geq N(N-2)/4,
    \end{array}
\end{equation}
where $(\Delta J_a)^2 = \langle J_a^2\rangle -\langle J_a\rangle^2$ is the variance and in the last two inequalities $\{a,b,c\}$ vary over all the permutations of $\{x,y,z\}$. Then, in the thermodynamic limit $N\rightarrow \infty$ there exists a global separable state $\rho_{\rm SEP}\in \mathcal{B}((\mathbb{C}^2)^{\otimes N})$ exhibiting $\{\langle J_a \rangle_{{\rm SEP}}, \langle J_a^2 \rangle_{{\rm SEP}}\}_{a\in \{x,y,z\}}$ arbitrary close to $\{\langle J_a \rangle ,\langle J_a^2 \rangle\}_{a\in \{x,y,z\}}$.

\end{lem}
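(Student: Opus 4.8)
The plan is to establish the statement as the constructive half of a characterization of the collective-spin second moments attainable by separable $N$-qubit states. Restricting, as in the text, to the unpolarized slice $\langle\mathbf J\rangle=\mathbf 0$, let $\mathcal M_N\subset\mathbb R^3$ be the set of triples $(\langle J_x^2\rangle,\langle J_y^2\rangle,\langle J_z^2\rangle)$ realized by some $\rho\in\Sigma$, and let $\mathcal P$ be the region cut out by the four inequalities in the statement (the last two ranging over permutations of $\{x,y,z\}$, and with $(\Delta J_a)^2=\langle J_a^2\rangle$ in this slice). Since $\langle J_a^2\rangle=\mathrm{Tr}(\rho J_a^2)$ is linear in $\rho$ and $\Sigma$ is convex, $\mathcal M_N$ is a convex set, and so is its closure. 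The Lemma amounts to $\mathcal P\subseteq\overline{\mathcal M_N}$ as $N\to\infty$; the strategy is therefore to realize every extreme point of $\mathcal P$ by an explicit separable state and then invoke convexity to fill the interior, so that $\mathcal P=\mathrm{conv}(\text{extreme points})\subseteq\overline{\mathcal M_N}$.

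First I would record the moments of a pure product state $\bigotimes_{i=1}^N|\mathbf s^{(i)}\rangle$ with Bloch unit vectors $|\mathbf s^{(i)}|=1$,
\begin{equation}
\langle J_a\rangle=\tfrac12{\textstyle\sum_i} s_a^{(i)},\qquad
\langle J_a^2\rangle=\tfrac14\Big[N+\big({\textstyle\sum_i} s_a^{(i)}\big)^2-{\textstyle\sum_i}(s_a^{(i)})^2\Big],
\end{equation}
from which $\sum_a(\Delta J_a)^2=\tfrac14\big(3N-\sum_i|\mathbf s^{(i)}|^2\big)=N/2$ for \emph{every} pure product state. Thus the second inequality is saturated exactly on this family; imposing $\sum_i\mathbf s^{(i)}=\mathbf 0$ and choosing the directions so that $\tfrac1N\sum_i(s_a^{(i)})^2\to m_a$ (possible only in the limit $N\to\infty$, with $\sum_a m_a=1$) realizes the whole low-variance facet $\langle J_a^2\rangle\simeq\tfrac N4(1-m_a)$, including vertices such as $(N/4,N/4,0)$ obtained exactly from equal halves pointing along $\pm z$.

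Next I would reach the three remaining facets, which sit at the $O(N^2)$ scale, using the separable mixtures $\rho=\int\rho_{\mathbf s}^{\otimes N}\,d\nu(\mathbf s)$ with $\nu$ a zero-mean distribution on the Bloch ball. These give $\langle\mathbf J\rangle=\mathbf 0$ and $\langle J_a^2\rangle=\tfrac N4+\tfrac{N(N-1)}4\int s_a^2\,d\nu$, so the second moments are governed by the triple $m_a=\int s_a^2\,d\nu$ ranging over the simplex $\{m_a\ge0,\ \sum_a m_a\le1\}$ (attained e.g.\ with antipodal point masses $\tfrac12(\delta_{\mathbf s}+\delta_{-\mathbf s})$). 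A direct substitution then shows that $\sum_a m_a=1$ saturates the first inequality; $\nu$ supported in a coordinate plane $m_c=0$ saturates the third; and $\nu$ concentrated on a single axis $m_c=1$---the separable state $\tfrac12\big[(|{\uparrow}_c\rangle\langle{\uparrow}_c|)^{\otimes N}+(|{\downarrow}_c\rangle\langle{\downarrow}_c|)^{\otimes N}\big]$---saturates the fourth, hitting the vertex $(N/4,N/4,N^2/4)$ of $\mathcal P$ exactly for every finite $N$. Taking convex combinations of these two families, an operation that preserves separability and acts affinely on the moments, then covers $\mathcal P$, with the subleading $O(N)$ discrepancies absorbed in the thermodynamic limit to yield the claimed ``arbitrary closeness''.

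The main obstacle is the completeness step: verifying that the convex hull of the two families leaves no gap inside $\mathcal P$ and that its supporting hyperplanes coincide, asymptotically, with precisely these four inequalities. This is delicate because $\langle J_a^2\rangle$ spans two different scales---order $N$ on the low-variance facet and order $N^2$ on the others---so one must confirm that every interior point at each intermediate scale is produced by a suitable mixing weight, and control the $o(N^2)$ corrections uniformly so that the rescaled regions converge. Establishing this convergence of $\mathcal M_N$ to $\mathcal P$, rather than merely producing states near isolated vertices, is the quantitative heart of the argument and is exactly where the thermodynamic limit and the ``arbitrarily close'' clause of the statement become indispensable.
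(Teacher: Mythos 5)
The paper does not actually prove this Lemma: it is imported verbatim, with only the citation to T\'oth, Knapp, G\"uhne and Briegel [Phys.\ Rev.\ A \textbf{79}, 042334 (2009)], so there is no in-paper proof to compare against. Your blind reconstruction is, in substance, the completeness argument of that reference: compute the collective moments of product states, exhibit separable states on the extreme points of the region cut out by the inequalities, and let convexity (the moments being affine in $\rho$ and $\Sigma$ being convex) fill the interior. Your computations are correct, including the identity $\sum_a(\Delta J_a)^2=N/2$ for all pure product states and the moment formula $\langle J_a^2\rangle=\tfrac N4+\tfrac{N(N-1)}4 m_a$ for zero-mean mixtures of identical product states. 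The one place where you overstate the difficulty is the final ``completeness step'': in the unpolarized slice the four inequality families define eight facets in $\mathbb R^3$, and a short check shows the feasible region is combinatorially an octahedron with exactly six vertices, namely $(N/4,N/4,N^2/4)$ and $(N/4,N/4,0)$ together with their coordinate permutations. Both families are hit \emph{exactly} by your constructions (the mixture $\tfrac12[(\ket{{\uparrow}_c}\!\bra{{\uparrow}_c})^{\otimes N}+(\ket{{\downarrow}_c}\!\bra{{\downarrow}_c})^{\otimes N}]$ for the first, equal halves along $\pm c$ for the second, the latter requiring $N$ even), so there is no two-scale gap to control: the convex hull of six realized vertices is the whole polytope. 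The thermodynamic limit and the ``arbitrarily close'' clause are really needed for the general (polarized) form of the statement, where the variances make the constraints nonlinear in $\rho$ and the T\'oth \emph{et al.} construction only matches the targeted first and second moments up to $O(1/N)$ corrections; since you restricted from the outset to $\langle\mathbf J\rangle=\mathbf 0$ (the only case the paper uses), your argument as written proves that special case essentially exactly rather than only asymptotically.
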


Now, we apply the reduction map to the underlying state $\Lambda_\alpha(\rho) = \mathrm{Tr}(\rho)\id + \alpha \rho$. Instead of transforming the second moment $\langle J_a^2 \rangle$, we take equivalently $\langle J_a^2 \rangle - N/4=\langle \mathcal{C}_a \rangle$, where $\langle \mathcal{C}_a \rangle = \sum_{i\neq j= 1}^N\langle \sigma^{(i)}_a\sigma^{(j)}_a \rangle/4$ is the correlation function. The action of the map on the state rescales the correlation function $\langle \mathcal{C}_a \rangle\mapsto \alpha \langle \mathcal{C}_a \rangle :=\langle \mathcal{C}_a \rangle_\alpha$ equally for any orientation $a\in\{x,y,z\}$. 

In order to derive a sufficient condition for compatibility with a separable state, we need to find the range of $\alpha$ such that the modified correlations $\{\langle \mathcal{C}_a\rangle_\alpha \}_{a\in\{x,y,z\}}$ do not violate any inequality \eqref{Toth}. Provided that the state is unpolarized $\langle\mathbf{J}\rangle = 0$, $\{(\Delta J_a)^2 = \langle J_a^2\rangle = \langle \mathcal{C}_a \rangle + N/4\}_{a\in\{x,y,z\}}$, the parameter $\alpha$ appears linearly in the modified inequalities \eqref{Toth}. Consequently, for fixed correlations $\{\langle \mathcal{C}_a\rangle \}_{a\in\{x,y,z\}}$ the extreme value of $\alpha$ may be found via the linear program (LP): 
\begin{equation}
    \label{LP_sep}
    \begin{array}{crl}
    \max_{ \alpha}&\alpha&\\
    \mbox{s.t.}&-N\leq 4\alpha \sum_{a\in\{x,y,z\}}  \langle \mathcal{C}_a\rangle &\leq N(N-1) \\
    &  4\alpha (\langle \mathcal{C}_a\rangle+\langle \mathcal{C}_b\rangle-(N-1)\langle\mathcal{C}_c\rangle)&\leq N(N-1)\\
    &  4\alpha((N-1)(\langle\mathcal{C}_a\rangle+\langle \mathcal{C}_b\rangle)-\langle \mathcal{C}_c\rangle)&\geq N(1-N)\\
    \end{array}
\end{equation}

Note that in the same LP \eqref{LP_sep} one can consider minimization instead of maximization to find the extreme value in the opposite direction. The range of $\alpha$ is all we need to formulate the sufficient condition: 

\begin{thm}
\label{thm:Toth_sufficient} {\bf (Compatibility with a global separable state)}
Let $\rho\in \mathcal{B}((\mathbb{C}^2)^{\otimes N})$ be a quantum state and consider the set of expectation values $\{\langle J_a^2\rangle, \langle J_a\rangle = 0\}_{a\in\{x,y,z\}}$ descending from it. Let $\alpha^*$ be the solution of LP (\ref{LP_sep}).
If $\alpha^* > 1$, then $\sigma:=\Lambda_\beta^{-1}(\rho)/\mathrm{Tr}[\Lambda_\beta^{-1}(\rho)] =((2^N+\beta)\rho - \mathbbm{1})/\beta$ admits a separable description in the thermodynamic limit $N\rightarrow\infty$ from the same set of observables for $\beta \in [2^N/(\alpha^*-1),\infty)$
\end{thm}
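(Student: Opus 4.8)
The plan is to track how the collective-spin moments transform under the normalized inverse reduction map and then read off the admissible window of $\beta$ directly from the linear program \eqref{LP_sep}. First I would record the explicit inverse: using the form of Corollary~\ref{cor:barnum1} with $\mathrm{Tr}(\rho)=1$ and renormalizing, one finds $\mathrm{Tr}[\Lambda_\beta^{-1}(\rho)]=1/(2^N+\beta)$ and hence
\begin{equation}
\sigma=\frac{\Lambda_\beta^{-1}(\rho)}{\mathrm{Tr}[\Lambda_\beta^{-1}(\rho)]}=\frac{(2^N+\beta)\rho-\id}{\beta}.
\end{equation}
In particular $\sigma$ is Hermitian with unit trace, which is all that is required, since the conclusion concerns only the expectation values feeding into Lemma~\ref{Toth} and not the positivity of $\sigma$.

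Next I would compute the relevant moments of $\sigma$. Writing $J_a^2=(N/4)\id+\mathcal{C}_a$ with $\mathcal{C}_a=\sum_{i\neq j}\sigma_a^{(i)}\sigma_a^{(j)}/4$, both $J_a$ and $\mathcal{C}_a$ are traceless, so the identity term in $\sigma$ contributes nothing and
\begin{equation}
\langle J_a\rangle_\sigma=0,\qquad \langle\mathcal{C}_a\rangle_\sigma=\frac{2^N+\beta}{\beta}\,\langle\mathcal{C}_a\rangle_\rho=:\alpha_{\mathrm{eff}}\,\langle\mathcal{C}_a\rangle_\rho.
\end{equation}
Thus passing from $\rho$ to $\sigma$ rescales every correlation by the common factor $\alpha_{\mathrm{eff}}=1+2^N/\beta$ while leaving the state unpolarized. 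This is exactly a uniform rescaling of the $\{\langle\mathcal{C}_a\rangle\}$, whose compatibility with a separable description is the quantity controlled by \eqref{LP_sep}.

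Finally I would match $\alpha_{\mathrm{eff}}$ against the feasible set of \eqref{LP_sep}. Since every constraint there is affine in $\alpha$ for fixed $\{\langle\mathcal{C}_a\rangle_\rho\}$, the feasible set is a single interval $[\alpha_{\min},\alpha^*]$; moreover $\alpha=0$ (the maximally mixed state) satisfies all constraints, so $\alpha_{\min}\le 0$. For $\beta>0$ one has $\alpha_{\mathrm{eff}}>1$, and the chain
\begin{equation}
\alpha_{\mathrm{eff}}\le\alpha^* \iff 1+\frac{2^N}{\beta}\le\alpha^* \iff \beta\ge\frac{2^N}{\alpha^*-1}
\end{equation}
(which requires $\alpha^*>1$) shows that for $\beta\in[2^N/(\alpha^*-1),\infty)$ the value $\alpha_{\mathrm{eff}}$ lies in $(1,\alpha^*]\subset[\alpha_{\min},\alpha^*]$. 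Hence the rescaled correlations $\{\langle\mathcal{C}_a\rangle_\sigma\}$, equivalently the moments $\{\langle J_a^2\rangle_\sigma,\langle J_a\rangle_\sigma=0\}$, obey all the inequalities of Lemma~\ref{Toth}, so by that Lemma there exists a separable state reproducing them in the limit $N\to\infty$, which is the claimed separable description.

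The main subtlety will be the bookkeeping around the renormalization: it forces $\alpha_{\mathrm{eff}}>1$ strictly, so the inverse map can only \emph{weaken} correlations, and the hypothesis $\alpha^*>1$ is precisely what makes the admissible $\beta$-window nonempty (with $\alpha_{\mathrm{eff}}\to 1^+$ as $\beta\to\infty$ and $\alpha_{\mathrm{eff}}=\alpha^*$ at the lower endpoint). I would also take care to argue that the LP feasible set is genuinely an interval, so that no gap can open between $1$ and $\alpha^*$, and that $\alpha=0$ feasibility yields $\alpha_{\min}\le 0$; the thermodynamic-limit qualifier is inherited verbatim from Lemma~\ref{Toth}.
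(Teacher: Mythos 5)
Your proposal is correct and follows essentially the same route as the paper's proof: you track the uniform rescaling of the traceless correlators under the normalized inverse map, identify the effective factor $\alpha_{\mathrm{eff}}=(2^N+\beta)/\beta$, and use the affine (hence interval) structure of the feasible set of LP~(\ref{LP_sep}) together with Lemma~\ref{Toth} to conclude, recovering $\beta\ge 2^N/(\alpha^*-1)$. Your added details --- the explicit normalization $\mathrm{Tr}[\Lambda_\beta^{-1}(\rho)]=1/(2^N+\beta)$, the feasibility of $\alpha=0$, and the remark that positivity of $\sigma$ is not needed for the stated conclusion --- merely make explicit what the paper leaves implicit.
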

\begin{proof}
Optimality of $\alpha$ implies that $\{\langle \mathcal{C}_a\rangle_{ \alpha^*} = \langle J_a^2 \rangle_{\alpha^*} - N/4 \}_{a\in\{x,y,z\}}$ do not violate any inequality \eqref{Toth} for all $\alpha \in [0, \alpha^*]$.  Consequently, according to Lemma \ref{Toth}, in the thermodynamic limit, it exists a separable state explaining the expectation values. Since the observables are fixed and $\{\langle \mathcal{C}_a\rangle\}_{a\in\{x,y,z\}}$ is generated from $\rho$, we just have to find the relation between $\alpha$ and $\beta$. For a given $\beta$, we have that $\{\langle \mathcal{C}_a\rangle_\alpha = \alpha\langle \mathcal{C}_a\rangle = \frac{2^N+\beta}\beta\langle \mathcal{C}_a \rangle \}_{a=\{x,y,z\}} $. Hence, $\beta = 2^N/(\alpha-1)$ yields the range $\beta \in [2^N/(\alpha^*-1),\infty)$.
\end{proof}

\subsection{Sufficient criteria for compatibility with a local-hidden-variable theory}

 The last set $S'$ we want to consider is that of the correlations that are compatible with a local hidden variable model (LHVM) \cite{Bell64}.
The set of LHVM correlations forms a polytope. Let us denote this polytope $\mathbbm{P}$. Membership of a set of correlators $\vec{P}(\vec{a}|\vec{x})$ in $\mathbbm{P}$ can be determined by the feasibility of the following LP:
\begin{equation}
    \label{eq:LP1}
    \begin{array}{crl}
    \min_{\vec{\lambda}}&0&\\
    \mbox{s.t.}&\sum_i \lambda_i \vec{v}_i& = \vec{P}(\vec{a}|\vec{x})\\
    &\sum_i \lambda_i&=1\\
    &\lambda_i &\geq 0,\\
    \end{array}
\end{equation}
where $\vec{v}_i$ enumerate all possible deterministic local strategies. If we take the no-signalling principle into consideration, in a Bell scenario with $n$ parties, $m$ measurement choices each with $d$ possible outcomes, a vertex $\vec{v}_i$ is given by $(1+m(d-1))^n-1$ coordinates, and $i$ ranges from $0$ to $d^{mn}-1$. The local deterministic strategies encoded in $i$ can be decoded by expressing the integer $i$ with digits $i_1, \ldots, i_n$ in base $d^m$, and then each $i_j$ as ${i_j}_1, \ldots, {i_j}_n$ as digits in base $d$. Then, ${i_j}_k$ indicates that party $j$ outputs the ${i_j}_k$-th outcome upon measuring the $k$-th observable, thus defining a deterministic local strategy.

Note that quantum correlations take the form $\vec{P}(\vec{a}|\vec{x}) = \mathrm{Tr}[\rho \Pi_{\vec{a}|\vec{x}}]$, where $\rho$ is some quantum state and $\Pi_{\vec{a}|\vec{x}}$ a POVM element. The image of the reduction map $\Lambda_\alpha$ on any $\rho$ shrinks the correlation vector $\vec{P}$ by a factor $\alpha$ and adds a small bias: $\mathrm{Tr}[\Lambda_\alpha(\rho) \Pi_{\vec{a}|\vec{x}}] = \mathrm{Tr}[(\mathrm{Tr}(\rho) \mathbbm{1}+ \alpha \rho)\Pi_{\vec{a}|\vec{x}}] = \mathrm{Tr}[\Pi_{\vec{a}|\vec{x}}] + \alpha \mathrm{Tr}[\rho \Pi_{\vec{a}|\vec{x}}]$. In the case of von Neumann, rank one projective measurements, $\mathrm{Tr}[\Pi_{\vec{a}|\vec{x}}] = 1$. If we take $\rho$ to be an $N-$partite $d$-level quantum state, then a proper normalization of $\Lambda_\alpha(\rho)$ maps $\vec{P}$ to $(\vec{1} + \alpha\vec{P})/(d^N+\alpha)$, where $\vec{1}$ is a vector of ones.

It is customary to take into consideration correlation functions instead of probabilities, to get rid of these inconvenient biases: In a Bell scenario with dichotomic measurements, one would normally label the measurement outcomes $\pm 1$ (and in more general scenarios, as $d$-th roots of unity \cite{ArnaultJPA}). For instance, for a bipartite scenario with dichotomic measurements labelled $\pm 1$, one would have $\langle A_x B_y\rangle := \sum_{a,b} (-1)^{a+b}P(ab|xy)$; \textit{i.e.,} the discrete Fourier transform of the correlators \cite{AugusiakNJP, SATWAP, MaxMaxMax}. In this case, the vector of correlators $\langle \vec{A}_{\vec{x}}^{(\vec{k})}\rangle$ maps to $\alpha \langle \vec{A}_{\vec{x}}^{(\vec{k})}\rangle$ since $\langle \vec{A}_{\vec{x}}^{(\vec{k})}\rangle_{\mathbbm{1}} = 0$ (we omit the correction for normalization of $\Lambda_\alpha(\rho)$ without loss of generality). Note that for binary outcomes the superindex $(\vec{k})$ is not necessary, and that whenever $k_i = 0$ this corresponds to a marginal correlator that does not involve the $i$-th party.

Hence, the condition for a set of correlators $\langle \vec{A}_{\vec{x}}^{(\vec{k})}\rangle$ to admit a LHVM interpretation reads
\begin{equation}
    \label{eq:LP2}
    \begin{array}{crl}
    \max_{\vec{\lambda}, \alpha}&\alpha&\\
    \mbox{s.t.}&\sum_i \lambda_i \vec{w_i}& = \alpha \langle \vec{A}_{\vec{x}}^{(\vec{k})}\rangle\\
    &\sum_i \lambda_i&=1\\
    &\lambda_i &\geq 0,\\
    \end{array}
\end{equation}
where $\vec{w}_i$ are the same $\vec{v}_i$ that have undergone the affine transformation given by the discrete Fourier transform. Naturally, in LP (\ref{eq:LP2}) one may as well consider $\min_{\vec{\lambda},\alpha} \alpha$ instead of the maximization to obtain the scaling factor in the opposite direction.

Along the lines of Theorem \ref{thm:Toth_sufficient}, we can similarly pose a sufficient condition for compatibility with a LHVM:

\begin{thm}
\label{thm:LHVM} {\bf (Compatibility with a local hidden variable model)}
Let $\rho$ be a quantum state and consider a set of traceless local observables $A_x^{{[i]}^{(k)}}$, yielding a vector of correlations
$\langle \vec{A}_{\vec{x}}^{(\vec{k})}\rangle_\rho$ on $\rho$. Let $\alpha^*$ be the solution of LP (\ref{eq:LP2}).
If $\alpha^* > 1$, then $\sigma:=\Lambda_\beta^{-1}(\rho)/\mathrm{Tr}[\Lambda_\beta^{-1}(\rho)] =((d^N+\beta)\rho - \mathbbm{1})/\beta$ admits an LHVM for the same set of observables for $\beta \in [d^N/(\alpha^*-1),\infty)$.
\end{thm}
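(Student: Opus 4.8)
The plan is to mirror the argument of Theorem~\ref{thm:Toth_sufficient}, simply replacing the explicit Tóth inequalities by membership in the local polytope $\mathbbm{P}$, and the rescaling of second moments by the rescaling of the Fourier-transformed correlators. First I would record what LP~\eqref{eq:LP2} actually computes: the set $\{\sum_i \lambda_i \vec{w}_i : \lambda_i\ge 0,\ \sum_i \lambda_i = 1\}$ is exactly the local polytope $\mathbbm{P}$ written in correlator (discrete-Fourier) coordinates, so feasibility of a given $\alpha$ is equivalent to $\alpha\langle\vec{A}_{\vec{x}}^{(\vec{k})}\rangle_\rho \in \mathbbm{P}$. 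Since $\mathbbm{P}$ is convex and contains the origin (the uniformly random, hence local, strategy maps to $\vec{0}$ in Fourier coordinates), feasibility of $\alpha^*$ forces feasibility of the entire segment, i.e. $\alpha\langle\vec{A}_{\vec{x}}^{(\vec{k})}\rangle_\rho\in\mathbbm{P}$ for every $\alpha\in[0,\alpha^*]$.

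Next I would compute the correlators carried by $\sigma$. Because each $A_x^{[i]^{(k)}}$ is traceless, the identity contribution to $\sigma=((d^N+\beta)\rho-\id)/\beta$ drops out of every expectation value, and a one-line computation gives $\langle\vec{A}_{\vec{x}}^{(\vec{k})}\rangle_\sigma = \frac{d^N+\beta}{\beta}\langle\vec{A}_{\vec{x}}^{(\vec{k})}\rangle_\rho$; the inverse reduction map thus \emph{amplifies} the correlators by the factor $(d^N+\beta)/\beta>1$. I would then demand that this amplified vector still lie in $\mathbbm{P}$, which by the first step holds as soon as $(d^N+\beta)/\beta\le\alpha^*$. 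Solving $d^N/\beta + 1\le\alpha^*$ for $\beta$ under the hypothesis $\alpha^*>1$ yields precisely $\beta\ge d^N/(\alpha^*-1)$, with the factor sweeping $(1,\alpha^*]$ as $\beta$ ranges over $[d^N/(\alpha^*-1),\infty)$ and tending to $1$ (so $\sigma\to\rho$) as $\beta\to\infty$. Hence $\langle\vec{A}_{\vec{x}}^{(\vec{k})}\rangle_\sigma\in\mathbbm{P}$ throughout the stated range, i.e. $\sigma$ admits an LHVM on these observables.

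The only genuinely delicate points, which I would handle with care rather than as routine calculation, are the normalization bookkeeping and the passage to correlators. One must verify that the affine discrete-Fourier change of coordinates sends the inhomogeneous action on raw probabilities, $\vec{P}\mapsto(\vec{1}+\alpha\vec{P})/(d^N+\alpha)$, to the clean homogeneous scaling $\langle\vec{A}_{\vec{x}}^{(\vec{k})}\rangle\mapsto\alpha\langle\vec{A}_{\vec{x}}^{(\vec{k})}\rangle$ used in LP~\eqref{eq:LP2}: namely that the identity term contributes $\langle\vec{A}_{\vec{x}}^{(\vec{k})}\rangle_{\id}=\vec{0}$ and that the common $\alpha$-dependent normalization scalar, multiplying every correlator equally, only rescales the ray and can be absorbed into $\alpha$. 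I expect the main obstacle to be exactly this consistency check between the probability picture (where the bias $\mathrm{Tr}[\Pi_{\vec{a}|\vec{x}}]$ appears) and the correlator picture (where it vanishes), together with confirming that $\alpha^*$ is finite and attained so the segment argument is legitimate; once those are settled, the geometric scaling argument closes immediately.
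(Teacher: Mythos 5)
Your proof is correct and follows essentially the same route as the paper's: establish that $\alpha\langle \vec{A}_{\vec{x}}^{(\vec{k})}\rangle_\rho\in\mathbbm{P}$ for all $\alpha\in[0,\alpha^*]$, compute $\langle \vec{A}_{\vec{x}}^{(\vec{k})}\rangle_\sigma=\frac{d^N+\beta}{\beta}\langle \vec{A}_{\vec{x}}^{(\vec{k})}\rangle_\rho$, and solve for $\beta$. The only difference is that you make explicit two steps the paper leaves implicit, namely the segment argument (convexity of $\mathbbm{P}$ together with $\vec{0}\in\mathbbm{P}$) and the vanishing of the identity contribution for traceless observables.
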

\begin{proof}
Optimality of $\alpha^*$ implies that $\alpha \langle \vec{A}_{\vec{x}}^{(\vec{k})}\rangle_\rho \in \mathbbm{P}$ for all $\alpha \in [0, \alpha^*]$. Since the observables are fixed and $\langle \vec{A}_{\vec{x}}^{(\vec{k})}\rangle_\rho$ is generated from $\rho$, we just have to find the relation between $\alpha$ and $\beta$. For a given $\beta$, we have that $\alpha\langle \vec{A}_{\vec{x}}^{(\vec{k})}\rangle_{\rho} = \langle \vec{A}_{\vec{x}}^{(\vec{k})}\rangle_\sigma = \frac{d^N+\beta}{\beta} \langle \vec{A}_{\vec{x}}^{(\vec{k})}\rangle_\rho $. Hence, $\beta = d^N/(\alpha-1)$ yields the range $\beta \in [d^N/(\alpha^*-1),\infty)$.
\end{proof}

Since the characterization of $S'$ quickly becomes intractable in the multipartite case ($S'$ is a  polytope living in an $O(\exp(N))$-dimensional space determined by $O(\exp(N))$ vertices), here we may consider a relaxed version of the problem, by focusing on the projection of $S'$ onto the space of permutationally invariant one- and two-body correlators \cite{TuraScience, TuraAnnPhys}. In the latter case, $S'$ is embedded in a five-dimensional affine space and its vertices can be enumerated by the partitions of $N$, thus making the linear program in Eq. (\ref{eq:LP2}) efficient in practice.

In that case, if we consider $N$-partite permutationally invariant Bell inequalities with one- and two-body correlators for the two-input two-output scenario, one only needs to consider symmetric correlators ${\cal S}_x := \sum_i \langle A_x^{[i]}\rangle$ and ${\cal S}_{x,y} := \sum_{i\neq j} \langle A_{x,y}^{[i,j]}\rangle$, with $0 \leq x \leq y \leq 1$. The vertices $\vec{v}_i$ are such that ${\cal S}_{x} = p^{(1)}_x(c_i)$, where $p^{(1)}_x$ is a linear polynomial and $c_i \vdash N$ is a partition of $N$ in $d^m = 4$ elements, and ${\cal S}_{x,y} = p^{(2)}_{x,y}(c_i)$.
The analogous of the LP (\ref{eq:LP2}) for this case is
\begin{equation}
    \label{eq:LP3}
    \begin{array}{crl}
    \max_{\vec{\lambda}, \alpha}&\alpha&\\
    \mbox{s.t.}&\sum_i \lambda_i \vec{w_i}& = \alpha [{\cal S}_0,{\cal S}_1,{\cal S}_{00},{\cal S}_{01},{\cal S}_{11}]\\
    &\sum_i \lambda_i&=1\\
    &\lambda_i &\geq 0.\\
    \end{array}
\end{equation}
In this case it follows that:
\begin{cor}
\label{cor:2bodyPIBI} {\bf (Local Hidden Variable Model explainable through symmetric correlators)}
For the same conditions as in Theorem \ref{thm:LHVM}, let $\tilde\alpha$ be the optimal solution of LP (\ref{eq:LP3}). Then $\sigma$ admits a LHVM describable through symmetrized local response functions expressable as one- and two-body marginals, for $\beta \in [d^N/(\tilde\alpha-1),\infty)$.
\end{cor}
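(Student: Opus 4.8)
The plan is to show that Corollary~\ref{cor:2bodyPIBI} is essentially a verbatim specialization of Theorem~\ref{thm:LHVM}, where the only thing that changes is the representation of the LHVM polytope: instead of working with the full correlation vector $\langle \vec{A}_{\vec{x}}^{(\vec{k})}\rangle_\rho$ living in the exponentially large space, I restrict attention to the projection onto the five permutationally-invariant one- and two-body symmetric correlators $[{\cal S}_0,{\cal S}_1,{\cal S}_{00},{\cal S}_{01},{\cal S}_{11}]$. The key observation I would establish first is that this projection is a linear map on correlation space, so the image of the LHVM polytope $\mathbbm{P}$ is again a polytope $\tilde{\mathbbm{P}}$ whose vertices are the images of the $\vec{v}_i$; moreover, because all deterministic strategies that permute into one another collapse onto the same symmetric point, the vertices of $\tilde{\mathbbm{P}}$ are enumerated by partitions $c_i \vdash N$ rather than by all $d^{mn}$ strategies. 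This is precisely what makes LP~(\ref{eq:LP3}) tractable, and it is the content of the cited permutationally-invariant Bell inequality framework \cite{TuraScience, TuraAnnPhys}.

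Having fixed this, I would carry out the argument in three short steps. First, I would note that the reduction map $\Lambda_\alpha$ acts on the symmetric correlators exactly as it does on the full correlators: since the symmetric observables ${\cal S}_x, {\cal S}_{x,y}$ are themselves traceless (sums of traceless local correlators), their expectation on the normalized identity vanishes, so $\Lambda_\alpha$ simply rescales the symmetric correlation vector by $\alpha$, just as in the proof of Theorem~\ref{thm:LHVM}. Second, optimality of $\tilde\alpha$ in LP~(\ref{eq:LP3}) guarantees that $\alpha\,[{\cal S}_0,{\cal S}_1,{\cal S}_{00},{\cal S}_{01},{\cal S}_{11}] \in \tilde{\mathbbm{P}}$ for every $\alpha\in[0,\tilde\alpha]$, by convexity of $\tilde{\mathbbm{P}}$ and the fact that the origin (the fully depolarized point) lies inside it. Third, I would invoke the identical $\alpha$--$\beta$ substitution from Theorem~\ref{thm:LHVM}, namely $\beta = d^N/(\alpha-1)$, which maps the admissible range $\alpha\in(1,\tilde\alpha]$ onto $\beta\in[d^N/(\tilde\alpha-1),\infty)$, and conclude that the normalized inverse $\sigma = \Lambda_\beta^{-1}(\rho)/\mathrm{Tr}[\Lambda_\beta^{-1}(\rho)]$ reproduces symmetric correlators that lie in $\tilde{\mathbbm{P}}$, hence admit a symmetrized local response-function description.

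The only genuinely new ingredient relative to Theorem~\ref{thm:LHVM} is the claim that membership in the \emph{projected} polytope $\tilde{\mathbbm{P}}$ certifies the existence of an LHVM whose local response functions are \emph{symmetrized} and expressible through one- and two-body marginals. I would justify this by the standard symmetrization argument: given any deterministic decomposition of the projected point, one pulls it back to a permutationally-invariant mixture of deterministic strategies in the full space (averaging over the orbit of $\mathfrak{S}_N$ preserves membership in $\mathbbm{P}$ by convexity), and such an invariant mixture is exactly what is meant by a symmetrized local response function with one- and two-body marginals.

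I expect the main obstacle to be entirely bookkeeping rather than conceptual: one must verify carefully that the vertex parametrization ${\cal S}_x = p^{(1)}_x(c_i)$ and ${\cal S}_{x,y} = p^{(2)}_{x,y}(c_i)$ is consistent with the affine Fourier-transform relabeling used in Theorem~\ref{thm:LHVM}, and that projecting onto symmetric correlators commutes with that relabeling so that no spurious bias is reintroduced. Once this commutativity is checked, the corollary follows from Theorem~\ref{thm:LHVM} with no further work, since the LP structure, the role of $\alpha^*$ (here $\tilde\alpha$), and the $\beta$-range are formally identical.
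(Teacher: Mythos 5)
Your proposal is correct and follows the same route as the paper, which presents this corollary as an immediate consequence of Theorem \ref{thm:LHVM} with the full LHVM polytope replaced by its projection onto the five permutationally invariant one- and two-body correlators; the paper gives no further argument beyond this specialization. Your added details (linearity of the projection, tracelessness of ${\cal S}_x$ and ${\cal S}_{x,y}$ so that $\Lambda_\alpha$ acts as a pure rescaling, convexity plus the origin lying in the projected polytope, and the pull-back of a decomposition over partitions $c_i\vdash N$ to a permutationally invariant mixture of deterministic strategies) correctly fill in what the paper leaves implicit.
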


For very complex Bell scenarios, even the polynomial scaling of LP (\ref{eq:LP3}) may be too computationally intensive in practice. In this case, one can further relax the membership problem in the set $S'$ to that of a superset $S''$ via a symmetrized moment method based on outer approximations of convex hulls of semialgebraic sets, which admits a much more efficient description as a semidefinite program that does not depend on $N$ \cite{FadelTuraPRL}:
\begin{equation}
    \label{eq:SDP1}
    \begin{array}{crl}
    \max_{\vec{y}, \alpha}&\alpha&\\
    \mbox{s.t.}&y_i & = \alpha {(\vec{\cal S})}_i\\
    &y_0&=1\\
    &\sum_i y_i \Gamma_i &\geq 0,\\
    \end{array}
\end{equation}
where $\Gamma_i$ are constant matrices defined by $p_x^{(1)}$ and $p_{xy}^{(2)}$ \cite{FadelTuraPRL}.

\begin{cor}
\label{cor:ThetaBody} {\bf (Relaxation of Corollary \ref{cor:2bodyPIBI})}
Let $\vec{\cal S}$ be the vector of symmetrized correlators generated by $\rho$ and the same choice of observables at every site. Let us denote $\tilde\alpha$ the optimal solution to LP (\ref{eq:LP3}) and $\hat{\alpha}$ that of SDP (\ref{eq:SDP1}). Let $\Delta$ be a uniform (independent of $\vec{\cal S}$) upper bound to $\hat\alpha - \tilde \alpha$. If $\hat\alpha - \Delta > 1$ then $\sigma$ fulfills the conditions of Corollary \ref{cor:2bodyPIBI} for $\beta \in [d^N/(\hat \alpha - \Delta - 1), \infty)$.
\end{cor}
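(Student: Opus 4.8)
The plan is to turn the inequality defining $\Delta$ into a certified lower bound on the LP optimum $\tilde\alpha$, and then invoke Corollary~\ref{cor:2bodyPIBI} verbatim, using monotonicity to check that the stated $\beta$-range is a conservative (smaller) subset of the range that corollary guarantees. The essential content is already packaged into the hypothesis that a uniform slack $\Delta$ exists; the proof is then a short substitution-and-ordering argument.

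First I would record the ordering of the two programs. Because SDP~(\ref{eq:SDP1}) is an outer relaxation of LP~(\ref{eq:LP3}) --- its feasible set is the superset $S''\supseteq S'$ --- maximizing $\alpha$ over the larger set can only enlarge the optimum, so $\hat\alpha\ge\tilde\alpha$. Hence $\hat\alpha-\tilde\alpha\ge 0$ and a uniform upper bound $\Delta\ge\hat\alpha-\tilde\alpha$ (independent of $\vec{\cal S}$, and thus of $N$) is well posed. Rearranging this single inequality gives the key estimate $\tilde\alpha\ge\hat\alpha-\Delta$, i.e.\ the cheap, $N$-independent quantity $\hat\alpha-\Delta$ is a guaranteed lower bound for the possibly expensive LP value $\tilde\alpha$.

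Next I would feed the hypothesis into Corollary~\ref{cor:2bodyPIBI}. By assumption $\hat\alpha-\Delta>1$, and combined with $\tilde\alpha\ge\hat\alpha-\Delta$ this forces $\tilde\alpha>1$; the premise of Corollary~\ref{cor:2bodyPIBI} is therefore met, so $\sigma$ admits a LHVM describable through symmetrized one- and two-body marginals for every $\beta\in[d^N/(\tilde\alpha-1),\infty)$. It then remains only to compare the two $\beta$-ranges. Since the map $x\mapsto d^N/(x-1)$ is strictly decreasing on $x>1$, and $\tilde\alpha-1\ge\hat\alpha-\Delta-1>0$, we get $d^N/(\tilde\alpha-1)\le d^N/(\hat\alpha-\Delta-1)$, whence $[d^N/(\hat\alpha-\Delta-1),\infty)\subseteq[d^N/(\tilde\alpha-1),\infty)$. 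Thus any $\beta$ in the claimed range already lies in the range certified by Corollary~\ref{cor:2bodyPIBI}, which establishes the statement.

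I do not expect a genuine obstacle here: the non-trivial ingredient (the decoupling of the relaxation gap from $\vec{\cal S}$ and hence from the system size) is supplied by the assumed existence of the uniform slack $\Delta$, and the remainder is a monotonicity argument. The only point demanding care is the bookkeeping of inequality directions --- one must verify that replacing $\tilde\alpha$ by its lower bound $\hat\alpha-\Delta$ \emph{shrinks} the admissible set of $\beta$ rather than inflating it, so that the certificate is sound (conservative) rather than optimistic.
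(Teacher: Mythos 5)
Your argument is correct and is precisely the intended one: the paper states Corollary~\ref{cor:ThetaBody} without proof, treating it as immediate from $\tilde\alpha \ge \hat\alpha - \Delta$ (since the SDP relaxes the LP and $\Delta$ bounds the gap) together with the monotonicity of $x \mapsto d^N/(x-1)$, which is exactly the substitution-and-ordering argument you give. The only quibble is your parenthetical ``and thus of $N$'': the corollary only requires $\Delta$ to be uniform in $\vec{\cal S}$, not in $N$, but this does not affect the proof.
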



\section{\label{Conclusions and outlook} Conclusions and outlook}

We have presented in this paper several families of linear maps, that have a property that when applied to $N$-qubit states, result in states separable with respect a bipartition. When convertible, such maps allow deriving sufficient criteria for entanglement depth in $N$ qubit systems. In particular, in the present work we derive sufficient criteria for arbitrary entanglement depth tailored to states close to the normalized identity matrix and sufficient entanglement criteria for different classes of symmetric and symmetric-diagonal states. Finally, we propose how the application of linear maps can be used to detect states leading to expectation values compatible with separable states and local-hidden-variable theories. Such criteria have rather obvious possible applications and implications in quantum information science and quantum many body physics: they can namely be applied to determine whether and how much entanglement is needed for various specific protocols for quantum computation or other quantum tasks, similarly as it was discussed in Ref.~\cite {Braunstein+5.PRL.1999}.

\begin{acknowledgments}
M.L. and G.M.R. acknowledges support from: ERC AdG NOQIA; Ministerio de Ciencia y Innovation Agencia Estatal de Investigaciones (PGC2018-097027-B-I00/10.13039/501100011033, CEX2019-000910-S/10.13039/501100011033, Plan National FIDEUA PID2019-106901GB-I00, FPI, QUANTERA MAQS PCI2019-111828-2, QUANTERA DYNAMITE PCI2022-132919, Proyectos de I+D+I "Retos Colaboraci\'on" QUSPIN RTC2019-007196-7); Fundaci\'o Cellex; Fundaci\'o Mir-Puig; Generalitat de Catalunya (European Social Fund FEDER and CERCA program (AGAUR Grant No. 2017 SGR 134, QuantumCAT \ U16-011424, co-funded by ERDF Operational Program of Catalonia 2014-2020); Barcelona Supercomputing Center MareNostrum (FI-2022-1-0042); EU Horizon 2020 FET-OPEN OPTOlogic (Grant No 899794); 
EU Horizon Europe Program (Grant Agreement 101080086 - NeQST), 
National Science Centre, Poland (Symfonia Grant No. 2016/20/W/ST4/00314); European Union's Horizon 2020 research and innovation program under the 
Marie-Sk\l odowska-Curie grant agreement No 101029393 (STREDCH) and No 847648 ("La Caixa" Junior Leaders fellowships ID100010434: LCF/BQ/PI19/11690013, LCF/BQ/PI20/11760031, LCF/BQ/PR20/11770012, LCF/BQ/PR21/11840013). Views and opinions expressed in this work are, however, those of the author(s) only and do not necessarily reflect those of the European Union, European Climate, Infrastructure and Environment Executive Agency (CINEA), nor any other granting authority. Neither the European Union nor any granting authority can be held responsible for them. 
M.L., G.M.R. and A.S. acknowledge support from MCIN Recovery, Transformation and Resilience Plan with funding from European Union NextGenerationEU (PRTR C17.I1).
J.T. has received support from the European Union's Horizon Europe program through the ERC StG FINE-TEA-SQUAD (Grant No. 101040729).
A.S. acknowledges financial support from the Spanish Agencia Estatal de Investigaci\'on, Grant No. PID2019-107609GB-I00, the European Commission QuantERA grant ExTRaQT (Spanish MICINN project PCI2022-132965), and Catalan Government for the project
QuantumCAT 001-P-001644, co-financed by the European Regional Development Fund (FEDER).
\end{acknowledgments}

\end{document}